\newtheorem{theorem}{Theorem}[section]
\newtheorem{lemma}[theorem]{Lemma}
\newtheorem{proposition}[theorem]{Proposition}
\newtheorem{definition}[theorem]{Definition}
\theoremstyle{definition}
\newtheorem{example}[theorem]{Example}
\newtheorem{remark}[theorem]{Remark}
\newtheorem{problem}[theorem]{Problem}
\newtheorem{principle}{Guiding Principle}
\numberwithin{equation}{section}
\def\&{\wedge}
\begin{document}

\title{Mend the gap: A smart repair algorithm for noisy polygonal tilings}

\author{Jeanne N. Clelland}
\address{Department of Mathematics, 395 UCB, University of
Colorado, Boulder, CO 80309-0395}
\email{Jeanne.Clelland@colorado.edu}

%\subjclass[2020]{68U05, 65D18, 52C20}
%\keywords{Computational geometry, polygonal plane tilings, computational redistricting}

%\thanks{This material is based in part upon work supported by the National Science Foundation under Grant No. DMS-1928930 and by the Alfred P. Sloan Foundation under grant G-2021-16778, while the author was in residence at the Simons Laufer Mathematical Sciences Institute (formerly MSRI) in Berkeley, California, during the Fall 2023 semester.
%}
%\thanks{The author was supported in part by a Collaboration Grant for Mathematicians from the Simons Foundation.}

\begin{abstract}
Let $T^* = \{P^*_1, \ldots, P^*_N\}$ be a polygonal tiling of a simply connected region $R^*$ in the plane, and let $T = \{P_1, \ldots, P_N\}$ be a noisy version of $T^*$ obtained by making small perturbations to the coordinates of the vertices of the polygons in $T^*$.  In general, $T$ will only be an approximate tiling of a region $R$ that closely approximates $R^*$, due to the presence of gaps and overlaps between the perturbed polygons in $T$. The areas of these gaps and overlaps are typically small relative to the areas of the polygons themselves.

Suppose that we are given the approximate tiling $T$ and we wish to recover the tiling $T^*$. To address this problem, we introduce a new algorithm, called {\tt smart\_repair},  to modify the polygons in $T$ to produce a tiling $\widetilde{T} = \{\widetilde{P}_1, \ldots, \widetilde{P}_N\}$ of $R$ that closely approximates $T^*$, with special attention given to reproducing the {\em adjacency relations} between the polygons in $T^*$ as closely as possible. 

The motivation for this algorithm comes from computational redistricting, where algorithms are used to build districts from smaller geographic units (e.g., voting precincts).  Because districts in most U.S. states are required to be contiguous, these algorithms are fundamentally based on adjacency relations between units.  Unfortunately, the best available map data for unit boundaries is often noisy, containing gaps and overlaps between units that can lead to substantial inaccuracies in the adjacency relations.  Simple repair algorithms commonly included in geographical software packages can actually exacerbate these inaccuracies, with the result that algorithmically drawn districts based on the ``repaired" units may be discontiguous, and hence not legally compliant.  The algorithm presented here is specifically designed to avoid such problems to the greatest extent possible.

A Python implementation of the {\tt smart\_repair} algorithm is publicly available as part of the MGGG Redistricting Lab's {\tt Maup} package, available at \url{https://github.com/mggg/maup}.

\end{abstract}

\maketitle

\section{Introduction}\label{intro-sec}

\subsection{Motivation: A geometry problem in computational redistricting}\label{repair-gone-wrong-subsec}
In recent years, computational algorithms have played a rapidly growing role in the analysis of political districting plans; in particular, a variety of algorithms have been developed to generate large collections---a.k.a. ``ensembles"---of legally valid plans, in order to create baseline statistical profiles to which any particular plan may be compared with regard to measures of interest.  (See, e.g., \cite{ACHHM21}, \cite{DDS19}, \cite{DW22}.)  Plans in an ensemble are comprised of districts that are built from smaller units; in the case of districting plans for U.S. states, these units are usually either U.S. census blocks or voting precincts.

Because most jurisdictions require districts to be contiguous, one of the first and most fundamental tasks that an algorithm must perform is to extract {\em adjacency relations} between geographical units; specifically, the algorithm must be able to accurately discern whether any pair of units share a boundary of positive length.
Unfortunately, in practice the available map data for unit boundaries contains frequent errors in the forms of gaps and overlaps between units that interfere with this task. Maps of voting precincts are especially prone to these issues, as they are often patched together from county-level maps of widely varying quality to form a statewide map.  While these errors are usually small with regard to area, they often create substantial inaccuracies in the adjacency relations between units. 

Some of the most commonly used geographical software packages (e.g., ArcGIS, QGIS) provide tools for assessing and repairing gaps and overlaps between geographic units.  
The repair algorithms in these tools are fairly simple; one common variation, which we will refer to as the {\tt quick\_repair} algorithm, is that any polygon created by either an overlap between two units or a gap between units is assigned to the unit with which its boundary shares the largest perimeter.  But it turns out that this repair strategy can dramatically worsen inaccuracies in the adjacency relations between units. 

The author first became aware of this issue in 2021 while working with colleagues\footnote{Daryl DeFord of Washington State University, and Beth Malmskog and Flavia Sancier-Barbosa, both of Colorado College} as consultants for the Colorado Independent Legislative Redistricting Commission.  A map of Colorado's 2020 voting precincts was provided by the Commission staff, and after using the {\tt quick\_repair} algorithm described above to repair gaps and overlaps between precincts, we proceeded to draw random district plans based on this precinct map.\footnote{Plans were drawn using the ReCom algorithm as implemented in the MGGG Redistricting Lab's {\tt GerryChain} Python package, available at \url{https://github.com/mggg/gerrychain}.}  A spot check early in the process revealed that the algorithm was drawing plans with discontiguous districts, which in theory should not have been possible.  The specific district plan where we first noticed this phenomenon is shown in Figure \ref{discontiguous-plans-fig}.

\begin{figure}[h!]
\begin{center}
\subfloat[]{\includegraphics[height=2in]{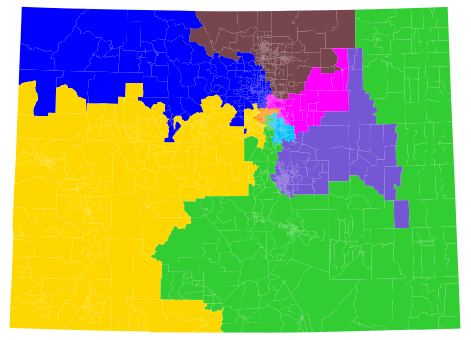} }
\ \ \ 
\subfloat[]{ \includegraphics[height=2in]{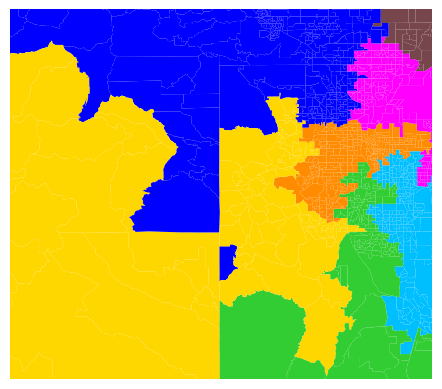} }
\end{center}
\caption{(a) Randomly drawn Colorado district plan for U.S. House based on 2020 precinct map; (b) Close-up view of a discontiguous district}
\label{discontiguous-plans-fig}
\end{figure}

Upon further investigation, we realized that the discontiguous district in the plan in Figure \ref{discontiguous-plans-fig} was made possible by the way that the {\tt quick\_repair} algorithm addressed gaps in the original precinct map. The original map contains a long, thin vertical gap along a county boundary that is adjacent to 15 precincts, shown in Figure 
\ref{gap_1_maps-fig}(a).  The {\tt quick\_repair} algorithm assigned the entire gap to the northeastern-most precinct, as shown in Figure \ref{gap_1_maps-fig}(b).  As a result, this precinct was now considered adjacent to {\em all} of the other 14 precincts adjacent to the gap.  Comparing with the discontiguous district in Figure \ref{discontiguous-plans-fig}(b), we see that the small disconnected component is precisely the southeastern-most precinct adjacent to this gap---and according to the adjacency relations obtained from the repaired precinct map, this district would be considered contiguous.
\begin{figure}[h!]
\begin{center}
\subfloat[]{\includegraphics[height=3in]{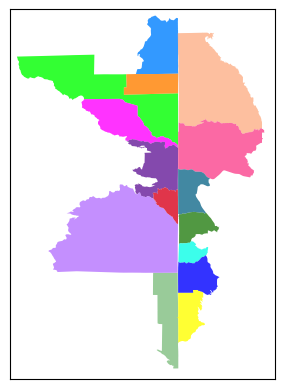} }
\hspace{0.5in}
\subfloat[]{\includegraphics[height=3in]{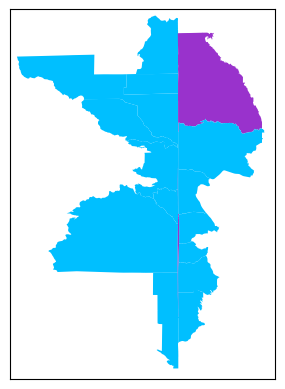} }
\end{center}
\caption{(a) 15 precincts adjacent to a single gap between counties; (b) {\tt quick\_repair} algorithm assigns the entire gap to a single precinct}
\label{gap_1_maps-fig}
\end{figure}

This particular problem was caused by extraneous adjacency relations introduced 
when the {\tt quick\_repair} algorithm
assigned the entire gap to a single precinct.  On the other hand, if we had used the original precinct map without any repairs, the district-building algorithm would have omitted all adjacencies between geographically adjacent precincts on opposite sides of the gap. Neither choice adequately represents the true adjacency relations between the precincts in this region.

\subsection{Problem statement}

In this paper we present a new algorithm, called {\tt smart\_repair},\footnote{A Python implementation of {\tt smart\_repair}, based primarily on Python's GeoPandas and Shapely libraries, is available in the MGGG Redistricting Lab's {\tt Maup} package, available at \url{https://github.com/mggg/maup}.  (This package also contains an implementation of the {\tt quick\_repair} algorithm described in Section \ref{repair-gone-wrong-subsec}.) \label{maup-footnote}} to address the following general problem:

\begin{problem}\label{big-problem}
Let $T^* = \{P^*_1, \ldots, P^*_N\}$ be a polygonal tiling of a simply connected region $R^*$ in the plane, and let $T = \{P_1, \ldots, P_N\}$ be a noisy version of $T^*$ obtained by making small perturbations to the coordinates of the vertices of the polygons in $T^*$.  In general, $T$ will only be an approximate tiling of a region $R$ that closely approximates $R^*$, due to the presence of small gaps and overlaps between the perturbed polygons in $T$. 

Given the approximate tiling $T$, construct a true tiling $\widetilde{T} = \{\widetilde{P}_1, \ldots, \widetilde{P}_N\}$ of $R$ that closely approximates $T^*$, in the sense that:
\begin{enumerate}
\item For each $k = 1, \ldots, N$, the area of $\widetilde{P}_k \cap P^*_k$ is as large as possible.
\item For each $j,k = 1, \ldots, N$, the intersection of the boundaries $\partial \widetilde{P}_j \cap \partial \widetilde{P}_k$ is a path of positive length if and only if the intersection of the boundaries $\partial P^*_j \cap \partial P^*_k$ is a path of positive length; i.e., $\widetilde{T}$ accurately reproduces the {\em adjacency relations} between the polygons in $T^*$.
\end{enumerate}
\end{problem}

Since the tiling $T^*$ is unknown, the objectives above cannot generally be achieved with complete certainty; the goal is to infer as much information as possible about the polygons in $T^*$ and their adjacency relations from the polygons in $T$.  

\subsection{Outline of the  {\tt smart\_repair} algorithm}

In order to address Problem \ref{big-problem}, the {\tt smart\_repair} algorithm  constructs the tiling $\widetilde{T}$ via the following steps, which will be described more fully in Section \ref{main-alg-sec}:
\begin{enumerate}
\item {\bf Construct refined tiling:} The union of the polygon boundaries $\{\partial P_1, \ldots, \partial P_N\}$ forms a simplicial 1-complex that partitions $R$ into a set of polygons $\mathcal{P} = \{Q_1, \ldots, Q_M\}$ that tile $R$.  The tiling $\mathcal{P}$ may be thought of as a {\em refinement} of $T$, in the sense that every polygon in $T$ is a union of polygons in $\mathcal{P}$. 

For each $Q \in \mathcal{P}$, there is a maximal subset $S_Q \subset \{1,\ldots, N\}$ such that
\[ Q \subset \bigcap_{k \in S_Q} P_k. \]
The cardinality $|S_Q|$ is called the {\em overlap order} of $Q$.  Polygons in $\mathcal{P}$ with overlap order 1 are each contained in exactly one polygon in $T$, while polygons with overlap order 0 represent gaps between polygons in $T$.

\item {\bf Assign overlaps:} The construction of the polygons $\widetilde{P}_1, \ldots, \widetilde{P}_N$ begins by assigning each polygon $Q \in \mathcal{P}$ of overlap order 1 to the unique polygon $\widetilde{P}_k$ for which $Q \subset P_k$. Higher-order overlaps are then assigned to polygons in $\widetilde{T}$ in increasing order, with first priority given to assignments needed to preserve polygon connectivity, and otherwise assigned to the polygon with which the overlap shares the largest perimeter.

\item {\bf Close gaps:} This is the most innovative and the most complicated step in the {\tt smart\_repair} algorithm.  Instead of each gap being assigned to a single polygon in $\widetilde{T}$, gaps are subdivided into pieces that are assigned to different polygons.  In order to reproduce the adjacency relations in the unknown tiling $T^*$ as closely as possible, this step is designed with two guiding principles in mind:
\begin{itemize}
\item Optimize the convexity of the repaired polygons.
\item For gaps with 4 or more adjacent polygons, after optimizing for convexity consider all non-adjacent pairs of polygons that are {\em strongly mutually visible} (cf. Definition \ref{smv-def}) across the gap. Among all such pairs, the polygons with the shortest distance between them should become adjacent after the gap is repaired. 
\end{itemize}

\end{enumerate}

There are also two optional features in the {\tt smart\_repair} algorithm:
\begin{enumerate}
\item {\bf Nesting into larger units:} In some applications, the polygons in $T$ are intended to nest cleanly into some larger units; e.g., in many states, voting precincts should nest cleanly into counties.  The user may optionally specify a clean tiling $T'$ of a region $R'$ that closely approximates $R$---e.g., a map of county boundaries within a state---and then performs the repair process so that the repaired polygons in $\widetilde{T}$ nest cleanly into the polygons in $T'$.
\item {\bf Small rook-to-queen adjacency conversion:} Whether as a result of inaccuracies in the original polygons or as an artifact of the repair algorithm, it may happen that some of the repaired polygons in $\widetilde{T}$ share boundaries with very short perimeter but should actually be considered {\em queen adjacent} (i.e., intersecting at only a single point) rather than {\em rook adjacent} (i.e., intersecting along a boundary of positive length).  To address this issue, there is an optional final
step in which all rook adjacencies of length below a user-specified parameter are converted to queen adjacencies. 
\end{enumerate}

%\begin{remark}
While the development of the {\tt smart\_repair} algorithm 
was motivated by the author's work in redistricting, its potential applications to versions of Problem \ref{big-problem} extend far beyond this context.  
Geographic Information Systems (GIS) are used for an enormous variety of applications, and geospatial data is often rife with gaps, overlaps, and other problems due to rounding errors and other inaccuracies in polygon boundaries.  
As just one example, in \cite{HNP22} the authors describe similar issues that arose in a project involving topological data analysis of COVID-19 data by ZIP code.  As such, we hope that this algorithm and its Python implementation as part of the {\tt Maup} package (see Footnote \ref{maup-footnote}) will be of interest to the broader GIS community.

%This issue seems to have received relatively little attention from the computational geometry community to date; we hope that this paper will spark further interest in geometric approaches to improving the quality of geospatial data.

%\end{remark}

The remainder of the paper is organized as follows.
In Section \ref{theory-sec}, we review some facts about polygon geometry and prove some new results that will inform the {\tt smart\_repair} algorithm.
In Section \ref{main-alg-sec}, we present the primary {\tt smart\_repair} algorithm and describe how it compares with the {\tt quick\_repair} algorithm.  
The main result of this section is Theorem \ref{alg-adjacency-thm}, which shows that the adjacency relations among the repaired polygons $\widetilde{P}_1, \ldots, \widetilde{P}_N$ conform with the guiding principles described above.
In Section \ref{bells-and-whistles-sec}, we describe the optional features of the {\tt smart\_repair} algorithm.  
In Section \ref{polynomial-time-sec}, we perform a rough estimate of the runtime complexity for the {\tt smart\_repair} algorithm and show that it runs in polynomial time.
Finally, in Section \ref{back-to-beginning-sec} we conclude by illustrating how the {\tt smart\_repair} algorithm performs on the gap from Figure \ref{gap_1_maps-fig} that inspired its development and on a small selection of state-level voting precinct maps.
Detailed examples are included throughout.

\section{Polygon geometry}\label{theory-sec}

We begin with some preliminary material on polygon geometry.
Most of the background material in this section may be found in \cite{GHLST87} and \cite{LP84}.  

\begin{definition}\label{simple-poly-def}
\hspace{0.5in}
\newline
\begin{itemize}
\vspace{-0.3in}
\item A {\em polygonal path} $\overline{v_1 v_2 \cdots v_k}$ is a sequence of points $v_1, v_2, \ldots, v_k$ in the plane, called the {\em vertices} of the path, for which every pair of adjacent points $v_i$, $v_{i+1}$ ($1 \leq i \leq k-1$) represents the line segment joining $v_i$ to $v_{i+1}$, and no two non-consecutive segments intersect.
\item A {\em simple polygon} $P$ with $n$ vertices is a polygonal path $\overline{v_1 v_2 \cdots v_{n+1}}$ with $v_{n+1} = v_1$. 
\item The {\em interior angle} of a simple polygon $P$ at a vertex $v_k$ is the inward-facing angle $\theta$ between the line segments $\overline{v_{k-1} v_k}$ and $\overline{v_k v_{k+1}}$, taking values in the range $0 < \theta < 2\pi$.  
\item The {\em exterior angle} of a simple polygon $P$ at a vertex $v_k$ is the angle $\varphi = \pi - \theta$, where $\theta$ is the interior angle of $P$ at $v_k$.  The exterior angle takes values in the range $-\pi < \varphi < \pi$.
\item A vertex $v_k$ of a simple polygon $P$ is called {\em convex} if the interior angle $\theta$ of $P$ at $v_k$ satisfies $\theta < \pi$ (or equivalently, if the exterior angle $\varphi$ of $P$ at $v_k$ satisfies $\varphi > 0$) and {\em reflex} if $\theta \geq \pi$ (or equivalently, if $\varphi \leq 0$).\footnote{It is convenient for our purposes to allow the definition of ``reflex" to include vertices whose interior angle is equal to $\pi$ and not necessarily strictly greater than $\pi$.}

\end{itemize}

\end{definition}

Many of our constructions will require finding the shortest path within a simple polygon $P$ between two vertices $v_i$, $v_j$ of $P$.  This is a special case of the more general problem of finding the shortest path within $P$ between any two points either in the interior or on the boundary of $P$.  
We will use the algorithm of Lee and Preparata described in \cite{LP84} for all our shortest path constructions, and we will make frequent use of the following lemma from \cite{LP84}:

\begin{lemma}[\cite{LP84}]\label{shortest-path-lemma}
Let $v_i$, $v_j$ be two vertices of a simple polygon $P$, and let $\gamma$ be the shortest path within $P$ between $v_i$ and $v_j$.  Then all vertices of $\gamma$ are also vertices of $P$.
\end{lemma}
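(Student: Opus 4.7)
The plan is to argue by contradiction: suppose $w$ is a vertex of $\gamma$ that is not a vertex of $P$. Without loss of generality I may assume that $\gamma$ makes a genuine bend at $w$ (the incoming and outgoing segments are not collinear), since any listed vertex on a straight stretch may simply be removed from the vertex list of $\gamma$ without changing the path. Because $w$ is not a vertex of $P$, the point $w$ lies either in the interior of $P$ or in the relative interior of a unique edge $e$ of $P$.

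First I would dispose of the interior case. Because $w \in \mathrm{int}(P)$ and $\mathrm{int}(P)$ is open, there is an open disk $D$ centered at $w$ with $D \subset \mathrm{int}(P)$. Pick points $a$ and $b$ on the two segments of $\gamma$ adjacent to $w$, each within distance $\varepsilon$ of $w$, with $\varepsilon$ small enough that $\overline{ab} \subset D$. Replacing the sub-path $a \to w \to b$ of $\gamma$ by the chord $\overline{ab}$ produces a path that is still contained in $P$ and is strictly shorter than $\gamma$, by the strict triangle inequality (since $a$, $w$, $b$ are non-collinear). This contradicts the minimality of $\gamma$.

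The boundary case is where I expect the main obstacle, since the naive shortcut risks leaving $P$. Because $w$ lies in the relative interior of the edge $e$ and not at any vertex of $P$, there is a radius $r > 0$ such that the open disk $D_r$ of radius $r$ around $w$ meets $\partial P$ only in the single straight segment $D_r \cap e$; consequently $D_r \cap P = D_r \cap H$, where $H$ is the closed half-plane bounded by the line through $e$ lying on the interior side of $P$. Both segments of $\gamma$ meeting at $w$ are contained in $P$, hence leave $w$ into $H$. They cannot both run along $e$ in opposite directions (this would force $\gamma$ to retrace itself, impossible for a shortest path), and if both ran along $e$ in the same direction, $w$ would not be a genuine bend. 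Thus at least one segment enters the open interior of $H$, and in every allowed configuration, points $a$ and $b$ chosen close enough to $w$ on the two segments give a chord $\overline{ab}$ lying in the convex set $H \cap D_r = P \cap D_r$. Once again the triangle inequality produces a strictly shorter path inside $P$, contradicting minimality of $\gamma$.

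The technical crux, as indicated, is establishing the local coincidence $D_r \cap P = D_r \cap H$; this rests on the fact that the boundary of a simple polygon is a finite union of closed line segments meeting only at vertices, so avoiding the finitely many vertices of $P$ guarantees that a sufficiently small disk meets $\partial P$ in a single straight segment. With this in hand, both cases close the argument, and we conclude that every vertex of $\gamma$ must be a vertex of $P$.
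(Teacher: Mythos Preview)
Your argument is correct. Note, however, that the paper does not supply its own proof of this lemma: it is quoted from \cite{LP84} as background and used without proof throughout Section~\ref{theory-sec}. So there is nothing in the paper to compare against.

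On its merits, your local-shortcut argument is the standard elementary proof. The only point requiring care is the boundary case, and you handle it correctly: since $w$ lies in the relative interior of an edge and away from all vertices of $P$, a small enough disk about $w$ meets $P$ in a convex half-disk $H\cap D_r$, so the chord $\overline{ab}$ between nearby points on the two incident segments of $\gamma$ automatically stays in $P$, and the strict triangle inequality yields the contradiction. One small clarification you might add explicitly is that $w$ must be an \emph{interior} vertex of $\gamma$ (the endpoints $v_i,v_j$ are vertices of $P$ by hypothesis), which is what licenses speaking of incoming and outgoing segments at $w$.
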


The next proposition describes a necessary and sufficient condition for a subset of a polygon boundary to be the shortest path within the polygon between its endpoints.

\begin{proposition}\label{convexification-prop}
Let $P$ be a simple polygon.  Let $v_i, v_j$ be distinct vertices of $P$, and let $B \subset \partial P$ be one of the two polygonal paths in $\partial P$ from $v_i$ to $v_j$.  Then $B$ is the shortest path from $v_i$ to $v_j$ in $P$ if and only if every interior vertex of $B$ is reflex.
\end{proposition}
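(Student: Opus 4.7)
The plan is to prove the two implications separately.

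For ($\Rightarrow$), I would argue by contrapositive: if some interior vertex $v_k$ of $B$ is convex in $P$, then the interior of $P$ near $v_k$ is a convex wedge of angle less than $\pi$, so a sufficiently short straight-line chord cutting the corner at $v_k$ (from a point near $v_k$ on the preceding edge of $B$ to a point near $v_k$ on the following edge of $B$) lies inside $P$ and, by the triangle inequality, strictly shortens $B$. This contradicts $B$ being the shortest path.

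For ($\Leftarrow$), I would let $\gamma$ denote the shortest path from $v_i$ to $v_j$ in $P$, whose vertices are vertices of $P$ by Lemma \ref{shortest-path-lemma}. Assuming for contradiction that $\gamma \neq B$, the key step is to isolate a ``bigon'': maximal polygonal sub-arcs $B' \subseteq B$ and $\gamma' \subseteq \gamma$ joining common points $a, b$ and meeting only at these endpoints. Because $B' \subseteq \partial P$ constrains $\gamma'$ to lie on the interior-of-$P$ side of $B'$, the region $D$ enclosed by $B' \cup \gamma'$ is a simple polygon contained in $P$ whose interior is on that same side of $B'$.

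The heart of the proof is then an interior-angle count for $D$. At each of the $n' \geq 0$ interior vertices of $B'$, the edges of $\partial D$ coincide with edges of $\partial P$, so the interior angle of $D$ equals the interior angle of $P$, hence is at least $\pi$ by hypothesis. At each of the $m' \geq 0$ interior vertices $w$ of $\gamma'$, the interior angle of $D$ is also at least $\pi$: otherwise $w$ would be a convex vertex of $D$, and the same local shortcut argument as in ($\Rightarrow$), now applied inside $D \subseteq P$, would shorten $\gamma'$ and hence $\gamma$, contradicting its optimality. Together with the strictly positive interior angles at $a$ and $b$, this forces the total interior angle of $D$ to strictly exceed $(n' + m')\pi$, contradicting the standard formula $(n' + m')\pi$ for the total interior angle of a simple $(n' + m' + 2)$-gon. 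Hence $\gamma = B$.

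The main obstacle I anticipate is the bookkeeping around the bigon: verifying that $B' \cup \gamma'$ truly cuts out a simple polygon $D$ contained in $P$, and treating degenerate cases where $B$ and $\gamma$ coincide along shared edges rather than only at isolated points. Both issues ultimately follow from maximality of the bigon together with the fact that $B' \subseteq \partial P$ forces $\gamma'$ onto a fixed side of $B'$, but they will require some care in the write-up.
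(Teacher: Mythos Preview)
Your forward direction matches the paper's proof essentially verbatim. For the converse, the paper takes a very different route: it simply asserts that, when every interior vertex of $B$ is reflex, the Lee--Preparata funnel algorithm of \cite{LP84} will output $B$ as the shortest path, and leaves the verification to the reader. Your bigon-plus-angle-sum argument is a genuine alternative. It is correct as outlined: once you isolate sub-arcs $B'\subseteq B$ and $\gamma'\subseteq\gamma$ meeting only at their endpoints, the fact that $B'\subseteq\partial P$ forces the enclosed region $D$ to sit in $P$ with its interior on the $P$-interior side of $B'$, so the interior angles of $D$ along $B'$ coincide with those of $P$ (hence $\ge\pi$), while convexity of $D$ at any interior vertex of $\gamma'$ would permit a shortcut inside $D\subseteq P$; the Gauss--Bonnet/angle-sum contradiction then goes through. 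The trade-off is that the paper's one-line appeal to \cite{LP84} is shorter but presupposes familiarity with how the funnel algorithm treats a chain of reflex vertices, whereas your argument is self-contained and uses only elementary polygon geometry, at the cost of the bigon bookkeeping you already flagged (existence of the bigon, and handling stretches where $B$ and $\gamma$ coincide along edges rather than at isolated points). Both are valid; yours would read as a more complete proof to someone who has not internalized the Lee--Preparata construction.
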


\begin{proof}
First, suppose that $B$ is the shortest path from $v_i$ to $v_j$ in $P$.
Suppose for the sake of contradiction that $B$ has an interior vertex $v_k$ that is convex.  Then there exist $\epsilon > 0$ and points $p, q \in B$ on either side of $v_k$ at distance $\epsilon$ from $v_k$ such that the line segment $\overline{pq}$ is contained in $P$ and intersects $B$ only at its endpoints.  By the triangle inequality, the line segment $\overline{pq}$ is a shorter path from $p$ to $q$ in $P$ than the pair of line segments $\overline{pv_k}$, $\overline{v_k q}$.  Since the shortest path assumption on $B$ implies that the shortest path in $P$ between any pair of points in $B$ is contained in $B$, this is a contradiction; therefore every interior vertex of $B$ is reflex.

Conversely, suppose that every interior vertex of $B$ is reflex.  It is straightforward to check that the algorithm in \cite{LP84} for computing shortest paths in polygons will construct $B$ as the shortest path between its endpoints.
\end{proof}

The condition in Proposition \ref{convexification-prop} is useful enough to give it a name; following \cite{GHLST87}, we make the following definition:

\begin{definition}\label{outward-convex-def}
Let $P$ be a simple polygon.  A polygonal path $B \subset \partial P$ is called {\em outward convex} if every interior vertex of $B$ is reflex.
\end{definition}

We will also be interested in the notion of {\em visibility} between subsets of a polygon boundary.  Specifically, the following notion will play an important role in the {\tt smart\_repair} algorithm:

\begin{definition}\label{smv-def}
Let $P$ be a simple polygon, and let $B', B'' \subset \partial P$ be polygonal paths within $\partial P$.  We will say that the pair $(B', B'')$ is {\em strongly mutually visible} in $P$ if there exist points $p'$ in the interior of $B'$ and $p''$ in the interior of $B''$ such that the line segment $\overline{p' p''}$ is contained in $P$ and intersects $\partial P$ only at its endpoints.

Similarly, we will say that a point $p' \in \partial P$ is {\em strongly visible} in $P$ from a point $p'' \in \partial P$ if the line segment $\overline{p' p''}$ is contained in $P$ and intersects $\partial P$ only at its endpoints.
\end{definition}

Note that the containment condition on the line segment $\overline{p' p''}$ in Definition \ref{smv-def} is an open condition on the points $p' \in B'$, $p'' \in B''$, so strong mutual visibility of a pair of paths $(B', B'')$ in $P$ implies that there exist nonempty open subsets $U' \subset B'$ and $U'' \subset B''$ such that for every pair of points $(q', q'')$ with $q' \in U'$ and $q'' \in U''$, $q'$ and $q''$ are strongly visible to each other in $P$.

\begin{proposition}\label{smv-prop}
Let $P$ be a simple polygon, and let $m \geq 4$.  Let $\partial P$ be partitioned into polygonal paths $B_1, \ldots, B_m$ such that 
\begin{gather*}
\bigcup_{k=1}^m B_k = \partial P, \\
B_k \cap B_{k+1} = \{v_k\}, \qquad 1 \leq k \leq m-1, \\
B_m \cap B_1 = \{v_m\}
\end{gather*}
for distinct vertices $v_1, \ldots, v_m$ of $P$.  
Suppose that each of the paths $B_1, \ldots, B_m$ is outward convex. Then there exists at least one non-adjacent pair $(B_i, B_j)$ (i.e., a pair satisfying $B_i \cap B_j = \emptyset$) that is strongly mutually visible.

\end{proposition}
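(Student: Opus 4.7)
My strategy is to produce a line segment from an interior point of some $B_i$ to an interior point of some $B_j$, with $B_i$ and $B_j$ non-adjacent, whose relative interior lies in the interior of $P$; such a segment directly witnesses the desired strong mutual visibility, after any needed small perturbation of its endpoints. I would begin by observing that, since each $B_k$ is outward convex, every interior vertex of each $B_k$ is reflex, so the convex vertices of $P$ must all lie among $\{v_1, \ldots, v_m\}$. Because the exterior angles of a simple polygon sum to $2\pi$ while each convex exterior angle lies in $(0,\pi)$ and each reflex exterior angle is non-positive, at least three of the $v_k$'s are convex vertices of $P$.

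Next I would choose a convex vertex $v_\ell$ and study the shortest path $\gamma$ in $P$ from $v_{\ell-1}$ to $v_{\ell+1}$. Since $v_\ell$ is a convex interior vertex of the boundary path $B_\ell \cup B_{\ell+1}$ joining these endpoints, Proposition \ref{convexification-prop} rules out this boundary path being the shortest path in $P$, so $\gamma$ is strictly shorter and must cut through the interior of $P$ rather than hug $\partial P$. By Lemma \ref{shortest-path-lemma}, the vertices of $\gamma$ are vertices of $P$; and by the same reasoning that underlies Proposition \ref{convexification-prop}, the interior vertices of $\gamma$ (if any) must be reflex vertices of $P$, hence each lies in the interior of some $B_j$.

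In the easy case, $\gamma$ reduces to a single segment $\overline{v_{\ell-1} v_{\ell+1}}$, whose relative interior lies in the interior of $P$ (since the boundary from $v_{\ell-1}$ to $v_{\ell+1}$ is not a straight line in either direction, and by Lemma \ref{shortest-path-lemma} $\gamma$ passes through no other vertex of $P$). Perturbing $v_{\ell-1}$ slightly into the interior of $B_{\ell-1}$ and $v_{\ell+1}$ slightly into the interior of $B_{\ell+1}$ yields, by openness, a segment still lying in the interior of $P$ except at its endpoints and connecting interior points of $B_{\ell-1}$ and $B_{\ell+1}$. Since the endpoint sets $\{v_{\ell-2}, v_{\ell-1}\}$ and $\{v_\ell, v_{\ell+1}\}$ are disjoint whenever $m \geq 4$, these paths are non-adjacent and the proposition is proved.

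In the harder case, $\gamma$ has interior vertices $w_1, \ldots, w_r$, each lying in the interior of some $B_{j_i}$. I would examine the induced sequence of side-labels $(B_\ell, B_{j_1}, \ldots, B_{j_r}, B_{\ell+1})$ at consecutive vertices of $\gamma$ and argue that at least one consecutive pair must correspond to non-adjacent $B_k$'s, so that the corresponding segment of $\gamma$, after a suitable perturbation of its endpoints within the respective interiors, yields the desired witness. The main obstacle will be rigorously establishing this combinatorial claim: I must rule out the possibility that $\gamma$ only visits cyclically adjacent $B_k$'s (which would mean $\gamma$ merely ``hugs'' the vicinity of $B_\ell \cup B_{\ell+1}$) while still being strictly shorter than $B_\ell \cup B_{\ell+1}$. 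The key leverage here is the strict length inequality forced by the convexity of $v_\ell$, which should compel at least one segment of $\gamma$ to skip across the interior of $P$; if a particular choice of convex $v_\ell$ fails to produce such a segment, re-selecting among the (at least three) available convex $v_k$'s should resolve the analysis.
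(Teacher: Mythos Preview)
Your approach via the shortest path $\gamma$ from $v_{\ell-1}$ to $v_{\ell+1}$ is different from the paper's, and the ``harder case'' contains a genuine gap that your proposed fallback does not close. The combinatorial claim you flag as the main obstacle can simply fail: when $B_\ell$ and $B_{\ell+1}$ each carry interior (reflex) vertices and the polygon narrows near $v_\ell$, the shortest path $\gamma$ runs along an initial arc of $B_\ell$, makes a single diagonal ``cut-across'' from an interior vertex of $B_\ell$ to an interior vertex of $B_{\ell+1}$, and then runs along a final arc of $B_{\ell+1}$. In that situation your label sequence has the form $(B_\ell,\dots,B_\ell,B_{\ell+1},\dots,B_{\ell+1})$, so every consecutive pair is equal or adjacent no matter how you label the two endpoint vertices, and no edge of $\gamma$ witnesses visibility between a non-adjacent pair. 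Your suggested remedy of trying another convex $v_k$ is unsubstantiated: nothing rules out the same corner-cutting behavior at every convex vertex simultaneously, and ``the strict length inequality forced by the convexity of $v_\ell$'' only forces $\gamma$ to leave $B_\ell\cup B_{\ell+1}$ \emph{as a path}, not to visit a vertex outside $B_\ell\cup B_{\ell+1}$.

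The paper sidesteps this by working directly at a convex vertex $v_k$ rather than at its neighbors. It sweeps a ray from $v_k$ across the interior angle and tracks the first point $q_\alpha$ where the ray meets $\partial P$; outward convexity of $B_k$ and $B_{k+1}$ (via Proposition~\ref{convexification-prop}) forces $q_\alpha\notin B_k\cup B_{k+1}$, so $q_\alpha$ lies in some $B_{j_0}$ with $j_0\ne k,k+1$. Since $m\ge4$, this $B_{j_0}$ is non-adjacent to at least one of $B_k, B_{k+1}$, and shifting the ray's basepoint slightly off $v_k$ into that path yields the required segment between interior points. This is exactly the ``skip across the interior'' that your $\gamma$ need not make: the shortest path from $v_{\ell-1}$ to $v_{\ell+1}$ is under no obligation to touch any $B_j$ outside $\{B_\ell,B_{\ell+1}\}$ at an interior vertex.
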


For example, the polygon boundaries in Figure \ref{poly-5-subbound-fig} have each been partitioned into 5 polygonal paths satisfying the conditions of Proposition \ref{smv-prop}.  For each of these polygons, all non-adjacent pairs $(B_i, B_j)$ are strongly mutually visible except for $(B_1,B_3)$.
\begin{figure}[h!]
\begin{center}
\includegraphics[height=2in]{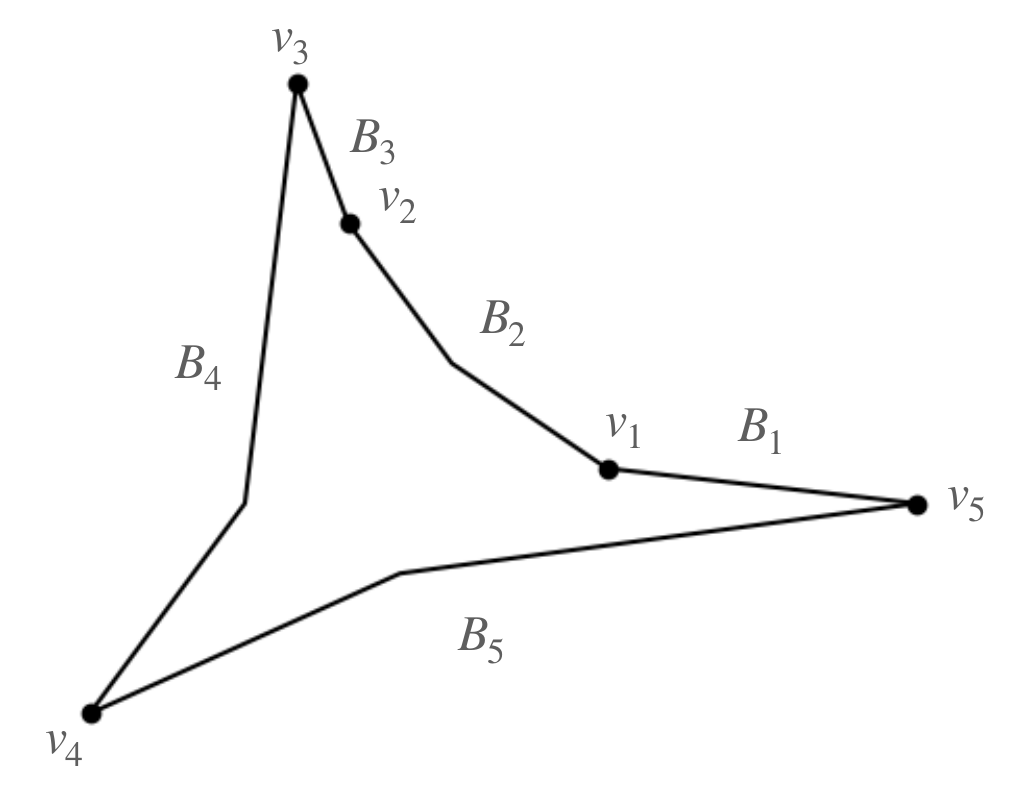} 
\ \ \ 
\includegraphics[height=2in]{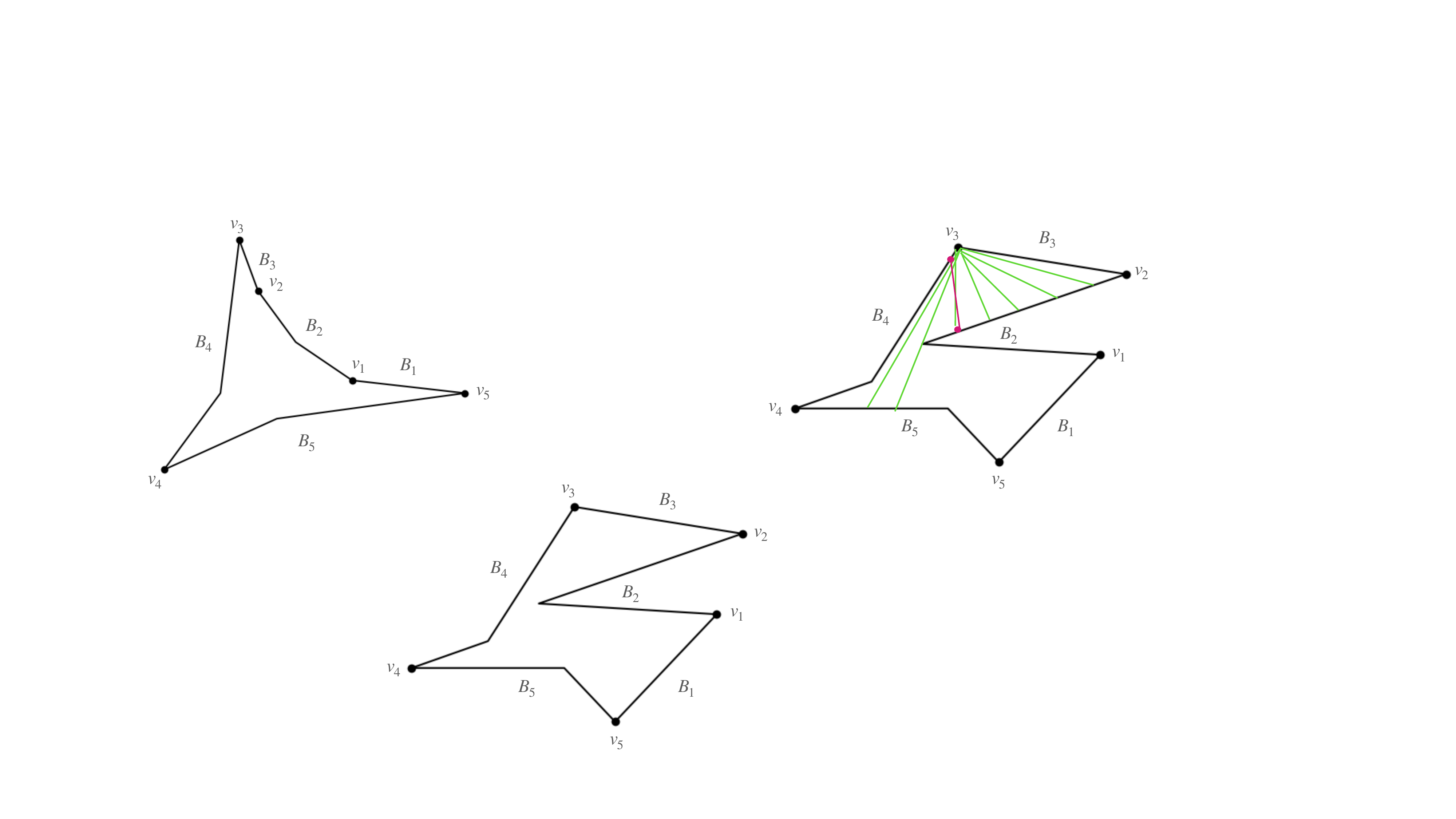} 
\end{center}
\caption{Polygon boundaries each partitioned into 5 outward convex polygonal paths; in both polygons, all non-adjacent pairs $(B_i,B_j)$ except for $(B_1, B_3)$ are strongly mutually visible.}
\label{poly-5-subbound-fig}
\end{figure}

\begin{proof}
Every simple polygon contains at least 3 convex vertices; this is an immediate consequence of the fact that every simple polygon has exterior angles that sum to $2\pi$, and each exterior angle is strictly less than $\pi$ and greater than $0$ if and only if the corresponding vertex is convex.  Moreover, since each of the paths $B_1, \ldots, B_m$ is outward convex, the convex vertices of $P$ must be a subset of the intersection points $\{v_1, \ldots, v_m\}$ between these paths.

Let $v_k$ be a convex vertex of $P$ with interior angle $\theta$. For $0 < \alpha < \theta$, let $\ell_\alpha$ be the line segment in $P$ that starts at $v_k$, forms an interior angle of $\alpha$ with the edge of $\partial P$ emanating from $v_k$, and ends at the first point $q_\alpha \in \partial P$ where it intersects $\partial P$ at a positive distance from $v_k$.
By construction, $\ell_\alpha$ is not contained in either $B_k$ or $B_{k+1}$, and $\ell_\alpha$ is the shortest path from $v_k$ to $q_\alpha$; therefore, Proposition \ref{convexification-prop} implies that $q_\alpha \notin B_k \cup B_{k+1}$. It follows that $q_\alpha \in B_{j_\alpha}$ for some $j_\alpha \neq k, k+1$.  
As $\alpha$ varies continuously between $0$ and $\theta$, there must be some nonempty, open subinterval $(a,b) \subset (0, \theta)$ on which $j_\alpha$ is equal to a constant value $j_0$, and for which the points $\{q_\alpha \mid \alpha \in (a,b)\}$ form an open subset of $B_{j_0}$.

Since $m \geq 4$, the path $B_{j_0}$ is not adjacent to at least one of $B_k$ and $B_{k+1}$.  Without loss of generality, suppose that $B_{j_0}$ is not adjacent to $B_k$, and choose $\alpha$ such that $q_\alpha$ is an interior point of $B_{j_0}$.  Since the interior of the line segment $\overline{v_k q_\alpha}$ is contained within the interior of $P$, there exists $\epsilon > 0$ such that the line segment $\ell'$ between $q_\alpha$ and the interior point of $B_k$ at distance $\epsilon$ from $v_k$ also has its interior contained within the interior of $P$.  Therefore, $B_k$ and $B_{j_0}$ are strongly mutually visible.
\end{proof}

Figure \ref{illustrate-smv-prop-proof-fig} illustrates this construction starting from the vertex $v_3$.  The family of line segments $\ell_\alpha$ (shown in green) intersects open subsets of both $B_2$ and $B_5$. Shifting the initial point of one of the segments that intersects $B_2$ a small distance along $B_4$ produces a line segment $\ell'$ (shown in magenta) in $P$ that joins an interior point of $B_4$ to an interior point of $B_2$ and only intersects $\partial P$ at its endpoints.
\begin{figure}[h!]
\begin{center}
\includegraphics[height=2in]{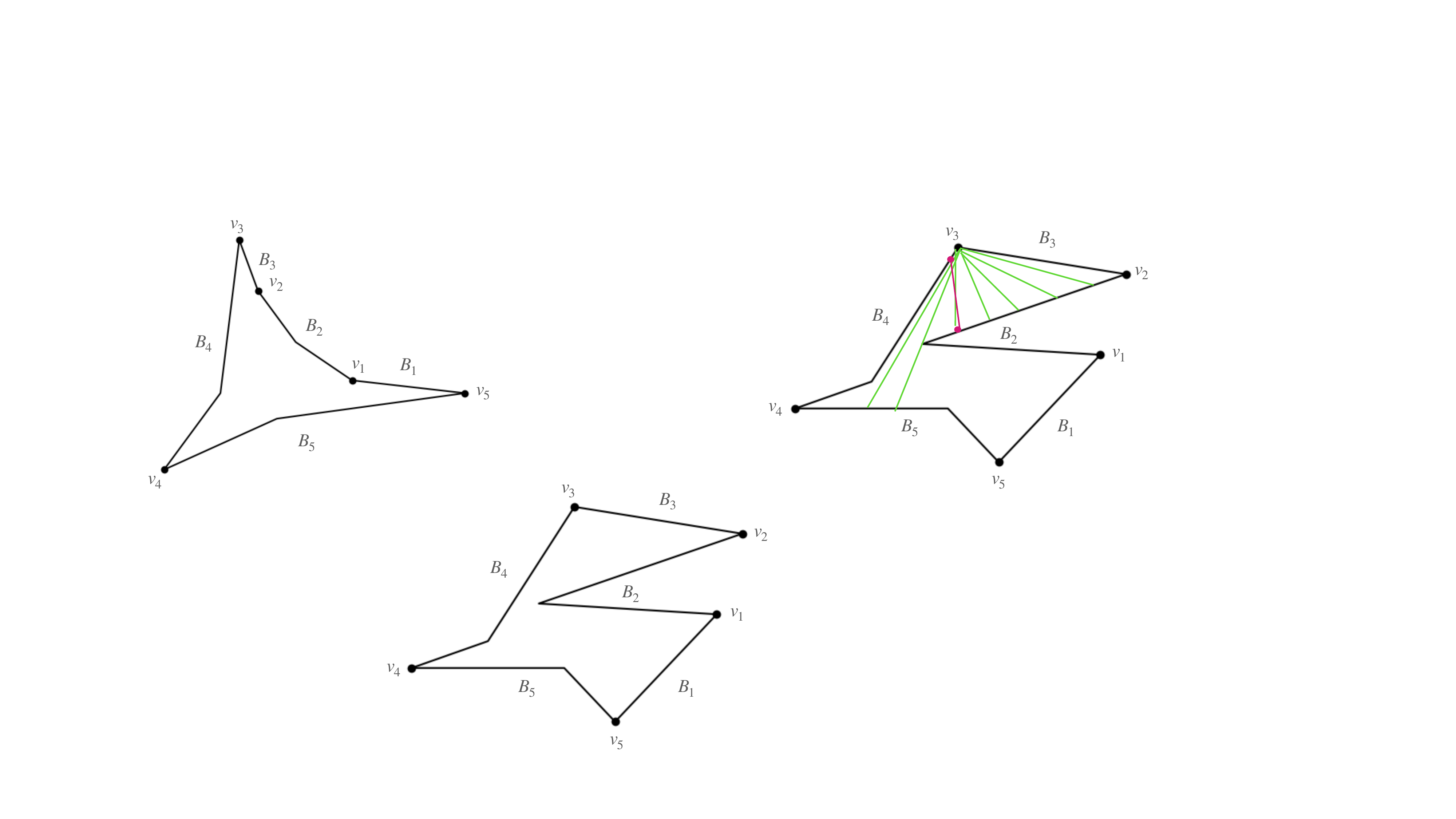} 
\end{center}
\caption{Illustration of construction in the proof of Proposition \ref{smv-prop}}
\label{illustrate-smv-prop-proof-fig}
\end{figure}

\begin{proposition}\label{tri-visibility-prop}
Let $P$ and $B_1, \ldots, B_m$ be as in Proposition \ref{smv-prop} with $m=3$.
Then for any pair $(B_i, B_j)$ with $1 \leq i,j \leq 3$, $i \neq j$, every interior point of $B_i$ is strongly visible from some interior point of $B_j$.
\end{proposition}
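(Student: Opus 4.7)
The plan is to consider the shortest path $\gamma$ in $P$ from $p$ to the vertex of $B_j$ that does not lie on $B_i$, and use continuity and perturbation arguments to produce a strongly visible interior point of $B_j$. Without loss of generality, I take $i = 1$ and $j = 2$ (the case $j = 3$ is symmetric), and let $v_2 = B_2 \cap B_3$ be the opposite vertex. The shortest path $\gamma$ from $p$ to $v_2$ exists and can be constructed via the Lee--Preparata algorithm; by Lemma \ref{shortest-path-lemma} together with standard shortest-path arguments (of the type used in the proof of Proposition \ref{convexification-prop}), each intermediate vertex of $\gamma$ is a reflex vertex of $P$.

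The first and easiest case is when $\gamma = \overline{pv_2}$ is a single segment. The interior of this segment then lies in the interior of $P$, and by continuity the same holds for $\overline{pq}$ for every $q$ in the interior of $B_2$ sufficiently close to $v_2$, yielding strong visibility from $p$ to such a $q$. The second case is when $\gamma$ has a first intermediate vertex $u_1$ with $u_1$ in the interior of $B_2$: then $\overline{pu_1}$ is a chord from $B_1$ to $B_2$ whose interior lies in the interior of $P$ (since $p$ and $u_1$ lie on distinct boundary paths and are therefore non-adjacent on $\partial P$), so $q = u_1$ realizes strong visibility.

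The hard case, which I expect to be the main obstacle, is when $\gamma$ has an intermediate vertex but the first such vertex $u_1$ lies in the interior of $B_1$ or $B_3$. To handle this, I would parameterize the shortest path $\gamma_q$ from $p$ to a moving target $q$ along $B_2$ from $v_2$ toward $v_1$; at $q = v_2$ we have $\gamma_q = \gamma$, while at $q = v_1$, Proposition \ref{convexification-prop} applied to the outward convex $B_1$ (treating $p$ as a degenerate boundary vertex if necessary) forces $\gamma_q$ to reduce to the sub-path of $B_1$ from $p$ to $v_1$. The combinatorial type of $\gamma_q$ (the ordered list of reflex vertices at which it bends) changes at only finitely many critical values of $q$, and the goal is to show that at some intermediate $q$ in the interior of $B_2$, $\gamma_q$ collapses to a single segment $\overline{pq}$, which yields strong visibility. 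Equivalently, by shooting rays from $p$ into the interior of $P$ parameterized by angle across the $\pi$-wide interior-facing half-plane and tracking their first intersection with $\partial P$, one argues using the fact that $P$ has exactly three convex corners that the first-hit map must send an open set of angles into the interior of $B_2$. The crux will be making this transition argument rigorous when reflex vertices of $B_1$ or $B_3$ obstruct direct sightlines from $p$ toward the $v_1$-end of $B_2$; the essential input is the outward convexity of each $B_k$ together with the three-corner structure of $P$, which together rule out configurations in which $p$ would be completely shielded from the interior of $B_2$.
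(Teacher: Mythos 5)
Your two easy cases are essentially sound (modulo the small caveat that a shortest-path edge can graze $\partial P$ at a reflex vertex without bending there, so ``the interior of the segment lies in the interior of $P$'' deserves a word of justification or a tiny perturbation). But the proof is not complete: the case you yourself flag as the crux --- the first bend vertex of $\gamma$ lying on $B_1$ or $B_3$ --- is left as a plan rather than an argument. Neither the homotopy-of-shortest-paths picture nor the ray-shooting reformulation is actually carried out. You assert that the first-hit map from $p$ ``must send an open set of angles into the interior of $B_2$'' and that the three-convex-corner structure ``rules out'' configurations in which $p$ is shielded from $B_2$, but you give no mechanism for why the bend cannot persist for \emph{every} target $q \in B_2$; nonemptiness of the set of directions from $p$ whose first hit lies in the interior of $B_2$ is essentially the statement being proved, so as written the decisive step is assumed rather than established.

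The missing idea --- and the one the paper uses --- is to work with the tangent line to $B_1$ at $p$ rather than with shortest paths toward $B_2$. Each $B_k$ is outward convex, hence by Proposition \ref{convexification-prop} it is the (unique) shortest path in $P$ between its endpoints, so any straight segment in $P$ joining two points of the same $B_k$ must lie entirely inside $B_k$. Consequently the maximal segment $\widetilde{\ell} \subset P$ through $p$ in the direction tangent to $B_1$ cannot have both endpoints on $B_2$, nor both on $B_3$, nor terminate on $B_1$ away from the tangency; it therefore exits through $B_2$ on one side of $p$ and through $B_3$ on the other, and (again by outward convexity) meets $B_2$ at positive distance from $v_2$. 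A small rotation of $\widetilde{\ell}$ about $p$ then yields a segment from $p$ to an interior point of $B_2$ meeting $\partial P$ only at its endpoints. This handles all cases at once, with no case analysis on $\gamma$; if you want to salvage your ray-shooting plan, this tangent direction is exactly the certificate your sketch is missing, since it already exhibits an admissible direction whose first hit is in $B_2$ at positive distance from the endpoints.
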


\begin{proof}
As noted in the proof of Proposition \ref{smv-prop}, all 3 of the vertices $v_1$, $v_2$, $v_3$ must be convex.
Without loss of generality, let $i=1$, $j=2$, and let $p \in B_1$ be in interior point.  
Let $\ell$ be a line segment in $P$ passing through $p$ and tangent to $B_1$.  Outward convexity of $B_1$ implies that such a line segment $\ell$ exists, and that the extension $\widetilde{\ell}$ of $\ell$ in either direction from $p$ must intersect either $B_2$ or $B_3$.  Furthermore, $\widetilde{\ell}$ cannot intersect either $B_2$ or $B_3$ twice, since that would contradict the fact that $B_2$ and $B_3$ are each the shortest paths between their endpoints.  Therefore $\widetilde{\ell}$ must intersect both $B_2$ and $B_3$.

Outward convexity of $B_3$ implies that $\widetilde{\ell}$ cannot intersect $B_2$ at $v_2$, so it must intersect $B_2$ at some positive distance from $v_2$.  Therefore, a slight rotation of $\widetilde{\ell}$ about $p$ produces a line segment $\ell'$ that intersects $B_1$ at $p$ and $B_2$ at some interior point $q$, and whose interior is contained in the interior of $P$. Then by definition, $p$ is strongly visible from $q$, as desired.
This construction is illustrated in Figure \ref{illustrate-tri-visibility-prop-proof-fig}; the line segment $\widetilde{\ell}$ is shown in green, and the line segment $\ell'$ connecting $p$ and $q$ is shown in magenta.
\end{proof}

\begin{figure}[h!]
\begin{center}
\includegraphics[height=2in]{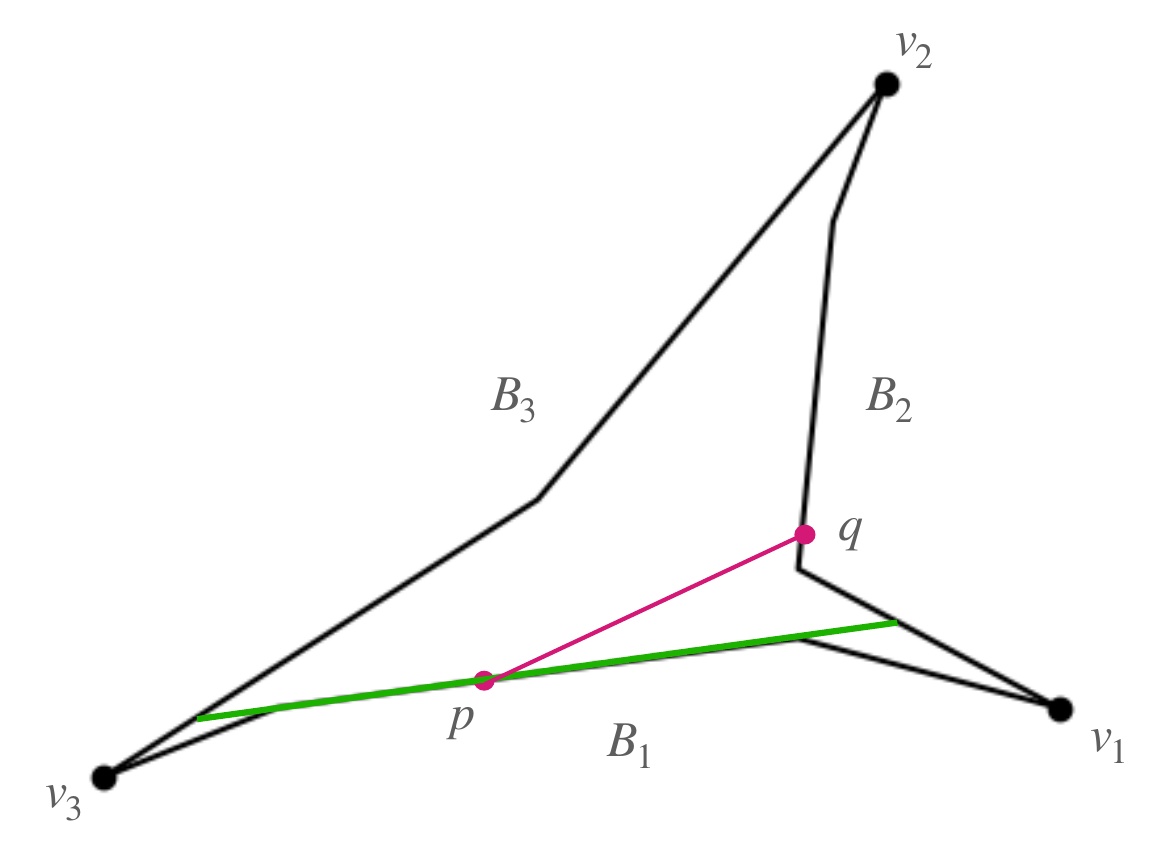} 
\end{center}
\caption{Illustration of construction in the proof of Proposition \ref{tri-visibility-prop}}
\label{illustrate-tri-visibility-prop-proof-fig}
\end{figure}

\begin{proposition}\label{quad-visibility-prop}
Let $P$ and $B_1, \ldots, B_m$ be as in Proposition \ref{smv-prop} with $m=4$.  Then both non-adjacent pairs $(B_1, B_3)$ and $(B_2, B_4)$ are strongly mutually visible.
\end{proposition}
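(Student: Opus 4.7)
The plan is to reduce Proposition~\ref{quad-visibility-prop} to the three-arc case (Proposition~\ref{tri-visibility-prop}) whenever possible, falling back on Proposition~\ref{smv-prop} together with a subdivision argument otherwise. Recall from the proof of Proposition~\ref{smv-prop} that the convex vertices of $P$ must lie in $\{v_1,v_2,v_3,v_4\}$, and that every simple polygon has at least three convex vertices; hence at most one of $v_1, v_2, v_3, v_4$ is reflex in $P$.

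In the main case, where without loss of generality $v_4$ is reflex, I would merge $B_4$ and $B_1$ across $v_4$ to form $\tilde{B} := B_4 \cup B_1$, an arc from $v_3$ through $v_4$ to $v_1$ whose interior vertices---those of $B_4$, the vertex $v_4$ itself, and those of $B_1$---are all reflex. Thus $\tilde{B}$ is outward convex, and $\partial P$ admits a partition into three outward convex arcs $\tilde{B}, B_2, B_3$ meeting at the three convex vertices $v_1, v_2, v_3$. Applying Proposition~\ref{tri-visibility-prop} to this 3-partition, every interior point of $\tilde{B}$ is strongly visible from some interior point of $B_2$, and also from some interior point of $B_3$. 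Since $\mathrm{int}\,B_4 \subset \mathrm{int}\,\tilde{B}$ and $\mathrm{int}\,B_1 \subset \mathrm{int}\,\tilde{B}$ (the endpoints of $\tilde{B}$ being $v_3$ and $v_1$), for any $q \in \mathrm{int}\,B_4$ there exists a strongly visible partner $p \in \mathrm{int}\,B_2$, establishing that $(B_2, B_4)$ is strongly mutually visible; and for any $q \in \mathrm{int}\,B_1$ there exists a strongly visible partner $p \in \mathrm{int}\,B_3$, establishing $(B_1, B_3)$.

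The remaining case is when all four $v_k$ are convex; here no adjacent pair of arcs can be merged, since any union $B_k \cup B_{k+1}$ would have a convex interior vertex at the shared $v_k$, violating outward convexity. I would first apply Proposition~\ref{smv-prop} to $P$ to secure one non-adjacent SMV pair; by symmetry assume $(B_1, B_3)$ is SMV, realized by a segment $\sigma$ from $p_1 \in \mathrm{int}\,B_1$ to $p_3 \in \mathrm{int}\,B_3$. This $\sigma$ partitions $P$ into subpolygons $P^+$ containing $B_2$ and $P^-$ containing $B_4$, each bounded by four outward convex arcs ($\sigma$ together with three sub-arcs of the original $B_k$). Applying Proposition~\ref{smv-prop} iteratively in these subpolygons---replacing $\sigma$ by a shorter witness whenever Proposition~\ref{smv-prop} on $P^+$ (resp.\ $P^-$) returns a $(B_1,B_3)$-type pair, which strictly shrinks $P^+$ (resp.\ enlarges $P^-$)---or equivalently choosing $\sigma$ so as to minimize $|\sigma|$ among SMV witnesses for $(B_1,B_3)$, should eventually yield $(\sigma, B_2)$ SMV in $P^+$ and $(\sigma, B_4)$ SMV in $P^-$. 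This produces line segments $\tau_2 \subset P^+$ from $q_2 \in \mathrm{int}\,\sigma$ to $p_2 \in \mathrm{int}\,B_2$, and $\tau_4 \subset P^-$ from $q_4 \in \mathrm{int}\,\sigma$ to $p_4 \in \mathrm{int}\,B_4$. A continuity/rotation argument, varying the choices $(p_2, q_2)$ and $(p_4, q_4)$ within their open SMV families, then combines $\tau_2$ and $\tau_4$ into a single straight segment from $\mathrm{int}\,B_2$ crossing $\sigma$ transversally into $\mathrm{int}\,B_4$, establishing $(B_2, B_4)$ SMV.

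The hard part will be executing this second case rigorously---in particular, arranging that the extension of $\tau_2$ past $q_2$ into $P^-$ can be made to reach $\mathrm{int}\,B_4$ rather than $\mathrm{int}\,B_1^-$ or $\mathrm{int}\,B_3^-$, so that $\tau_2$ and $\tau_4$ align as halves of one straight line segment and not as a bent path. Executing this alignment will require careful use of the openness of the SMV witness families together with the local convex-wedge geometry at $v_1, v_2, v_3, v_4$.
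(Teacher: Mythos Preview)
Your first case---where some $v_k$ is reflex---is correct and actually cleaner than the paper's treatment of that sub-case: merging the two adjacent arcs across the reflex vertex into a single outward-convex arc and invoking Proposition~\ref{tri-visibility-prop} immediately yields both SMV pairs. The paper does not split on the convexity of $v_4$; instead it uniformly takes the shortest path $\beta$ in $P$ from $v_2$ to $v_4$ and analyzes an interior segment $\ell$ of $\beta$.

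The all-convex case is where your argument has genuine gaps. First, the iteration/minimization step is not justified: if Proposition~\ref{smv-prop} on $P^{+}$ returns a $(B_1^{+},B_3^{+})$-witness $\sigma'$, then $\sigma'$ lies strictly inside $P^{+}$ (so the new $P^{+}$ is smaller), but there is no reason $|\sigma'|<|\sigma|$---hence ``shrinking $P^{+}$'' and ``minimizing $|\sigma|$'' are not equivalent, and neither version visibly forces the process to terminate with a $(\sigma,B_2)$-type outcome (the infimum of witness lengths need not even be realized by a segment with interior endpoints). Second, as you yourself concede, even given segments $\tau_2\subset P^{+}$ and $\tau_4\subset P^{-}$ landing on $\sigma$, nothing aligns them into a single straight segment from $\mathrm{int}\,B_2$ to $\mathrm{int}\,B_4$; the extension of $\tau_2$ past $\sigma$ may hit $B_1$ or $B_3$ rather than $B_4$, and perturbing within the open witness families does not obviously cure this. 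The paper sidesteps both issues at once: either the interior segment $\ell\subset\beta$ already joins opposite arcs (whence a tangency-plus-rotation argument delivers both SMV pairs simultaneously), or $\ell$ cuts off a three-arc subpolygon bounded by $B_3$, $B_4$, and $\beta$ to which Proposition~\ref{tri-visibility-prop} applies directly.
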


\begin{proof}
As noted in the proof of Proposition \ref{smv-prop}, at least 3 of the 4 vertices $v_1$, $v_2$, $v_3$, $v_4$ must be convex.  Without loss of generality, assume that $v_1$, $v_2$, and $v_3$ are convex, and let $\beta$ be the shortest path in $P$ from $v_2$ to $v_4$.  Since $v_1$ and $v_3$ are convex, Proposition \ref{convexification-prop} implies that neither $B_1 \cup B_2$ nor $B_3 \cup B_4$ can be the shortest path from $v_2$ to $v_4$; therefore $\beta$ contains a line segment $\ell$ whose endpoints are contained in $\partial P$ and whose interior is contained in the interior of $P$.  Outward convexity of $B_1, \ldots, B_4$ then implies that the endpoints of $\ell$ are not both contained in the same path $B_k$.

Up to symmetry, there are two possibilities:
\begin{enumerate}
\item The endpoints of $\ell$ are contained in $B_1$ and $B_3$ (or equivalently, in $B_2$ and $B_4$).  If the endpoints of $\ell$ are interior points of $B_1$ and $B_3$, then the pair $(B_1, B_3)$ is strongly mutually visible by definition.  If $v_2$ and/or $v_4$ is an endpoint of $\ell$, then a slight rotation of $\ell$ about its midpoint produces a line segment that can be extended to intersect both $B_1$ and $B_3$ at interior points; hence the pair $(B_1, B_3)$ is strongly mutually visible.  

Next, note that since $\ell$ is part of the shortest path $\beta$, it must be tangent to both $B_1$ and $B_3$.  Therefore a slight rotation of $\ell$ about its midpoint in the other direction produces a line segment whose extension does not intersect either $B_1$ or $B_3$, and which must therefore intersect both $B_2$ and $B_4$ at interior points.  Thus the pair $(B_2, B_4)$ is strongly mutually visible.
\item The endpoints of $\ell$ are interior points of $B_1$ and $B_2$ (or equivalently, of $B_3$ and $B_4$).  Then $\beta$ contains open subsets of both $B_1$ and $B_2$, and $\ell$ divides $P$ into two simple polygons, one of which is bounded by $B_3$, $B_4$, and $\beta$.  Let $P'$ denote this polygon.  Since $\beta$ is the shortest path between its endpoints in this polygon, Proposition \ref{convexification-prop} implies that $\beta$ is outward convex, and therefore $P'$ satisfies the hypotheses of Proposition \ref{smv-prop} with $m=3$.  
Applying Proposition \ref{tri-visibility-prop} to $P'$ implies that $B_3$ and the open subset of $B_1$ contained in $\beta$ are strongly mutually visible, as are $B_4$ and the open subset of $B_2$ contained in $\beta$.  Therefore, both non-adjacent pairs $(B_1, B_3)$ and $(B_2, B_4)$ are strongly mutually visible in $P'$ and hence in $P$.
\end{enumerate}
Note that it is not possible for the endpoints of $\ell$ to be contained in $B_1$ and $B_4$ (or equivalently, in $B_2$ and $B_3$), because both $B_1$ and $B_4$ are the shortest paths between each of their endpoints and the vertex $v_4$, so this would violate the shortest path property of $\beta$.
\end{proof}

Figure \ref{illustrate-quad-visibility-prop-proof-fig} illustrates the constructions in both cases above; the shortest paths $\beta$ are shown in green, and line segments between interior points of non-adjacent polygonal paths are shown in magenta.

\begin{figure}[h!]
\begin{center}
\subfloat[]{\includegraphics[height=2in]{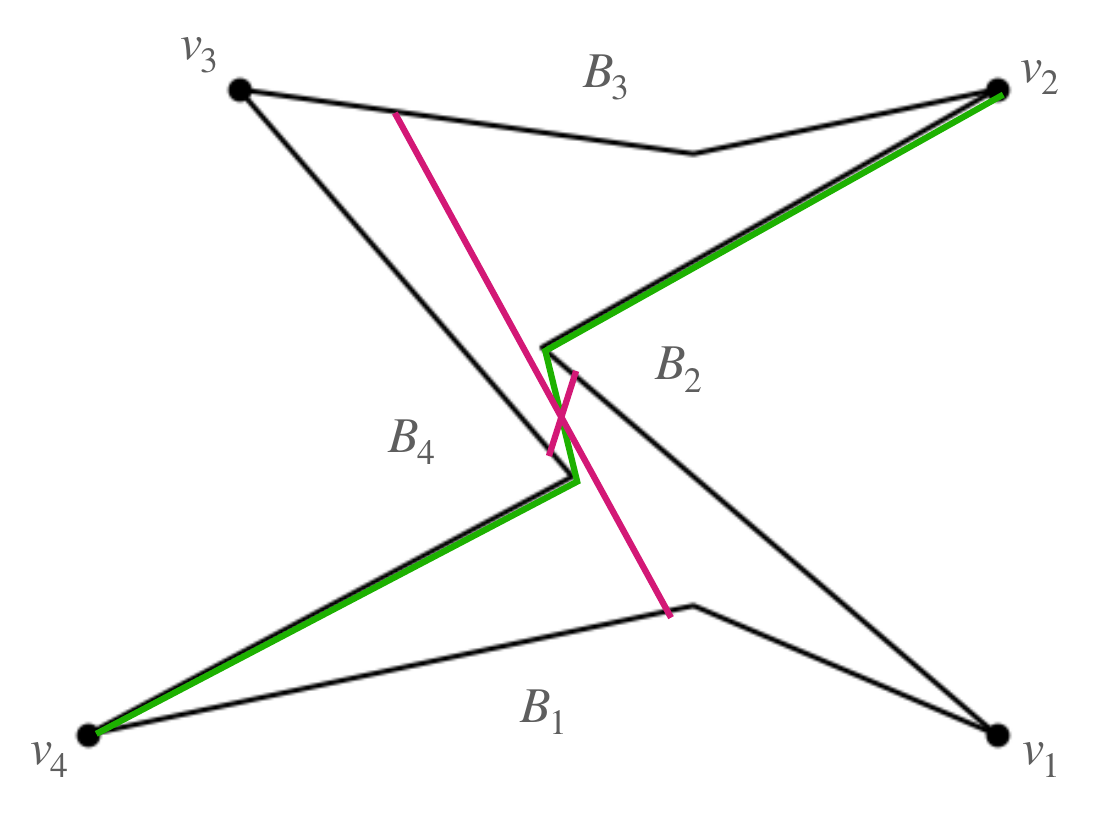}  }
\ \ \ 
\subfloat[]{\includegraphics[height=2in]{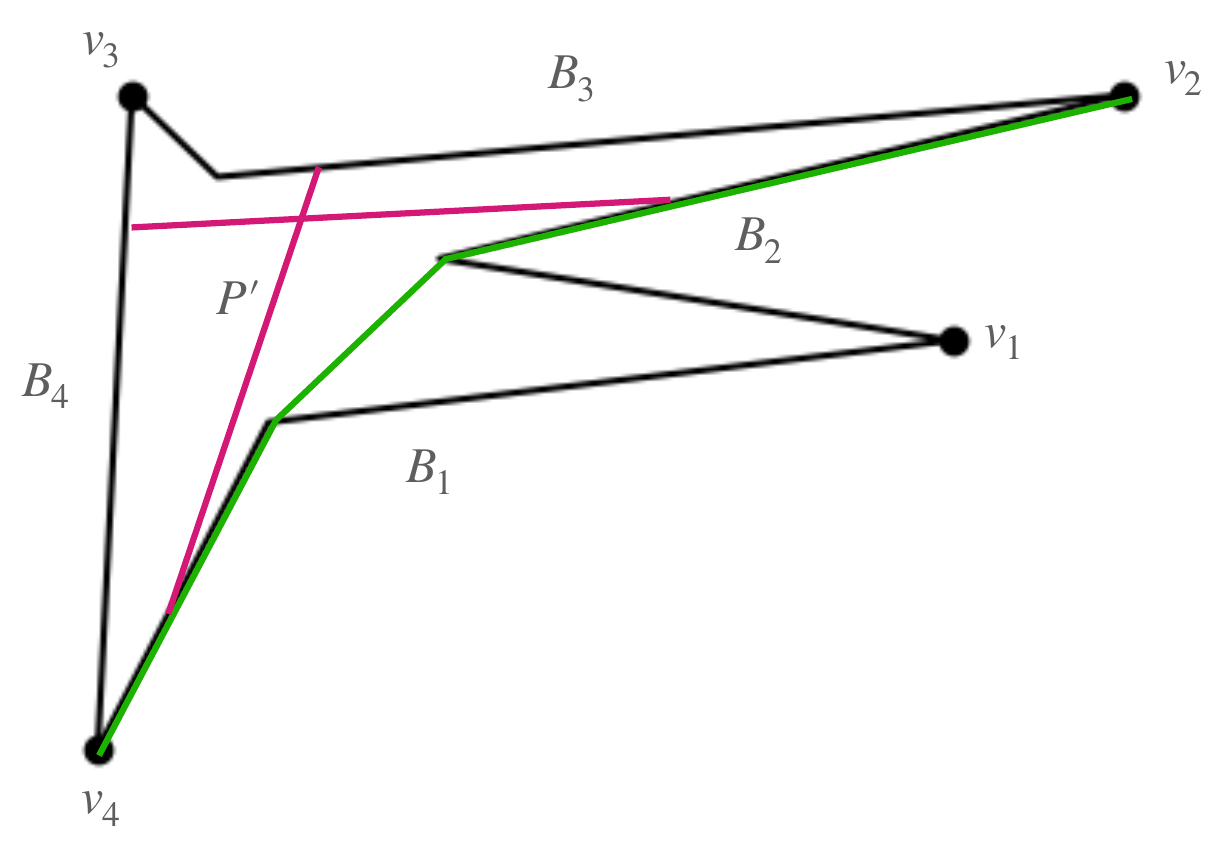}  }
\end{center}
\caption{Illustrations of constructions in the proof of Proposition \ref{quad-visibility-prop}: (a) Case 1; (b) Case (2)}
\label{illustrate-quad-visibility-prop-proof-fig}
\end{figure}

For $m \geq 5$, determining directly whether a particular non-adjacent pair $(B_i, B_j)$ is strongly mutually visible can be computationally challenging; the following theorem provides a relatively simple criterion that will be useful for the implementation of the {\tt smart\_repair} algorithm.

\begin{theorem}\label{smv-criterion-thm}
Let $P$ and $B_1, \ldots, B_m$ be as in Proposition \ref{smv-prop}, and let $(B_i, B_j)$ be a non-adjacent pair.  Orient the paths $B_i$ and $B_j$ consistently with their orientations as subsets of $\partial P$, with the standard counterclockwise orientation on $\partial P$.  Let $\alpha_1$ be the shortest path in $P$ from the terminal point of $B_i$ to the initial point of $B_j$, and let $\alpha_2$ be the shortest path in $P$ from the terminal point of $B_j$ to the initial point of $B_i$.  Then the pair $(B_i, B_j)$ is strongly mutually visible if and only if $\alpha_1$ and $\alpha_2$ are disjoint.
\end{theorem}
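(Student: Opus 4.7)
My plan is to prove the two implications of the equivalence separately, in each case working with a decomposition of $P$ into sub-polygons.

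For the direction ``$\alpha_1 \cap \alpha_2 = \emptyset$ implies strong mutual visibility,'' the idea is that concatenating $B_i$, $\alpha_1$, $B_j$, $\alpha_2$ (matched at the four distinct vertices $v_i, v_{j-1}, v_j, v_{i-1}$) produces the boundary of a simple sub-polygon $P' \subset P$. Since $\alpha_1, \alpha_2$ are shortest paths in $P$ and $P' \subset P$, they are also shortest paths in $P'$, so Proposition~\ref{convexification-prop} applied in $P'$ shows they are outward convex. The paths $B_i, B_j$ are outward convex in $P$ by hypothesis and remain so in $P'$, because the interior of $P$ and the interior of $P'$ agree on a neighborhood of each interior vertex of $B_i, B_j$. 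Thus $P'$ satisfies the hypotheses of Proposition~\ref{quad-visibility-prop} with $m = 4$, which gives strong mutual visibility of $(B_i, B_j)$ in $P'$, and therefore in $P$.

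For the converse direction, given strong mutual visibility I would pick a witness segment $\ell = \overline{pq}$ with $p \in \operatorname{int}(B_i)$, $q \in \operatorname{int}(B_j)$ and $\ell \cap \partial P = \{p, q\}$; such a segment separates $P$ into simple sub-polygons $P'_\ell$ (containing $v_i, v_{j-1}$ and $B_{i+1}, \ldots, B_{j-1}$ on its boundary) and $P''_\ell$ (containing $v_j, v_{i-1}$ and the remaining $B_k$). A standard shortcut argument forces $\alpha_1 \subset \overline{P'_\ell}$: if $\alpha_1$ ventured into the interior of $P''_\ell$, it would cross $\ell$ in at least two points, and replacing the enclosed sub-arc of $\alpha_1$ with the straight segment along $\ell$ between those crossings would yield a path in $P$ strictly shorter than $\alpha_1$, contradicting minimality. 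Symmetrically $\alpha_2 \subset \overline{P''_\ell}$, so $\alpha_1 \cap \alpha_2 \subset \ell$. Using the openness of the strong visibility condition, I would then select three generic visibility segments $\overline{p_0 q_0}$, $\overline{p_0' q_0}$, $\overline{p_0 q_0'}$ with $p_0 \neq p_0'$ and $q_0 \neq q_0'$; generically the pairwise intersections of these three segments reduce to $\{q_0\}$ and $\{p_0\}$, and since $p_0 \neq q_0$, the triple intersection is empty, forcing $\alpha_1 \cap \alpha_2 = \emptyset$.

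The main technical obstacle is confirming that the closed curve $B_i \cup \alpha_1 \cup B_j \cup \alpha_2$ is actually simple in the backward direction. Non-adjacency of $B_i, B_j$ gives $B_i \cap B_j = \emptyset$ and the hypothesis gives $\alpha_1 \cap \alpha_2 = \emptyset$, so the delicate case is when $\alpha_1$ (or $\alpha_2$) shares a sub-arc with $B_i$ or $B_j$ beyond an endpoint, which can occur at reflex corners. By uniqueness of shortest paths in a simple polygon together with outward convexity of $B_i, B_j$ (so that any sub-arc of a $B_k$ starting at an endpoint is itself a shortest path), such an overlap must be an arc of $B_i$ or $B_j$ emanating from one of the corners $v_i, v_{j-1}, v_j, v_{i-1}$. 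One resolves this by trimming the overlapping portions before forming $P'$; the strong mutual visibility of the trimmed arcs produced by Proposition~\ref{quad-visibility-prop} then lifts to strong mutual visibility of the full $B_i$ and $B_j$ in $P$.
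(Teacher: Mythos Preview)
Your proposal is correct, and the backward direction (``disjoint $\Rightarrow$ strongly mutually visible'') matches the paper's argument exactly: form the sub-polygon $P'$ bounded by $B_i,\alpha_1,B_j,\alpha_2$, observe that the $\alpha_k$ are outward convex in $P'$ by Proposition~\ref{convexification-prop}, and apply Proposition~\ref{quad-visibility-prop}. You even flag a subtlety the paper passes over in silence---the possibility that $\alpha_1$ or $\alpha_2$ shares a sub-arc with $B_i$ or $B_j$---and your trimming fix is the right response.

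For the forward direction you take a genuinely different route. The paper uses a single witness segment $\overline{p_ip_j}$ to split $P$ into $P_1,P_2$, notes (as you do) that $\alpha_1\subset P_1$ and $\alpha_2\subset P_2$, and then finishes in one stroke via Lemma~\ref{shortest-path-lemma}: since every vertex of $\alpha_1$ and $\alpha_2$ is a vertex of $P$, and (choosing $p_i,p_j$ to avoid vertices) the segment $\overline{p_ip_j}$ contains no vertex of $P$, neither shortest path can touch the segment at all, so $\alpha_1\cap\alpha_2\subset P_1\cap P_2=\overline{p_ip_j}$ is already empty. Your approach instead exploits the openness of the visibility condition to produce three witness segments sharing alternate endpoints, so that $\alpha_1\cap\alpha_2$ lies in a pairwise intersection $\{q_0\}$ and also in $\{p_0\}$, forcing it to be empty. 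Both arguments are sound; the paper's is shorter and uses structural information about geodesic vertices, while yours is more elementary in that it avoids Lemma~\ref{shortest-path-lemma} entirely and relies only on the triangle-inequality shortcut and a genericity choice.
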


\begin{proof}
First suppose that the pair $(B_i, B_j)$ is strongly mutually visible. Let $p_i \in B_i$, $p_j \in B_j$ be interior points of $B_i$ and $B_j$ such that the line segment $\overline{p_i p_j}$ is contained in $P$ and intersects $\partial P$ only at its endpoints.  Then the line segment $\overline{p_i p_j}$ divides $P$ into two simple polygons $P_1, P_2$, with the endpoints of $\alpha_1$ contained in $\partial P_1$ and the endpoints of $\alpha_2$ contained in $\partial P_2$.  The triangle inequality implies that $\alpha_1$ and $\alpha_2$ cannot cross the line segment $\overline{p_i p_j}$, and so must be contained in $P_1$ and $P_2$, respectively.  Furthermore, Lemma \ref{shortest-path-lemma} implies that no vertices of either $\alpha_1$ or $\alpha_2$ are contained in $\overline{p_i p_j}$, and so 
\[ \alpha_1 \cap \overline{p_i p_j} = \alpha_2 \cap \overline{p_i p_j} = \emptyset. \]
Since $\alpha_1 \subset P_1$, $\alpha_2 \subset P_2$, and $P_1 \cap P_2 = \overline{p_1 p_2}$, it follows that $\alpha_1 \cap \alpha_2 = \emptyset$.

Conversely, suppose that $\alpha_1 \cap \alpha_2 = \emptyset$.  Then the paths $B_i, \alpha_1, B_j, \alpha_2$ together form the boundary of a simple polygon $P' \subset P$.  Since $\alpha_1$ and $\alpha_2$ are the shortest paths between their endpoints in $P$, they are also the shortest paths between their endpoints in $P'$.  By Proposition \ref{convexification-prop}, $\alpha_1$ and $\alpha_2$ are both outward convex in $P'$.  It then follows from Proposition \ref{quad-visibility-prop} that the pair $(B_i, B_j)$ is strongly mutually visible in $P'$, and hence in $P$ as well.
\end{proof}

As an illustration, consider the polygons shown in Figure \ref{poly-5-subbound-fig}.  For the pairs $(B_1, B_3)$ that are not strongly mutually visible, it is easily seen that the corresponding paths $\alpha_1, \alpha_2$ are not disjoint. In the first polygon, $\alpha_1 = B_2$ and $\alpha_2 = B_1 \cup B_2 \cup B_3$, so their intersection is the polygonal path $B_2$.  In the second polygon, $\alpha_1 = B_2$ and $\alpha_2$ intersects $B_2$ at its interior vertex, so $\alpha_1$ and $\alpha_2$ are not disjoint.  Conversely, the paths $\alpha_1, \alpha_2$ corresponding to any other non-adjacent pair in either of these polygons are easily seen to be disjoint.

Next we consider the ``diagonals" of a strongly mutually visible pair $(B_i, B_j)$.

\begin{theorem}\label{intersecting-diagonals-thm}
Let $P$ and $B_1, \ldots, B_m$ be as in Proposition \ref{smv-prop}, and let $(B_i, B_j)$ be a non-adjacent pair that are strongly mutually visible.  With $B_i$ and $B_j$ oriented as in Theorem \ref{smv-criterion-thm}, 
let $\beta_1$ be the shortest path in $P$ between the initial points of $B_i$ and $B_j$, and let $\beta_2$ be the shortest path in $P$ between the terminal points of $B_i$ and $B_j$.
Then $\beta_1$ and $\beta_2$ intersect at a single point, which is either an interior point of $P$ or a vertex of $P$.  
%If the intersection point is a vertex of $P$, it cannot be an endpoint of either $B_i$ or $B_j$ unless it it a reflex vertex of $P$.
\end{theorem}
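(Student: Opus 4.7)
The plan is to work in a sub-polygon $P' \subseteq P$ bounded by $B_i \cup \alpha_1 \cup B_j \cup \alpha_2$, and to analyze the two diagonals there. First, by Theorem \ref{smv-criterion-thm}, strong mutual visibility of $(B_i, B_j)$ gives $\alpha_1 \cap \alpha_2 = \emptyset$, so the four paths concatenate into a simple closed curve bounding a sub-polygon $P' \subseteq P$ with corners $A, B, C, D$ in counterclockwise cyclic order. By Proposition \ref{convexification-prop}, $\alpha_1$ and $\alpha_2$ are outward convex, so $P'$ satisfies the hypotheses of Proposition \ref{smv-prop} with $m=4$. I would then argue that $\beta_1$ and $\beta_2$ may be taken to lie in $P'$: any portion of $\beta_1$ outside $P'$ would have to enter and leave through $\alpha_1$ or $\alpha_2$, and the external detour could be replaced by the corresponding sub-arc of $\alpha_1$ or $\alpha_2$ (which is itself a shortest path in $P$) without increasing length. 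Thus $\beta_1$ and $\beta_2$ are also shortest paths in $P'$ between their respective endpoints.

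For existence of an intersection, a Jordan curve argument suffices: since $A, B, C, D$ appear in cyclic order on $\partial P'$, the path $\beta_1$ separates $P'$ into two regions, one containing the boundary arc through $B$ and one containing the arc through $D$, so $\beta_2$ must cross $\beta_1$.

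The main obstacle will be ruling out a non-trivial shared sub-arc. I would first observe that $\beta_1 \cap \beta_2$ is connected: if $q_1, q_2 \in \beta_1 \cap \beta_2$ are distinct, then the sub-arcs of $\beta_1$ and $\beta_2$ between $q_1$ and $q_2$ are each a shortest $q_1 q_2$-path in $P'$ (subpaths of shortest paths are shortest), so by uniqueness of shortest paths they coincide and the entire sub-arc lies in $\beta_1 \cap \beta_2$. To rule out a non-degenerate shared arc $\sigma = [q, q']$, I would examine the tangent behavior at $q$. By Lemma \ref{shortest-path-lemma}, the vertices of $\beta_1$ and $\beta_2$ are vertices of $P'$, and each path can turn only at a reflex vertex --- which, by outward convexity of the four sides, must lie in the interior of $B_i$, $\alpha_1$, $B_j$, or $\alpha_2$. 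Since both paths share the outgoing direction along $\sigma$ at $q$, a case analysis on whether $q$ is an interior point of $P'$ or a reflex vertex, together with the shortest-path turning constraints (at a reflex vertex, the path enters along one of the two boundary edges and exits along the other), forces the incoming directions of $\beta_1$ and $\beta_2$ at $q$ to coincide as well, so $\sigma$ extends beyond $q$ --- contradicting its maximality. Therefore $\beta_1 \cap \beta_2$ is a single point $q$. Finally, $q$ cannot lie on $\partial P$ except at a vertex of $P$: if it were an interior point of an edge of $\partial P$, then $\beta_1$ and $\beta_2$ (each contained in $P$) would each have to travel tangentially along that edge through $q$, producing a shared sub-arc instead of a single-point intersection. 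Hence $q$ is either an interior point of $P$ or a vertex of $P$.
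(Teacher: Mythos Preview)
Your reduction to the sub-polygon $P'$ with $m=4$, the Jordan-curve argument for existence of an intersection, and the connectedness of $\beta_1 \cap \beta_2$ via uniqueness of shortest paths are all fine and match the paper's approach. The genuine gap is in your argument ruling out a positive-length shared arc $\sigma = [q,q']$.

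Your parenthetical claim---that at a reflex vertex a shortest path ``enters along one of the two boundary edges and exits along the other''---is simply false. Shortest paths in a simple polygon touch reflex vertices and bend there, but they do \emph{not} in general travel along the incident boundary edges; the taut-string condition only constrains the turning angle, not the specific incoming direction once the outgoing direction is fixed. Concretely, two shortest paths can share an outgoing segment at a reflex vertex $q$ while arriving from genuinely different directions: think of an hourglass-shaped region with a narrow neck bounded by two reflex vertices $q,q'$, and two diagonals that both funnel through the segment $\overline{qq'}$ but come from different points above the neck. Nothing local at $q$ forbids this.

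In fact, in the very configuration one must exclude here, the shared arc lies along $B_2$ (say), with $v_1, p, q, v_2$ in order; at $p$, the path $\beta_2$ continues along $B_2$ toward $v_1$ while $\beta_1$ arrives from the interior (from the $v_4$ side) and peels onto $B_2$. So $\beta_1$'s incoming direction at $p$ is \emph{not} along a boundary edge, directly contradicting your turning rule. What actually kills this configuration is a global count: the paper forms the sub-polygon $P''$ bounded by $B_4$, $\beta_1$, and $\beta_2$, observes that $p$ and $q$ must be reflex in $P''$ (as interior vertices of shortest paths, via Proposition~\ref{convexification-prop}), and then notes that $P''$ needs at least three convex vertices among $\{v_3, v_4, p, q\}$---a contradiction. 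You need an argument of this global flavor; the local tangent analysis cannot close the gap.
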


\begin{proof}
let $\alpha_1, \alpha_2$ be as in Theorem \ref{smv-criterion-thm}, and let $P'$ be the polygon bounded by $B_i$, $\alpha_1$, $B_j$, and $\alpha_2$.
It is straightforward to show that $\beta_1$ and $\beta_2$ are contained in $P'$, so without loss of generality we may assume that $P = P'$ and $m=4$.  Then for simplicity, set $B_i = B_1$ and $B_j = B_3$.

It is a basic topological fact that $\beta_1 \cap \beta_2$ cannot be empty. Moreover, $\beta_1 \cap \beta_2$ cannot contain more than one connected component, as this would violate the uniqueness of the shortest path between the endpoints of these components.  So $\beta_1 \cap \beta_2$ must be either a single point or a continuous sub-path of $\beta_1$ and $\beta_2$.

Consider $\beta_1$ and $\beta_2$ as paths starting at the endpoints of $B_1$.
Let $p$ be the first point of intersection between $\beta_1$ and $\beta_2$.  If $p$ is an interior point of $P$, then since every vertex of $\beta_1$ and $\beta_2$ is a vertex of $P$, $p$ must be an interior point of an edge on both $\beta_1$ and $\beta_2$.  It follows that these edges must intersect transversely at $p$, and hence $\beta_1 \cap \beta_2 = \{p\}$.  

Now suppose that $p \in \partial P$, and suppose for the sake of contradiction that $p$ is an interior point of an edge in $P$. Then $\beta_1$ and $\beta_2$ cannot intersect transversely at $p$, and hence $p$ must be a vertex of at least one of $\beta_1$ and $\beta_2$. But Lemma \ref{shortest-path-lemma} implies that every vertex of $\beta_1$ and $\beta_2$ is a vertex of $P$; therefore $p$ must be a vertex of $P$.  

Next, suppose for the sake of contradiction that $\beta_1 \cap \beta_2$ is a continuous path of positive length, and let $p, q$ be the endpoints of $\beta_1 \cap \beta_2$.  The argument in the previous paragraph shows that $p, q$ are vertices of $P$.  Additionally, $p$ and $q$ must both be contained in either $B_2$ or $B_4$, as any other possibility would contradict the shortest path property of either $\beta_1$ or $\beta_2$.  Therefore, since $\beta_1 \cap \beta_2$ is the shortest path in $P$ between $p$ and $q$, the entire path $\beta_1 \cap \beta_2$ must be contained in either $B_2$ or $B_4$.  Without loss of generality, suppose that $\beta_1 \cap \beta_2 \subset B_2$.  Uniqueness of shortest paths then implies that $\beta_1$ is coincident with $B_2$ between $p$ and $v_2$, and $\beta_2$ is coincident with $B_2$ between $v_1$ and $q$.  (See Figure \ref{impossible-diagonals-fig}.)

Now consider the polygon $P'' \subset P$ bounded by $B_4$, $\beta_1$, and $\beta_2$.  Since $\beta_1 \subset \partial P''$ is the shortest path between its endpoints and $p$ is an interior vertex of $\beta_1$, it must be reflex in $P''$.  Likewise, $q$ is an interior vertex of $\beta_2$ and so must be reflex in $P''$.  But $P''$ must have at least 3 convex vertices, and its only possible convex vertices are $v_3$, $v_4$, $p$, and $q$.  So at least one of $p, q$ must be a convex vertex of $P''$, but this is a contradiction.  Therefore, $\beta_1 \cap \beta_2$ cannot be a path of positive length and so must consist of a single point.
\end{proof}
\begin{figure}[h!]
\begin{center}
\includegraphics[height=2in]{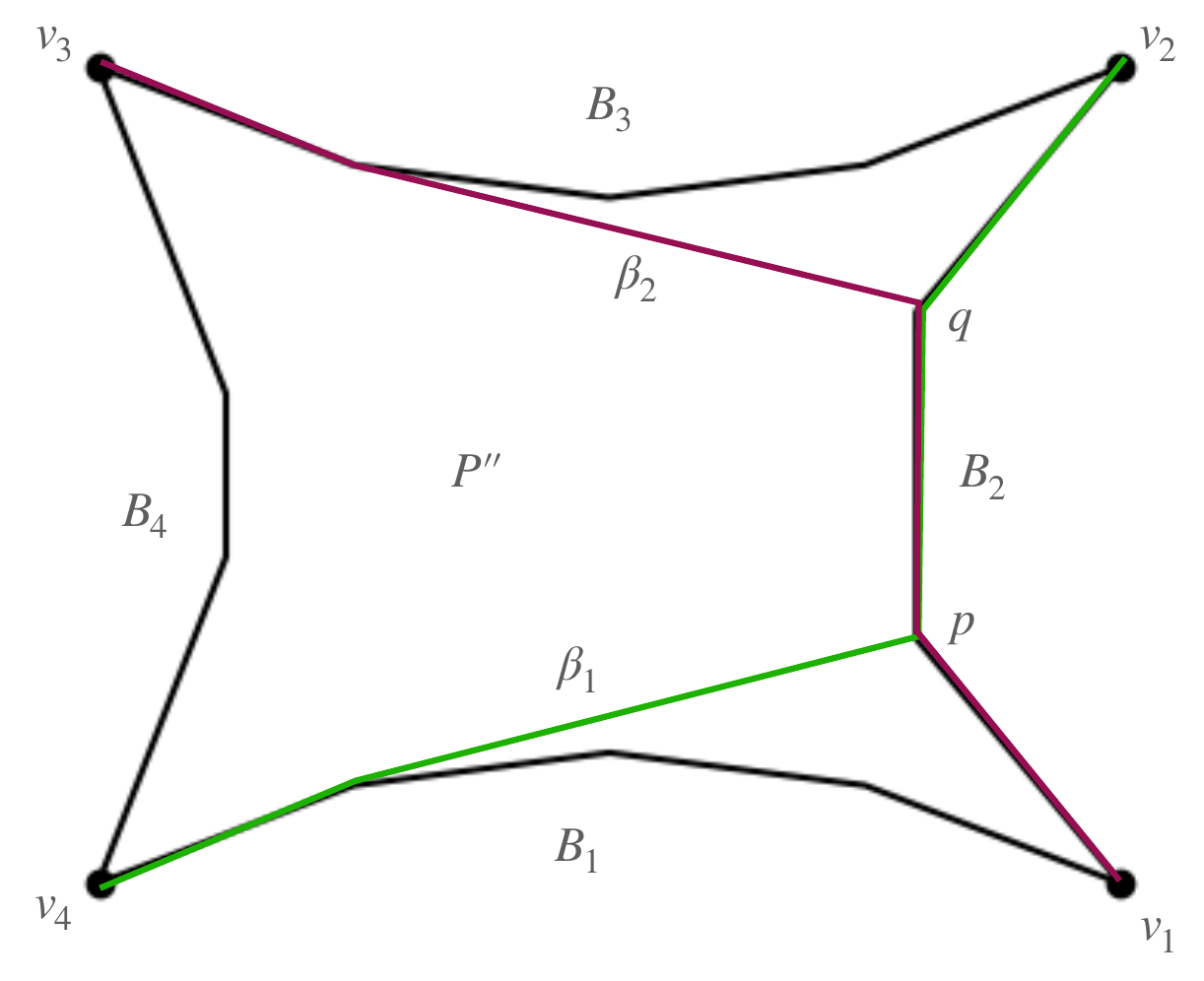} 
\end{center}
\caption{An impossible configuration of shortest paths}
\label{impossible-diagonals-fig}
\end{figure}

Finally, we prove the following theorem that will play an important role in the 
{\tt smart\_repair} algorithm:

\begin{theorem}\label{positive-area-thm}
Let $P$ and $B_1, \ldots, B_m$ be as in Proposition \ref{smv-prop}, and let $(B_i, B_j)$ be a non-adjacent pair that are strongly mutually visible. 
Let $\beta_1$ and $\beta_2$ be as in Theorem \ref{intersecting-diagonals-thm}. Then the union of $B_i$, $B_j$, $\beta_1$, and $\beta_2$ bounds a region consisting of either:
\begin{enumerate}
\item two simple polygons, each of whose boundaries intersects one of the two paths $B_i$, $B_j$ in a path of positive length and is disjoint from the other, and whose intersection is a single point located at either an interior point of $P$ or a vertex of $\partial P \setminus (B_i \cap B_j)$, or
\item one simple polygon whose boundary intersects one of the two paths $B_i$, $B_j$ in a path of positive length and contains exactly one vertex of the other.
\end{enumerate}
\end{theorem}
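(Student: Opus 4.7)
The plan is to describe the 1-complex $B_i \cup B_j \cup \beta_1 \cup \beta_2$ explicitly, account for sub-arcs of $B_i$ or $B_j$ that $\beta_1$ or $\beta_2$ are forced to coincide with, and then count the bounded regions using the uniqueness of $p = \beta_1 \cap \beta_2$ from Theorem~\ref{intersecting-diagonals-thm}.

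First, Theorem~\ref{intersecting-diagonals-thm} gives $\beta_1 \cap \beta_2 = \{p\}$, a single point lying in the interior of $P$ or at a vertex of $P$. Next, outward convexity of $B_i$ together with Proposition~\ref{convexification-prop} implies that $B_i$ is itself the shortest path in $P$ from $v_{i-1}$ to $v_i$; hence, by uniqueness of shortest paths, any portion of $\beta_1$ leaving $v_{i-1}$ into $B_i$ must coincide with an initial sub-arc of $B_i$, and any portion of $\beta_2$ leaving $v_i$ into $B_i$ must coincide with a terminal sub-arc of $B_i$. Let $w_1 \in B_i$ be the furthest point from $v_{i-1}$ along $B_i$ traversed by $\beta_1$ in this manner (with $w_1 = v_{i-1}$ if $\beta_1$ leaves $v_{i-1}$ directly into the interior of $P$), let $w_2 \in B_i$ be the furthest point from $v_i$ along $B_i$ traversed by $\beta_2$, and define $w_1', w_2' \in B_j$ analogously.

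I would then split into two cases. Case~1: $p \notin B_i \cup B_j$. Here $w_1$ must precede $w_2$ strictly along $B_i$---otherwise $w_1 = w_2$ would be a common point of $\beta_1 \cap \beta_2 = \{p\}$ lying on $B_i$, a contradiction---and likewise $w_1'$ precedes $w_2'$ on $B_j$. The closed boundary walk $B_i \cdot \beta_2 \cdot B_j^{-1} \cdot \beta_1^{-1}$ traces the arcs $[v_{i-1}, w_1]$ and $[w_2, v_i]$ of $B_i$ once in each direction (and similarly for $B_j$); these doubly-traced sub-arcs are dead-end tails that can be pruned without affecting the bounded region. What remains is an essential boundary that retains the positive-length sub-arcs $[w_1, w_2] \subset B_i$ and $[w_1', w_2'] \subset B_j$, self-intersects only at the single crossing $p$, and therefore forms a figure-eight whose two loops bound two simple polygons meeting exactly at $p$---this is case~(1). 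Case~2: $p \in B_i$ (the case $p \in B_j$ is symmetric). Then $w_1 = w_2 = p$, so $\beta_1$ and $\beta_2$ together sweep out all of $B_i$; after the same pruning, no portion of $B_i$ survives, and the essential boundary collapses to a single simple loop containing a positive-length sub-arc of $B_j$ and the lone vertex $p$ from $B_i$ (which is either an interior vertex of $B_i$ or one of the endpoints $v_{i-1}, v_i$). This yields case~(2).

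The main obstacle I anticipate is justifying the pruning step and controlling the possible ways $\beta_1$ and $\beta_2$ can touch $\partial P$ beyond simply coinciding with initial segments. In particular, a shortest path may touch $B_i$ at an isolated interior vertex (a reflex vertex of $P$) without actually running along $B_i$, so I would need to argue that any such tangential touch point likewise creates only dead-end branches of the 1-complex that do not affect the bounded region, using Lemma~\ref{shortest-path-lemma}, uniqueness of shortest paths, and the identity $\beta_1 \cap \beta_2 = \{p\}$ to rule out any self-intersection of the essential boundary other than at $p$. Once the pruning is justified, confirming that the essential boundary is either a figure-eight or a single simple loop---and that the enclosed regions are simple polygons with the claimed incidence properties with $B_i$ and $B_j$---is then a short topological check.
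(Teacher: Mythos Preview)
Your approach is correct and in fact more explicit than the paper's. The paper first reduces to $m=4$ by passing to the sub-polygon $P'$ bounded by $B_i$, $\alpha_1$, $B_j$, $\alpha_2$ (exactly as in the proof of Theorem~\ref{intersecting-diagonals-thm}), and then gives only a two-line case split on whether the intersection point $p$ lies on $B_i \cup B_j$ or not, deferring the geometry to figures. You instead work directly with the 1-complex $B_i \cup B_j \cup \beta_1 \cup \beta_2$ and carry out the topological bookkeeping by pruning the doubly-traversed tails, which avoids the reduction step and makes the figure-eight/single-loop dichotomy fully explicit. Both arguments hinge on the same case split, so neither is really more general; the paper's is shorter, yours is more rigorous.

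Regarding your anticipated obstacle: it dissolves immediately via uniqueness of shortest paths in a simple polygon. If $\beta_1$ meets $B_i$ at any point $v$ whatsoever, then $\beta_1|_{[v_{i-1},v]}$ and $B_i|_{[v_{i-1},v]}$ are both shortest paths from $v_{i-1}$ to $v$ and hence coincide. Thus $\beta_1 \cap B_i$ is \emph{exactly} the initial sub-arc $[v_{i-1}, w_1]$, with no isolated tangential touches possible, and the analogous statement holds for each of the four path--boundary intersections. This fully justifies your pruning step and guarantees that the essential boundary self-intersects only at $p$.
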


\begin{proof}
Let $\alpha_1$, $\alpha_2$ be as in Theorem \ref{smv-criterion-thm}, and let $P'$ be the polygon bounded by $B_i$, $\alpha_1$, $B_j$, and $\alpha_2$.  As in the proof of Theorem \ref{intersecting-diagonals-thm}, without loss of generality we may assume that $P = P'$ and that $m=4$.  Then $(B_i, B_j)$ is one of the two non-adjacent pairs $(B_1, B_3)$ or $(B_2, B_4)$.  Without loss of generality, set $B_i = B_1$ and $B_j = B_3$.

Let $p \in P$ be the intersection point of $\beta_1$ and $\beta_2$.  
\begin{itemize}
\item If $p$ is an interior point of $P$ or an interior vertex of $B_2$ or $B_4$, then $B_1$, $B_3$, $\beta_1$, and $\beta_2$ bound 2 polygons as in the first option above.  (See Figure \ref{diagonal-triangles-fig}(a)-(b).)
\item If $p$ is a vertex of $B_1$ or $B_3$, then $B_1$, $B_3$, $\beta_1$, and $\beta_2$ bound 1 polygon as in the second option above.  (See Figure \ref{diagonal-triangles-fig}(c).)
\end{itemize}
\end{proof}

\begin{figure}[h!]
\begin{center}
\subfloat[]{\includegraphics[height=1.5in]{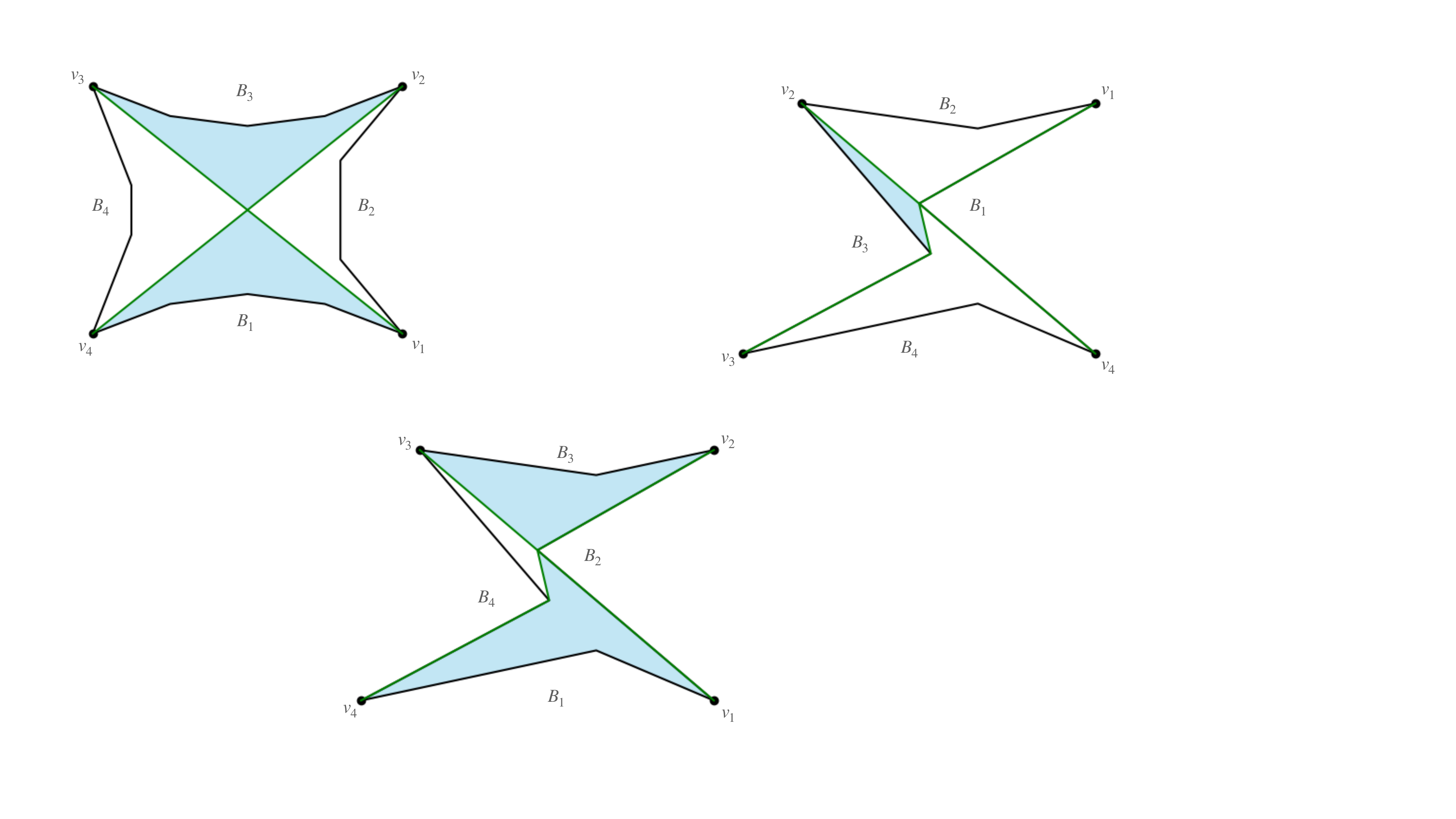}  }
\ \ \ 
\subfloat[]{\includegraphics[height=1.5in]{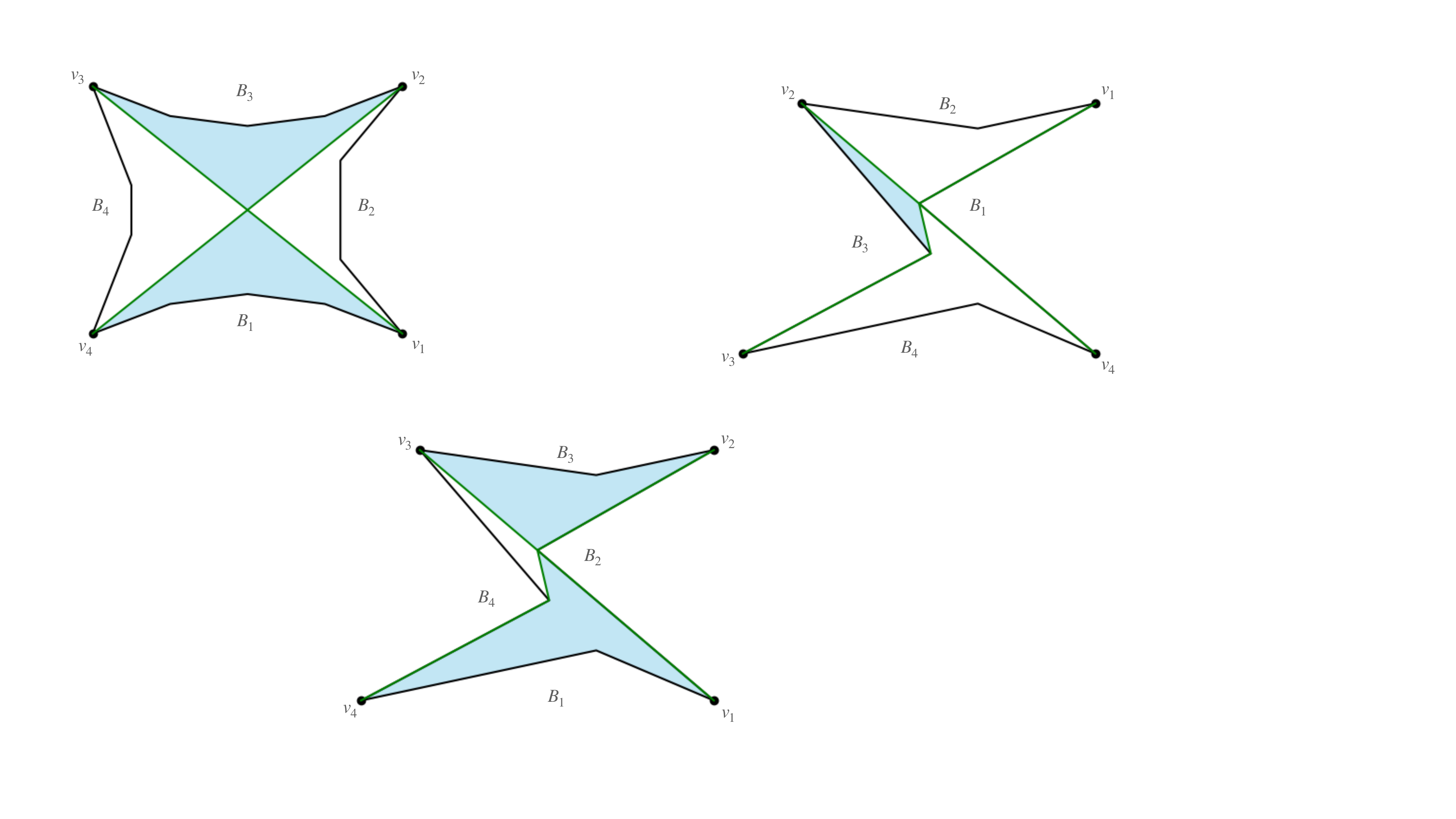}  }
\ \ \ 
\subfloat[]{\includegraphics[height=1.5in]{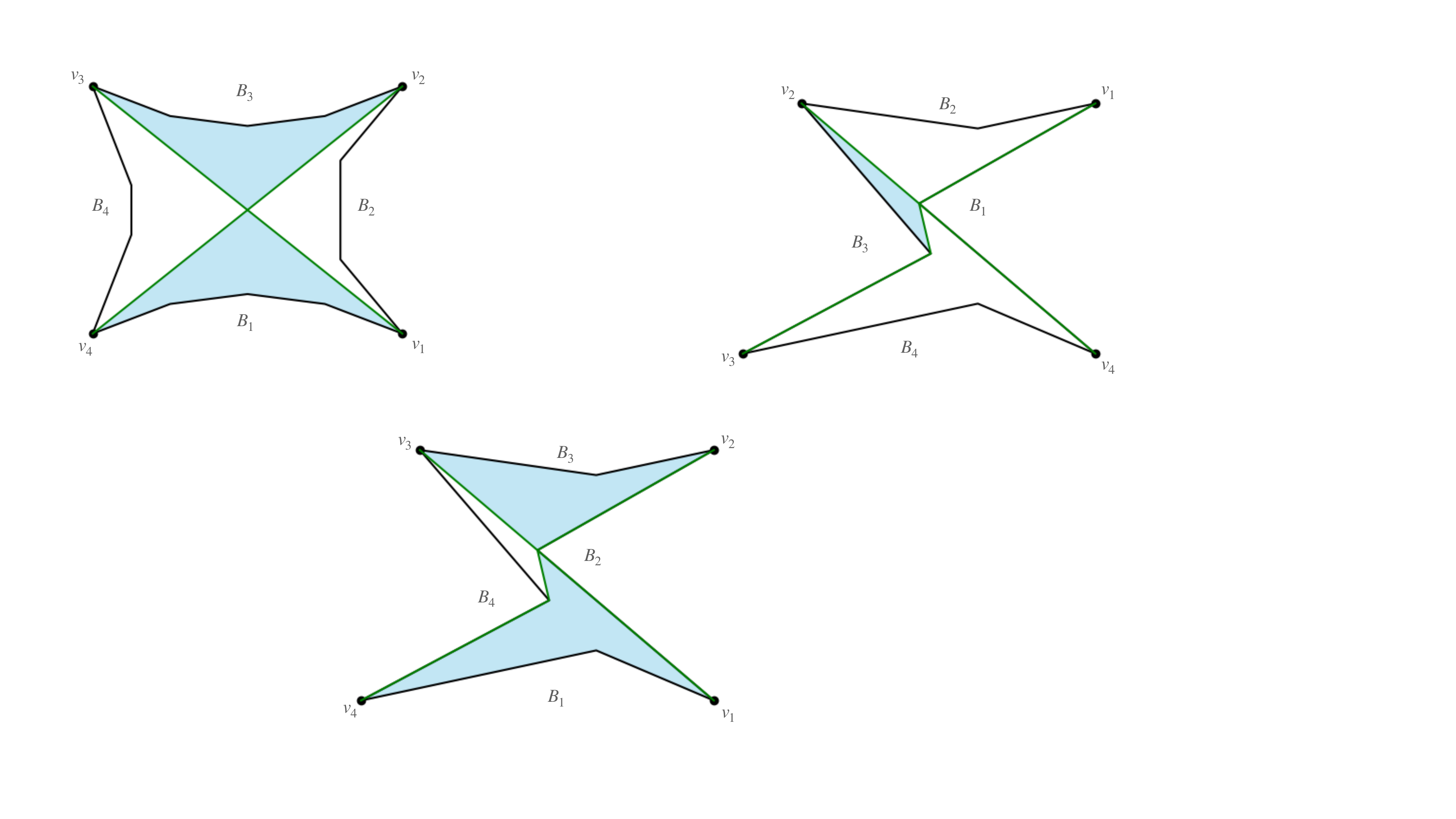}  }
\end{center}
\caption{}
\label{diagonal-triangles-fig}
\end{figure}

\section{Primary {\tt smart\_repair} algorithm}\label{main-alg-sec}

In this section, we present the details of the primary {\tt smart\_repair} algorithm and compare it with the {\tt quick\_repair} algorithm; optional features will be described in section \ref{bells-and-whistles-sec}.

Let $T = \{P_1, \ldots, P_N\}$ be a set of polygons as described in Problem \ref{big-problem}.  We will refer to the polygons $P_1,\ldots, P_N$, their repaired versions $\widetilde{P}_1, \ldots, \widetilde{P}_N$, and the intermediate stages in the construction of the repaired versions as {\em units} of $T$ and $\widetilde{T}$, respectively; this is primarily to avoid confusion with smaller polygons representing gaps, overlaps, and subdivisions thereof that appear throughout the {\tt smart\_repair} algorithm.

\subsection{Step 1: Construct refined tiling}
For the {\tt quick\_repair} algorithm, it is typical for small gaps and overlaps between units to remain even after the repair.  These result primarily from rounding errors that occur when computing points of intersection between unit boundaries.  For instance, suppose that two units overlap as shown in the first plot in Figure \ref{intersection-example-fig}.  The coordinates of the points of intersection between their boundaries generally cannot be computed numerically with perfect precision, and so the boundaries of the polygon representing the overlap may not line up cleanly with the boundaries of the original units, as shown in the second plot in Figure \ref{intersection-example-fig}.  In this example, the {\tt quick\_repair} algorithm would remove the smaller polygon representing the overlap from both of the original units and reassign it only to the unit on the left.  Because of the slight inaccuracy in the computation of the new vertices on this polygon, this process leaves a small overlap between the ``repaired" units.  Similar discrepancies in the computation of gaps between units can result in small gaps and/or overlaps remaining between ``repaired" units.

\begin{figure}[h!]
\begin{center}
\includegraphics[height=1.5in]{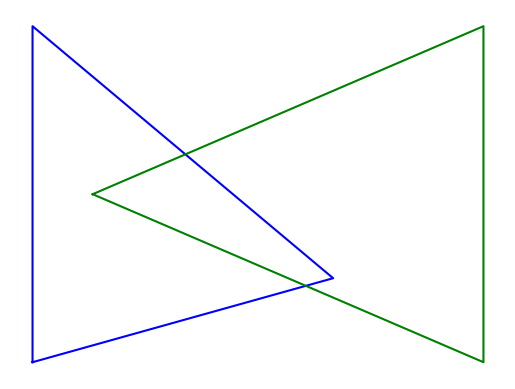}  \ \ \ 
\includegraphics[height=1.5in]{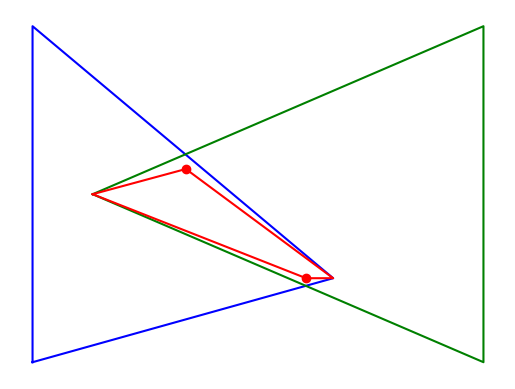}
\end{center}
\caption{Rounding error in computing overlap}
\label{intersection-example-fig}
\end{figure}

Additionally, the {\tt quick\_repair} algorithm only considers overlaps between pairs of units---but higher-order overlaps can and do occur in practice, as shown in Figure \ref{triple-overlap-fig}.  The triple overlap in the center is contained in each of the three pairwise overlaps, and it is guaranteed to be reassigned to at least two of the three intersecting units as the pairwise overlaps are reassigned.
\begin{figure}[h!]
\begin{center}
\includegraphics[height=1.5in]{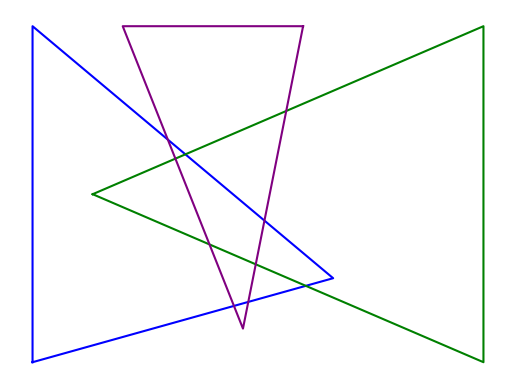}  
\end{center}
\caption{A triple overlap}
\label{triple-overlap-fig}
\end{figure}

The {\tt smart\_repair} algorithm avoids both of these problems by taking a completely different approach. First, we construct the simplicial 1-complex consisting of the
{\em fully noded union} of the unit boundaries $\{\partial P_1, \ldots, \partial P_N\}$.  In this construction, any line segment $\overline{pq}$ that intersects another line segment at some point $r$ in its interior is replaced by the pair of segments $\overline{pr}, \overline{rq}$. 
For example, the points of intersection between the units in Figure \ref{intersection-example-fig} would be added to the original unit boundaries as new vertices, resulting in the 1-complex shown in Figure \ref{boundaries-union-fig}(a).
\begin{figure}[h!]
\begin{center}
\subfloat[]{\includegraphics[height=1.5in]{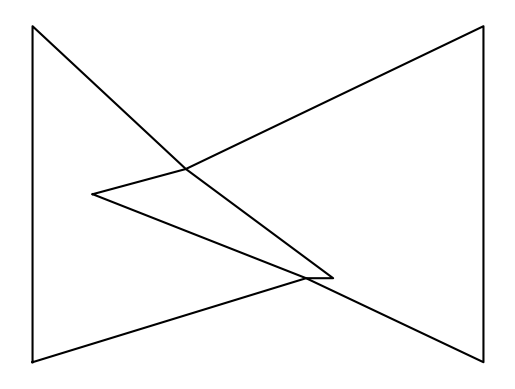}  }
\ \ \ 
\subfloat[]{\includegraphics[height=1.5in]{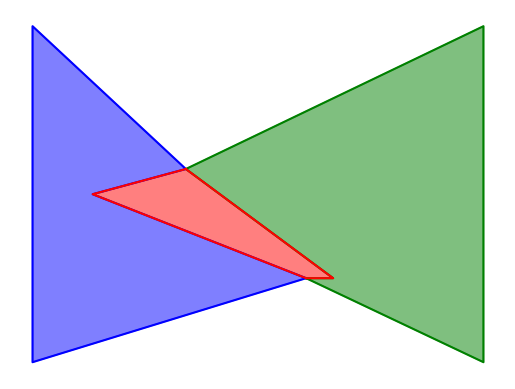}  }
\end{center}
\caption{(a) Fully noded union of boundaries of polygons in Figure \ref{intersection-example-fig}; (b) Polygonization of the 1-skeleton}
\label{boundaries-union-fig}
\end{figure}

Next, we ``polygonize" this 1-complex as in Figure \ref{boundaries-union-fig}(b) to create a clean partition of $R$ into polygons that intersect only along their boundaries; we will refer to the polygons in this partition as {\em pieces}. 
To each piece $Q$ in this partition, we associate the set 
\[ S_Q = \{k \in \{1,\ldots, N\} : Q \subset P_k\}; \]
i.e., $S_Q$ is the maximal subset of $\{1,\ldots, N\}$ for which $Q \subset \bigcap_{k \in S_Q} P_k$.
We refer to the cardinality of $S_Q$ as the {\em overlap order} of the piece $Q$. So, e.g., pieces of overlap order 1 each belong to exactly one unit, and pieces of overlap order 0 represent gaps.

\subsection{Step 2: Assign overlaps}\label{reconstruction-sec}
The first step in the construction of the repaired units $\widetilde{P}_1, \ldots, \widetilde{P}_N$ is to assign each piece $Q$ of overlap order 1 to the unique unit $\widetilde{P}_k$ for which $Q \subset P_k$.  Then we check each (partially) reconstructed unit $\widetilde{P}_k$ for connectedness; we will refer to any unit that has more than one connected component at this stage as ``disconnected."  

Next, starting with the order 2 overlaps:
\begin{enumerate}
\item For each disconnected unit, identify any overlaps contained in the corresponding original unit and assign them to this unit.  If this suffices to restore the unit to connectedness, remove it from the list of disconnected units.

Note that this process is not guaranteed to reconnect all disconnected units. 
There exist configurations in which the removal of a single overlap disconnects more than one unit, and no possible choice for how to assign this overlap would reconnect all units; see Figure \ref{bad-connectedness-fig} for an example.  In this case, some of the repaired units $\widetilde{P}_1, \ldots, \widetilde{P}_N$ may have multiple connected components. 
\item Assign each of the remaining overlaps to the unit with which its boundary shares the largest perimeter.
\end{enumerate}
Repeat this process for overlaps of order 3, etc., until all overlaps have been assigned to units.
\begin{figure}[h!]
\begin{center}
\includegraphics[height=1.5in]{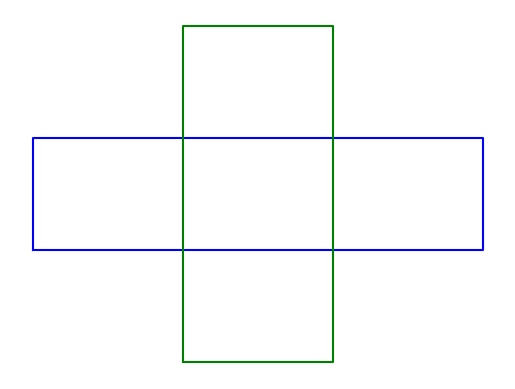}  
\end{center}
\caption{An overlap that disconnects multiple units}
\label{bad-connectedness-fig}
\end{figure}

\begin{example}\label{reconstruction-toy-example}
Consider a region consisting of the three unit polygons shown in Figure \ref{reconstruction-toy-example-fig}(a).  There are three overlaps of order 1, three overlaps of order 2, and one overlap of order 3.  Assigning overlaps  according to the algorithm described above produces the repaired region shown in Figure \ref{reconstruction-toy-example-fig}(e).
\begin{figure}[h!]
\begin{center}
\subfloat[]{\includegraphics[height=0.8in]{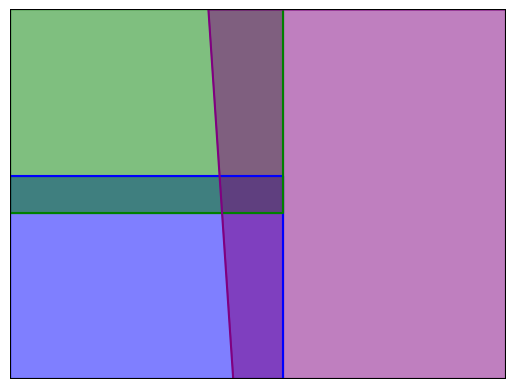} }
\hfill
\subfloat[]{\includegraphics[height=0.8in]{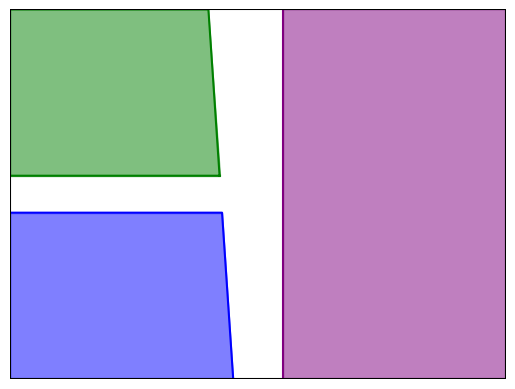}  }
\hfill
\subfloat[]{\includegraphics[height=0.8in]{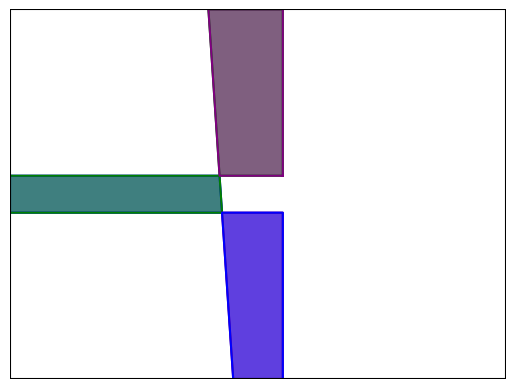}  }
\hfill
\subfloat[]{\includegraphics[height=0.8in]{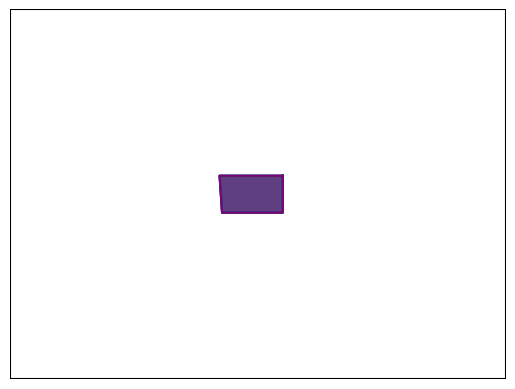}  }
\hfill
\subfloat[]{\includegraphics[height=0.8in]{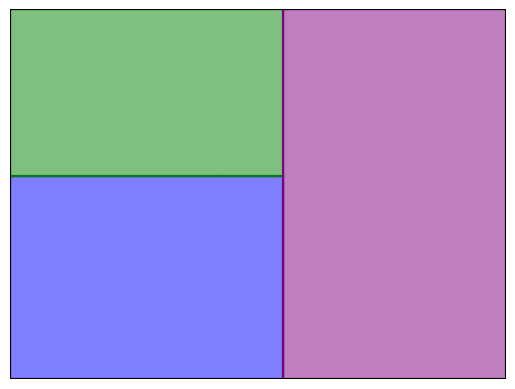} }
\end{center}
\caption{(a) Overlapping unit polygons; (b) Order 1 overlaps; (c) Order 2 overlaps; (d) Order 3 overlap; (e) Repaired unit polygons}
\label{reconstruction-toy-example-fig}
\end{figure}

\end{example}

\begin{example}\label{reconstruction-real-example} Figure \ref{reconstruction-real-example-fig}(a) shows three precincts from Colorado's 2020 precinct map; the blue precinct in the center is in Arapahoe County, while the green precinct surrounding it and the purple precinct to its south are in Denver County.  
There is considerable overlap between the Arapahoe County precinct and the Denver County precinct that surrounds it; 
the overlapping regions are shown in Figure \ref{reconstruction-real-example-fig}(b).

In particular, there is an overlapping region between these two precincts that extends southward to the northern boundary of the other Denver County precinct.  The {\tt quick\_repair} algorithm would assign this overlap to the Arapahoe County precinct (because its boundary shares a much larger perimeter with that precinct than with the Denver County precinct), thereby disconnecting the Denver precinct, as in Figure \ref{reconstruction-real-example-fig}(c). Because the {\tt smart\_repair} algorithm prioritizes connectivity over shared perimeter, it instead assigns this overlap to the Denver County precinct, as in Figure \ref{reconstruction-real-example-fig}(d).

\begin{figure}[h!]
\begin{center}
\subfloat[]{\includegraphics[height=1.1in]{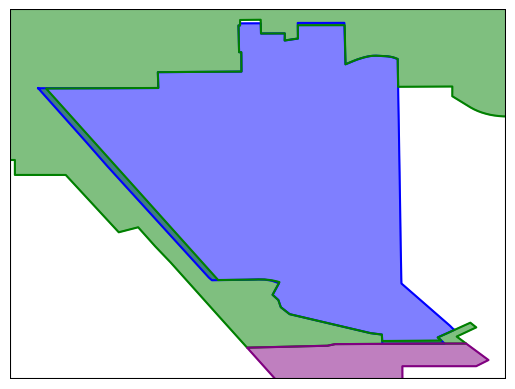} }
\ \ \ 
\subfloat[]{\includegraphics[height=1.1in]{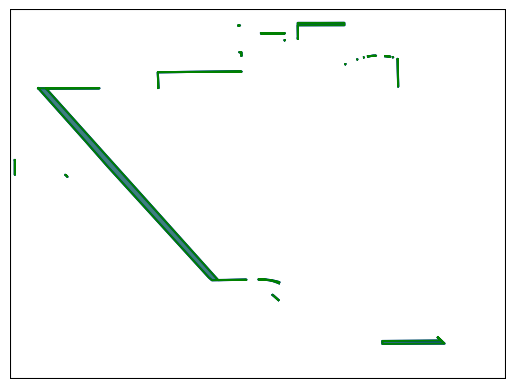}  }
\ \ \ 
\subfloat[]{\includegraphics[height=1.1in]{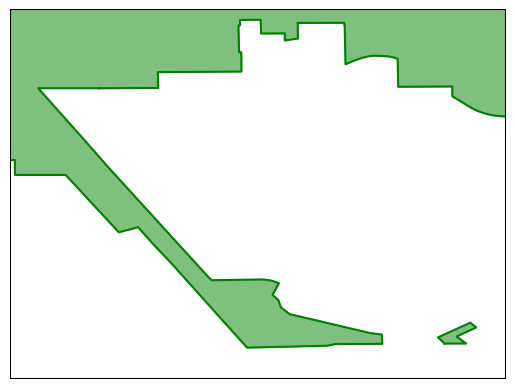}  }
\ \ \ 
\subfloat[]{\includegraphics[height=1.1in]{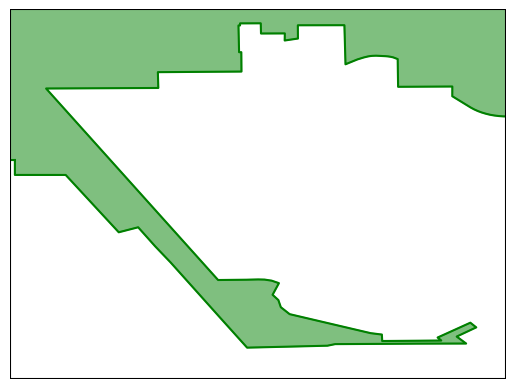} }
\end{center}
\caption{(a) Overlapping precincts; 
(b) Overlapping regions; (c)  Repair via {\tt quick\_repair} (d) Repair via {\tt smart\_repair}}
\label{reconstruction-real-example-fig}
\end{figure}

\end{example}

\subsection{Step 3: Close gaps}
For this step, consideration of adjacency relations motivates a complete departure from the {\tt quick\_repair} algorithm.  As the example described in Section \ref{repair-gone-wrong-subsec} vividly illustrates, whenever a gap is adjacent to more than a few units, assigning the entire gap to any single unit is practically guaranteed to create erroneous adjacency relations in the resulting ``repaired" tiling. Instead, the gap needs to be subdivided into smaller pieces, and different pieces should be assigned to different units in a way that results in reasonable adjacency relations between units.

Importantly, this algorithm is primarily intended to fill relatively small gaps between units.  For larger, more complicated gaps, the choices made by the algorithm for how to subdivide gaps may lead to unsatisfactory results.  Additionally, some larger gaps are intentional; e.g., some gaps representing large lakes may be not be included in any voting precinct and should ideally remain unfilled.\footnote{For this reason, the {\tt Maup} implementation of {\tt smart\_repair} includes a user-specified parameter (set at 0.1 by default) for which gaps whose area exceeds that fraction of the area of the largest adjacent unit will remain unfilled.}

The first step is to subdivide the boundary of the gap into its intersections of positive length with the boundaries of individual adjacent unit polygons; we will refer to these as {\em sub-boundaries} of the gap.  
Note that it is possible for the intersection of a gap boundary with a unit polygon to have multiple connected components; in this case we will consider the components as separate sub-boundaries.
For each gap, the algorithm proceeds based on the number of sub-boundaries that the gap boundary contains.

\subsubsection{Gaps with 1 sub-boundary}
The simplest case
is when the entire gap boundary is adjacent to a single unit.
In this case, the entire gap is contained within this unit, and we assign the gap to this unit.

\subsubsection{Gaps with 2 sub-boundaries}
For gaps with 2 sub-boundaries, any choice for how to divide the gap between the two adjacent units (e.g., assign the entire gap to the unit with which it shares the largest perimeter as in the {\tt quick\_repair} algorithm, or somehow divide the gap between the two units) will have an equivalent effect on the adjacency relations: These two units will be adjacent in the repaired file, and no other adjacencies between units will be affected by filling this gap.  

However, there are more geometric considerations that may make it desirable to divide the gap between the two unit polygons.  
In the absence of any ground truth information about the ``correct" way to fill gaps, we will need to adopt some guiding principles.  For gaps with 2 sub-boundaries, the following will suffice:

\begin{principle}\label{convexity-principle}
Optimize the convexity of the repaired unit polygons.
\end{principle}

For gaps with 2 sub-boundaries, we accomplish this by constructing the shortest path within the gap between the endpoints of the sub-boundaries, dividing the gap along this path, and assigning the resulting regions to their adjacent units.  Since the vertices in the shortest path are guaranteed to be vertices of the gap boundary (cf.~ Lemma \ref{shortest-path-lemma}), this process does not introduce any new vertices or points of intersection between unit boundaries.\footnote{
Because the vast majority of gaps encountered in practice are simply connected and shortest paths within non-simply connected polygons are not guaranteed to be unique, the {\tt Maup} implementation of {\tt smart\_repair} is restricted to simply connected gaps.  Any non-simply connected gaps (e.g., large bodies of water containing islands) are left unfilled and the user is notified of their presence.  (In practice, non-simply connected gaps are often large enough to remain unfilled by default in any case.)
}
An example of this process 
is shown in Figure \ref{two-sub-boundary-gap-example-fig}.
\begin{figure}[h!]
\begin{center}
\subfloat[]{\includegraphics[height=1.1in]{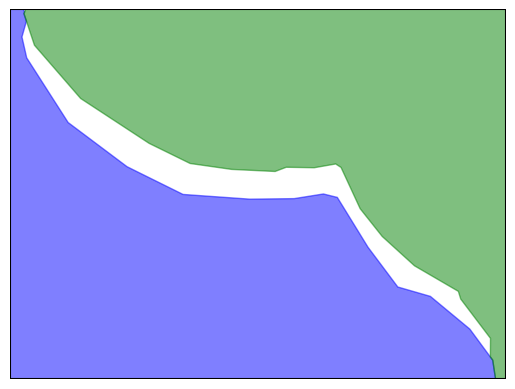}  }
\ \ \ 
\subfloat[]{\includegraphics[height=1.1in]{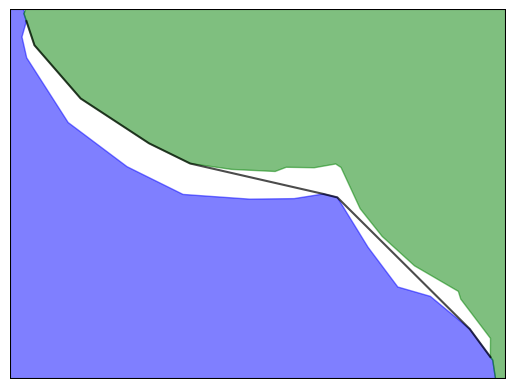}  }
\ \ \ 
\subfloat[]{\includegraphics[height=1.1in]{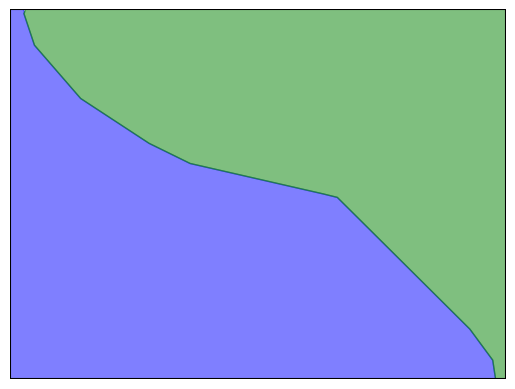} }
\ \ \ 
\subfloat[]{\includegraphics[height=1.1in]{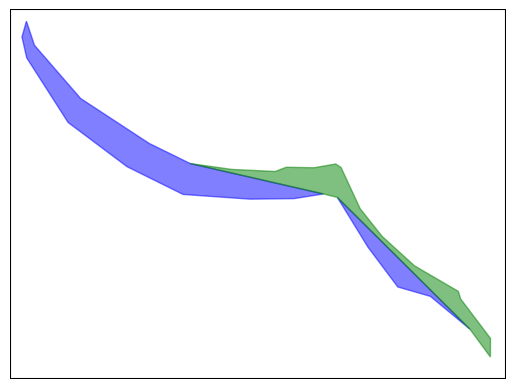}  }
\end{center}
\caption{(a) A gap with 2 sub-boundaries; (b) Shortest path between sub-boundary endpoints within gap; (c) Repaired unit polygons; (d) Filled gap}
\label{two-sub-boundary-gap-example-fig}
\end{figure}

\subsubsection{Gaps with 3 sub-boundaries}
Most gaps with 3 sub-boundaries have 3 adjacent units that are already pairwise adjacent to each other; thus they are similar to gaps with 2 sub-boundaries in the sense that any choice of how to divide the gap between the adjacent units will have an equivalent effect on the adjacency relations.

\begin{remark}
The exception to this scenario occurs when the boundary of some other unit intersects the gap boundary at one or more isolated points. The {\tt smart\_repair} algorithm is not designed to detect and accommodate such intersections, and in such cases the gap-filling procedure may create a ``false" adjacency between a pair of the gap's adjacent units.  Generally such ``false" adjacencies will have fairly short perimeter, and they can be removed with the optional rook-to-queen adjacency conversion step at the end of the repair algorithm.
\end{remark}

First, suppose that the gap is a simple triangle and that each of the gap's sub-boundaries is a line segment.  Such triangle-shaped gaps are often very long and thin, in which case assigning the entire gap to the unit with which it shares the largest perimeter generally produces a perfectly satisfactory result.  But for triangles that are closer to equilateral, assigning the entire triangle to any one unit creates a ``spike" on that unit polygon that may be undesirable from a convexity standpoint.  

One strategy for optimizing the convexity of all three unit polygons adjacent to the gap might be to 
divide the triangle along line segments between its vertices and its centroid, and assign each of the resulting regions to the unit adjacent to it, as in Figure \ref{sample-triangle-fig}(a)-(c).  This works well for triangles that are close to equilateral, but for long, thin triangles with one short side as in Figure \ref{sample-triangle-fig}(d), it still creates a significant spike on the unit polygon adjacent to the short side.  To mitigate this effect, we modify this strategy by using the {\em incenter} of the triangle---i.e., the common intersection point of the triangle's interior angle bisectors---instead of the centroid.  For triangles that are close to equilateral, the effect of this modification is minimal; for long, thin triangles, it has a larger effect and produces less dramatic spikes; see Figure \ref{sample-triangle-fig}(e)-(f).
\begin{figure}[h!]
\begin{center}
\subfloat[]{\includegraphics[height=1.3in]{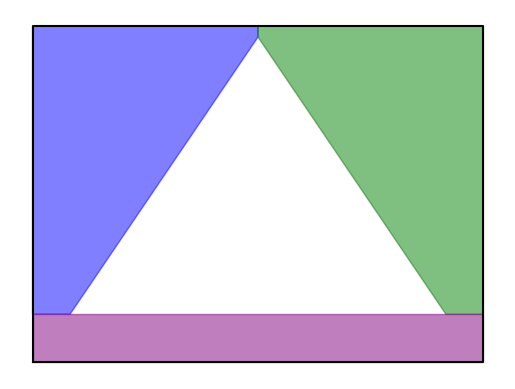} }
\ \ \ 
\subfloat[]{\includegraphics[height=1.3in]{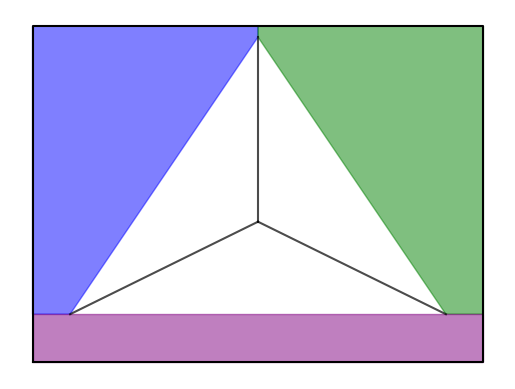} }
\ \ \ 
\subfloat[]{\includegraphics[height=1.3in]{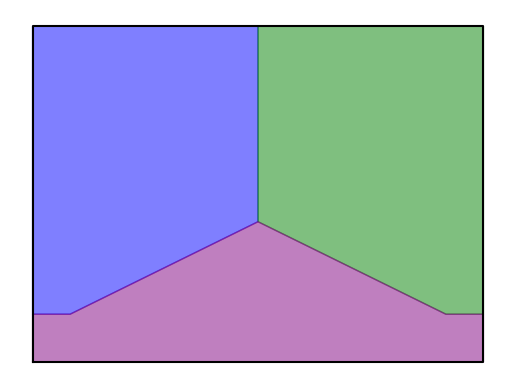}  }
\\ 
\subfloat[]{\includegraphics[height=1.3in]{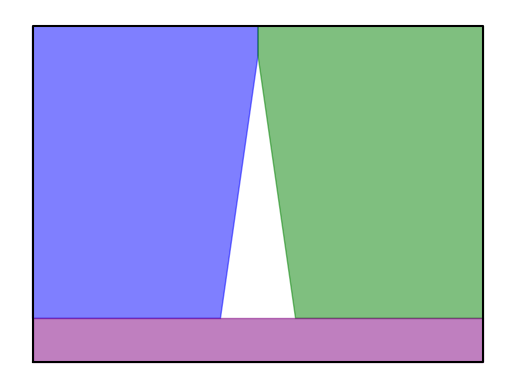}  }
\ \ \ 
\subfloat[]{\includegraphics[height=1.3in]{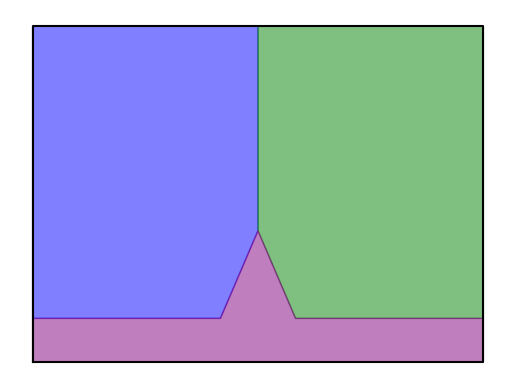}  }
\ \ \ 
\subfloat[]{\includegraphics[height=1.3in]{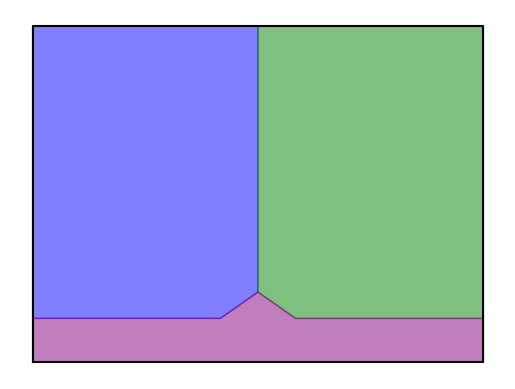} }
\end{center}
\caption{(a) ``Fat" triangle gap; (b)-(c) ``Fat" triangle gap filled by partitioning from centroid; (d) ``Thin" triangle gap; (e) ``Thin" triangle gap filled by partitioning from centroid; (f) ``Thin" triangle gap filled by partitioning from incenter }
\label{sample-triangle-fig}
\end{figure}

Now consider a more general gap with 3 sub-boundaries.  Polygons representing such gaps are often non-convex and can have surprisingly complicated geometry.  Some of this complexity can be reduced by applying the same convexity-optimizing strategy that we used for gaps with 2 sub-boundaries: Construct the shortest path within the gap between the endpoints of each sub-boundary, and assign the region bounded by each sub-boundary together with the shortest path between its endpoints to the unit adjacent to the sub-boundary.

A straightforward application of the triangle inequality shows that the shortest paths between endpoints of different sub-boundaries cannot cross each other---although they may intersect at a vertex or along a polygonal path---and therefore the regions created in this way are guaranteed to intersect only along their boundaries (if at all), and only along entire boundary line segments and/or at isolated vertices.  Therefore, this process does not create any overlaps or introduce any new vertices or points of intersection between unit polygon boundaries.  We will refer to this process as {\em convexification} of the gap sub-boundaries; see Figure \ref{three-sub-boundary-gap-convexification-example-fig} for an example.

\begin{figure}[h!]
\begin{center}
\subfloat[]{\includegraphics[height=1.3in]{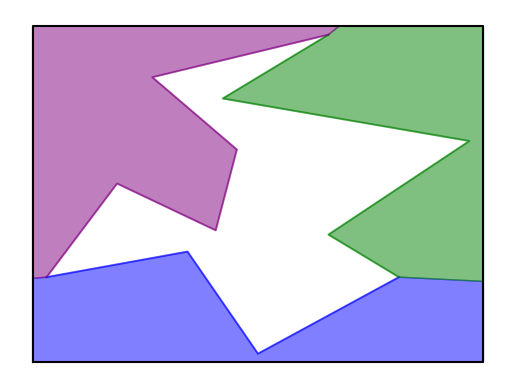}  }
\ \ \
\subfloat[]{\includegraphics[height=1.3in]{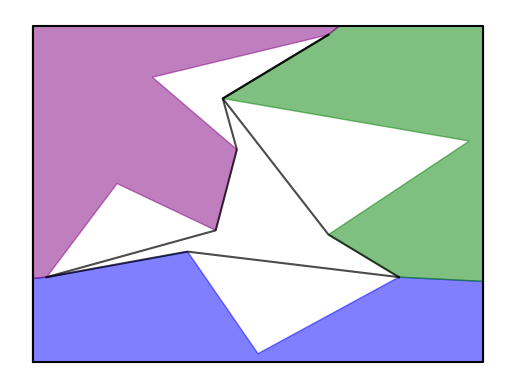}  }
\ \ \ 
\subfloat[]{\includegraphics[height=1.3in]{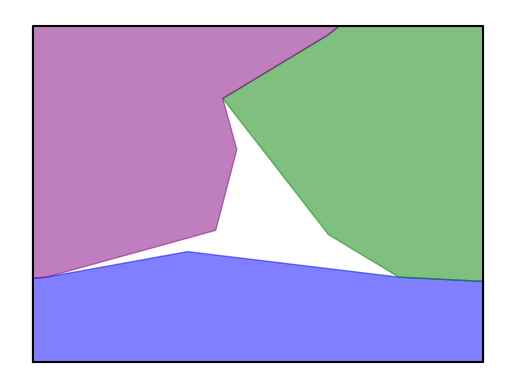} }
\end{center}
\caption{(a) A gap with 3 sub-boundaries; (b) Shortest paths within gap between sub-boundary endpoints; (c) Simplified unit polygons and simplified gap}
\label{three-sub-boundary-gap-convexification-example-fig}
\end{figure}

Unlike in the case of gaps with 2 sub-boundaries, when there are 3 sub-boundaries convexification generally does not completely fill the gap.  It does, however, leave a smaller gap with 3 sub-boundaries and a simpler geometric structure: Proposition \ref{convexification-prop} implies that after a gap has been simplified by convexifying its sub-boundaries, the new sub-boundaries of the simplified gap are all outward convex.
This implies that the convex hull of the gap is a triangle whose vertices are the endpoints of the gap sub-boundaries.  Let $q$ be the incenter of this triangle; then there are two possibilities:

\begin{enumerate}
\item If $q$ is contained in the interior of the gap, construct the shortest paths within the gap from $q$ to each of the endpoints of the sub-boundaries.  These paths partition the gap into regions that can each be assigned to their adjacent unit, similarly to the procedure for a simple triangle.  (See Figure \ref{three-sub-boundary-gap-fill-example-1-fig} for an example.)

\item If $q$ is not contained in the interior of the gap, then it is contained in a polygon bounded by exactly one of the gap sub-boundaries and the boundary of the gap's convex hull.  Without loss of generality, call this sub-boundary $B_1$ and the other two sub-boundaries $B_2$ and $B_3$.  Let $v_0$ be the gap vertex where $B_2$ and $B_3$ intersect, and let $v_1$ be the interior vertex of $B_1$ that is closest to $v_0$.  (Note that this scenario implies that $B_1$ contains at least one interior vertex in addition to its endpoints.)  Construct the shortest path within the gap between $v_0$ and $v_1$; this path partitions the gap into two regions, one adjacent to each of $B_2$ and $B_3$, which can then be assigned accordingly.  (See Figure \ref{three-sub-boundary-gap-fill-example-2-fig} for an example.)

\end{enumerate}

\begin{figure}[h!]
\begin{center}
\subfloat[]{\includegraphics[height=1.3in]{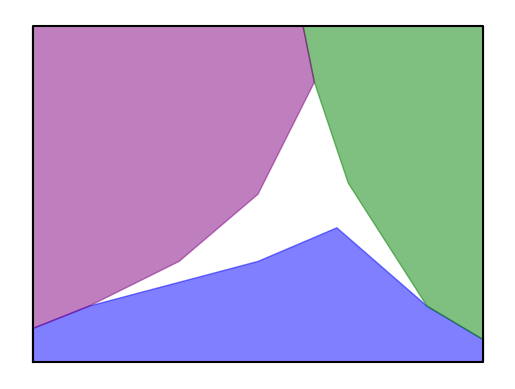}  }
\ \ \
\subfloat[]{\includegraphics[height=1.3in]{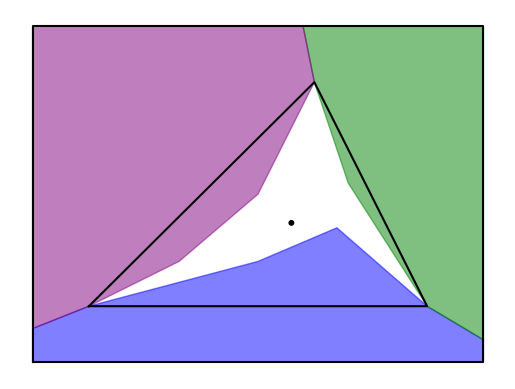}  }
\ \ \ 
\subfloat[]{\includegraphics[height=1.3in]{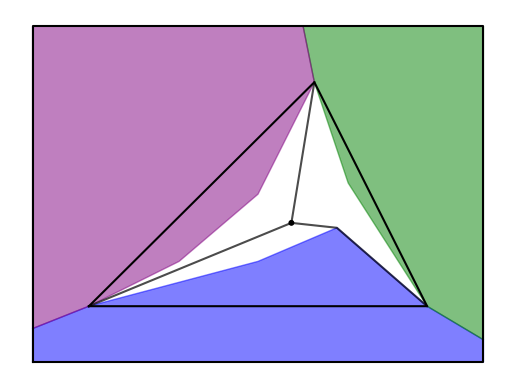} }
\ \ \ 
\subfloat[]{\includegraphics[height=1.3in]{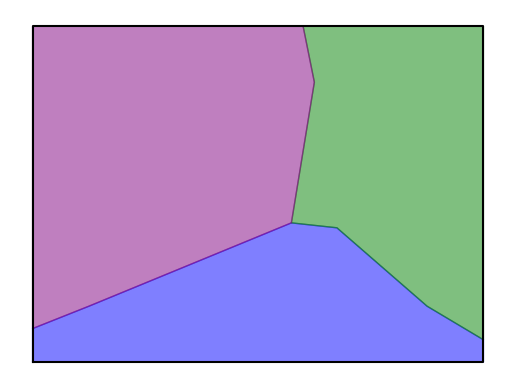} }
\ \ \ 
\subfloat[]{\includegraphics[height=1.3in]{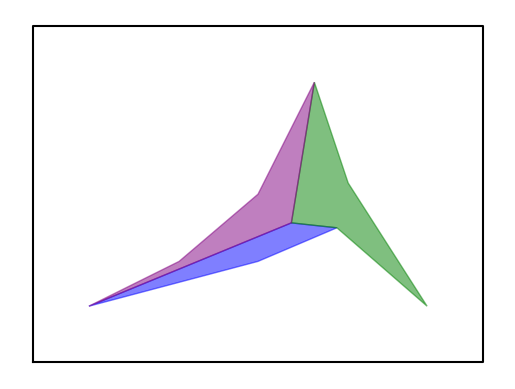} }
\end{center}
\caption{(a) A gap with 3 convexified sub-boundaries; (b) Convex hull of the gap, with incenter contained in the interior of the gap; (c) Shortest paths within gap between incenter and sub-boundary endpoints; (d) Repaired unit polygons; (e) Filled gap}
\label{three-sub-boundary-gap-fill-example-1-fig}
\end{figure}

\begin{figure}[h!]
\begin{center}
\subfloat[]{\includegraphics[height=1.3in]{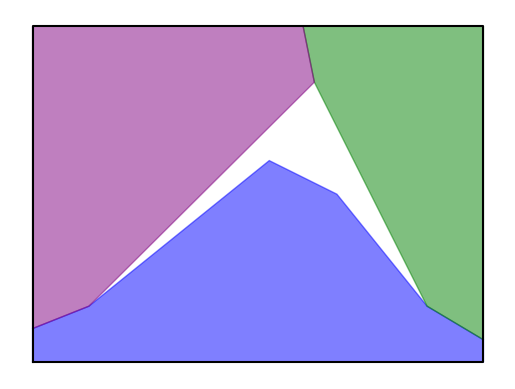}  }
\ \ \
\subfloat[]{\includegraphics[height=1.3in]{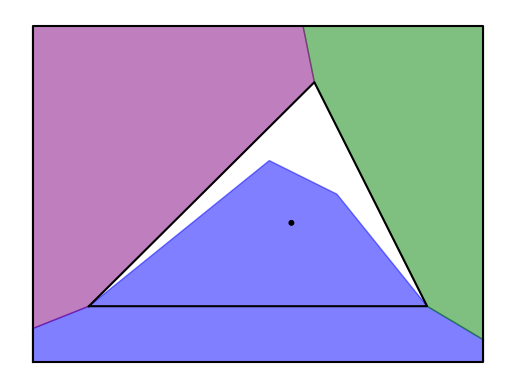}  }
\ \ \ 
\subfloat[]{\includegraphics[height=1.3in]{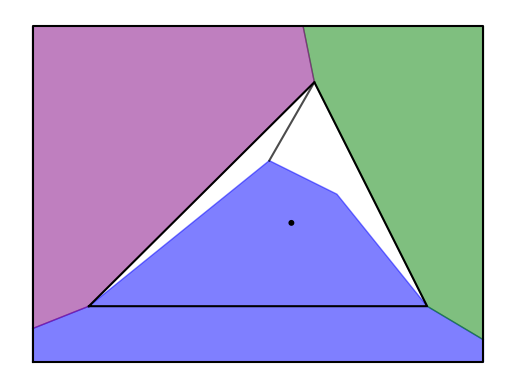} }
\ \ \ 
\subfloat[]{\includegraphics[height=1.3in]{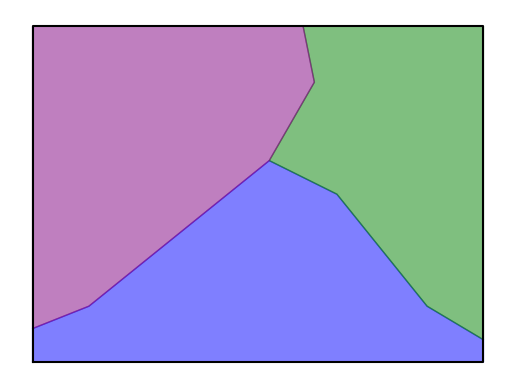} }
\ \ \ 
\subfloat[]{\includegraphics[height=1.3in]{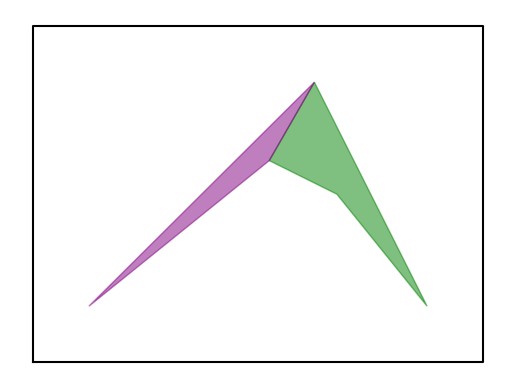} }
\end{center}
\caption{(a) A gap with 3 convexified sub-boundaries; (b) Convex hull of the gap, with incenter not contained in the interior of the gap; (c) Shortest path within gap between opposite vertex and nearest point on sub-boundary; (d) Repaired unit polygons; (e) Filled gap}
\label{three-sub-boundary-gap-fill-example-2-fig}
\end{figure}

\subsubsection{Gaps with 4 or more sub-boundaries}\label{four-gap-subsec}
Gaps with 4 or more sub-boundaries require more careful attention, because different choices for how to fill them will result in different adjacency relations between the units adjacent to the gap.  As for gaps with 3 sub-boundaries, it is usually the case that any two adjacent sub-boundaries have adjacent unit polygons that are already adjacent to each other, but choices must be made regarding which pairs of non-adjacent sub-boundaries are adjacent to unit polygons that should become adjacent once the gap is filled.

As for gaps with 3 sub-boundaries, the first step is to convexify the sub-boundaries by constructing the shortest path within the gap between the endpoints of each sub-boundary and assigning the region bounded by the sub-boundary together with this path to the unit adjacent to the sub-boundary.  In some cases this has the effect of dividing the remaining gap into two or more disjoint gaps. It is also possible that this will suffice to reduce the number of remaining sub-boundaries in one or more of the remaining gaps to 3 or fewer, in which case any such gaps can then be filled as in one of the previous cases.  

For the general case where the simplified gap still has at least 4 sub-boundaries, we adopt our next guiding principle:

\begin{principle}\label{nearest-pair-principle}
Consider all non-adjacent pairs of convexified sub-boundaries $(B', B'')$ of a simplified gap polygon $P$ that are strongly mutually visible in $P$.  Among all such pairs, the units adjacent to the pair with the shortest distance between them should become adjacent after the gap is repaired.
\end{principle}

With this principle in mind, we proceed with the simplified gap as follows:
\begin{enumerate}
\item Compute the distance between each non-adjacent pair of convexified sub-boundaries, and choose the non-adjacent pair for which this distance is shortest; call these sub-boundaries $B'$ and $B''$.  
\item With all sub-boundaries oriented counter-clockwise around the boundary of the simplified gap polygon, construct the shortest path $\alpha_1$ within the gap polygon from the terminal point of $B'$ to the initial point of $B''$, and the shortest path $\alpha_2$ within the gap polygon from the terminal point of $B''$ to the initial point of $B'$. 
\begin{itemize}
\item If $\alpha_1$ and $\alpha_2$ intersect, then Theorem \ref{smv-criterion-thm} implies that  $B'$ and $B''$ are not strongly mutually visible, so go back to the previous step and select the non-adjacent pair of sub-boundaries with the next-shortest distance between them.
\item If $\alpha_1$ and $\alpha_2$ are disjoint, then Theorem \ref{smv-criterion-thm} implies that  $B'$ and $B''$ are strongly mutually visible, so go on to the next step in order to create an adjacency between the unit polygons adjacent to $B'$ and $B''$.
\end{itemize}

\item
Construct the shortest path $\beta_1$ within the gap between the initial points of $B'$ and $B''$ and the shortest path $\beta_2$ within the gap between the terminal points of $B'$ and $B''$.  By Theorem \ref{positive-area-thm}, the union of $B'$, $B''$, $\beta_1$, and $\beta_2$ bounds either one or two simple polygons, and each of these polygons intersects exactly one of $B'$, $B''$ in a path of positive length.
\item Assign each polygon created in the previous step to the unit polygon adjacent to either $B'$ or $B''$ with which it shares a boundary of positive length.  According to Theorem \ref{positive-area-thm}, the resulting unit polygons will now intersect at a single common vertex.
Furthermore, this will leave either one or two smaller gaps remaining, each of which contains strictly fewer sub-boundaries than the original gap.  
\item Repeat the entire process for each of these smaller gaps.  Since the number of gap sub-boundaries in each gap decreases with each iteration, this process will eventually fill all gaps completely.
\end{enumerate}

Note that in Step (4) above, we only created a single point adjacency between the unit polygons adjacent to $B'$ and $B''$, so this step does not yet create a common boundary of positive length between these units.  But in the smaller gaps created by this process, the new sub-boundaries corresponding to these units are adjacent, and at some subsequent point in the process these gaps will be filled by the algorithms described above for gaps with either 2 or 3 sub-boundaries, resulting in a positive-length boundary between these two units.

An example of the entire process applied to a gap with 4 sub-boundaries is shown in Figure \ref{four-sub-boundary-gap-example-fig}.

\begin{figure}[h!]
\begin{center}
\subfloat[]{\includegraphics[height=1.1in]{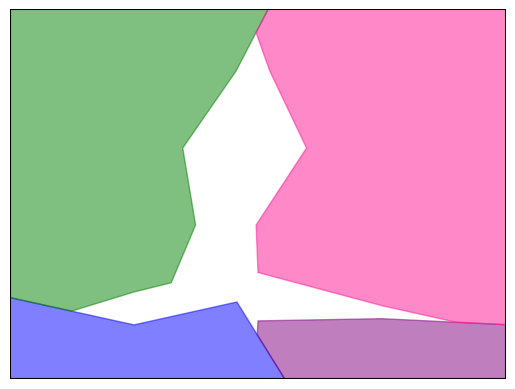}  }
\ \ \
\subfloat[]{\includegraphics[height=1.1in]{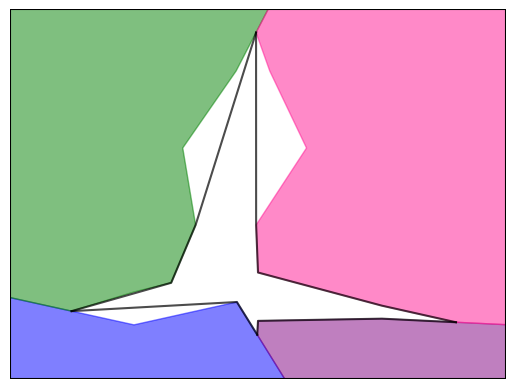}  }
\ \ \ 
\subfloat[]{\includegraphics[height=1.1in]{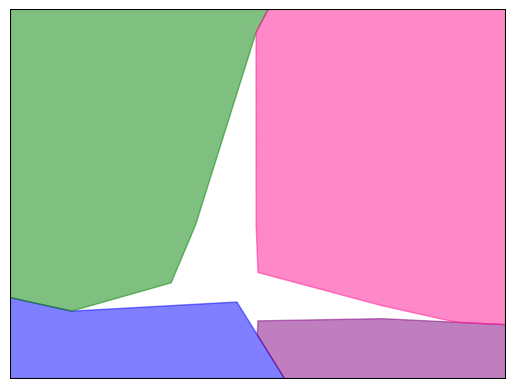} }
\ \ \ 
\subfloat[]{\includegraphics[height=1.1in]{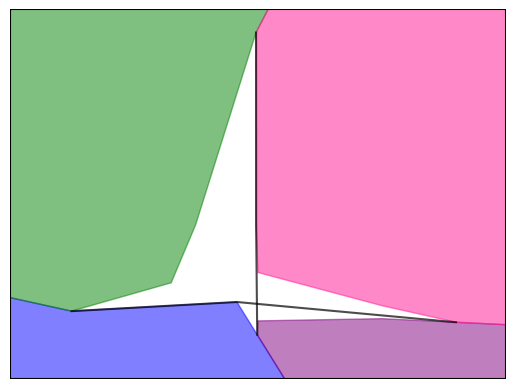}  }
\ \ \ 
\subfloat[]{\includegraphics[height=1.1in]{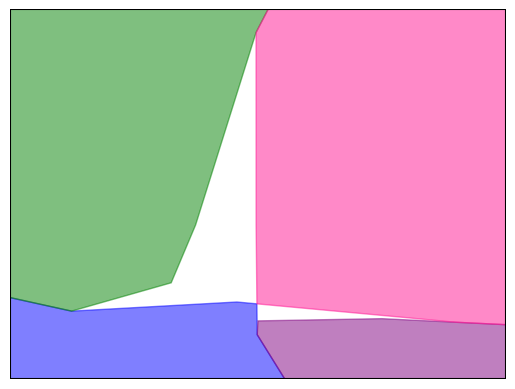} }
\ \ \ 
\subfloat[]{\includegraphics[height=1.1in]{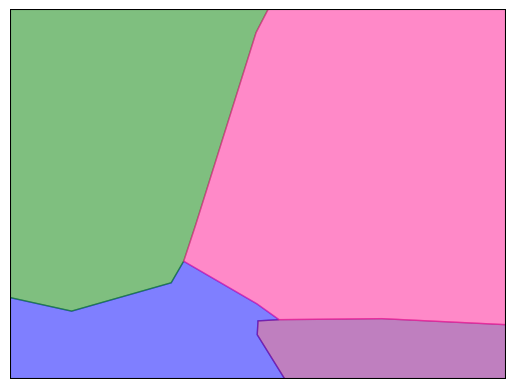} }
\ \ \ 
\subfloat[]{\includegraphics[height=1.1in]{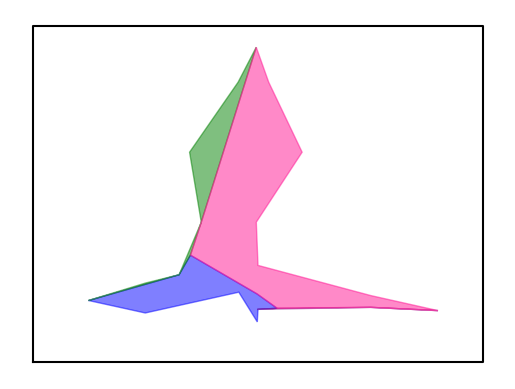} }
\end{center}
\caption{(a) A gap with 4 sub-boundaries; (b) Shortest paths between sub-boundary endpoints within gap; (c) Convexified unit polygons and simplified gap; 
(d) Shortest paths between initial and terminal points of closest non-adjacent strongly mutually visible sub-boundary pair; (e) Partially repaired unit polygons and two new gaps with 3 sub-boundaries each; (f) Completely repaired unit polygons; (g) Completely filled gap}
\label{four-sub-boundary-gap-example-fig}
\end{figure}

The following theorem confirms that this algorithm conforms with Guiding Principle \ref{nearest-pair-principle}.
It follows directly from Theorem \ref{smv-criterion-thm}, Theorem \ref{positive-area-thm}, and the adjacency structure that results from the algorithms for filling gaps with 2 and 3 sub-boundaries. 

\begin{theorem}\label{alg-adjacency-thm}
Consider a simplified gap polygon with at least 4 sub-boundaries, all of which are outward convex.  Among all strongly mutually visible pairs of non-adjacent sub-boundaries, let $(B', B'')$ be the pair with the shortest distance between them.  Then at the conclusion of the {\tt smart\_repair} algorithm, the unit polygons adjacent to $B'$ and $B''$ will share a boundary of positive length.
\end{theorem}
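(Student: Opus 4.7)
The plan is to combine the disjointness criterion of Theorem \ref{smv-criterion-thm} with the partition result of Theorem \ref{positive-area-thm}, and then to induct on the number of sub-boundaries in order to handle the recursive structure of the algorithm. The first step is to verify that $(B', B'')$ is indeed the pair selected on the first invocation of the 4+-sub-boundary procedure. The algorithm enumerates non-adjacent pairs of sub-boundaries in order of increasing distance and, for each pair, tests whether the shortest paths $\alpha_1, \alpha_2$ of Theorem \ref{smv-criterion-thm} are disjoint. By that theorem, disjointness is equivalent to strong mutual visibility, so any non-adjacent pair closer than $(B', B'')$ fails to be strongly mutually visible (by the very definition of $(B', B'')$) and is correctly rejected. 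Hence the search first accepts $(B', B'')$ and proceeds to construct $\beta_1$ and $\beta_2$.

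With $(B', B'')$ selected, Theorem \ref{positive-area-thm} (applied with $B_i = B'$, $B_j = B''$) partitions the region bounded by $B'$, $B''$, $\beta_1$, $\beta_2$ into one or two simple polygons, each sharing a boundary of positive length with exactly one of $B'$, $B''$. The algorithm assigns each such polygon to the unit adjacent to the corresponding sub-boundary, producing, via Theorem \ref{intersecting-diagonals-thm}, a common meeting point $p$ of the two units. The remainder of the gap splits into at most two smaller simply connected sub-gaps (on the $\alpha_1$- and $\alpha_2$-sides), each with strictly fewer sub-boundaries than the original. Crucially, in each smaller sub-gap, the new sub-boundaries belonging to the units of $B'$ and $B''$ both have $p$ as an endpoint and are therefore consecutive around the boundary of that sub-gap.

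The argument concludes by induction on the number of sub-boundaries of a gap. Because the new $B'$-unit and $B''$-unit sub-boundaries are adjacent in every sub-gap that contains them, the non-adjacent-pair selection in any subsequent 4+-sub-boundary invocation cannot split them apart; the recursion therefore eventually terminates at a sub-gap with 2 or 3 sub-boundaries in which these two sub-boundaries appear as consecutive entries. The main obstacle is then verifying the base cases: the 2- and 3-sub-boundary filling procedures (shortest-path division, respectively convexification followed by angle-bisector subdivision) must be shown to insert a polygonal path of positive length between any pair of consecutive sub-boundaries, thereby producing a common boundary of positive length between the corresponding units. This follows directly from the shortest-path constructions used in those cases, which produce polygonal paths of positive length emanating from shared sub-boundary endpoints. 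Concatenating this new boundary with the single-point adjacency at $p$ from the first invocation yields the desired common boundary of positive length between the units adjacent to $B'$ and $B''$.
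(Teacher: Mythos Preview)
Your proposal is correct and follows essentially the same approach as the paper. The paper's own proof is a one-line appeal to Theorem~\ref{smv-criterion-thm}, Theorem~\ref{positive-area-thm}, and the adjacency structure produced by the 2- and 3-sub-boundary procedures; the key observation (stated in the paragraph just before the theorem) is exactly your point that after the first partition the $B'$- and $B''$-unit sub-boundaries become adjacent in each residual sub-gap, so the subsequent 2- or 3-sub-boundary fill necessarily creates a positive-length shared boundary. Your write-up simply makes this inductive structure explicit.
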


\subsection{Step 4: Cleaning up}
Unfortunately, configurations of overlapping unit polygons such as the one shown in Figure \ref{bad-connectedness-fig} are fairly common in practice, and at this point in the process there often remain some disconnected units.  Fortunately, in practice the extra ``orphaned" components usually represent a very small fraction of the area of the original unit polygon, and in this case it seems reasonable to resolve the connectivity issue by reassigning these small components to an adjacent unit.
Thus the final step in the main procedure is as follows:
\begin{enumerate}
\item Identify any units that are still disconnected, and sort their components by area, from smallest to largest.
\item If the smallest component has area less than a user-specified parameter\footnote{Set at 0.0001 by default in the {\tt Maup} implementation} times the area of the largest component, remove it from its assigned unit and instead assign it to the adjacent unit with which it shares the largest perimeter.
\item Repeat Step (2) until either the unit is no longer disconnected or the smallest component is larger than the specified threshold.\footnote{If any unit polygons remain disconnected at the end of this process, the {\tt Maup} implementation of {\tt smart\_repair} reports a list of disconnected units to alert the user that this has occurred.}
\end{enumerate}

\section{Optional features}\label{bells-and-whistles-sec}

\subsection{Nesting into larger units}
In applications such as redistricting, it is often necessary to combine information from different geographic units; for instance, population data from the decennial U.S. census is reported at the level of census blocks, while (in the best-case scenario) election results are reported at the level of voting precincts.  Aggregating and disaggregating data between units at different scales is required in order to collect all the necessary information on a single set of units.\footnote{The {\tt Maup} package contains a variety of functions intended to help with this task.}

Here we encounter another common problem: When two maps whose extent covers a common region (e.g., a U.S. city or state) are created by different agents, the total geographic extent covered by the unit polygons is often slightly---or not so slightly!---different between the two maps.  This can result in a variety of issues that complicate the aggregation/disaggregation of data between units; one common example is that some populated census blocks may have no intersection with any voting precinct, in which case the process of aggregating population data from blocks to precincts loses the data contained in those blocks.

Fortunately, sometimes we have additional information to guide the repair of a noisy map.  For example, in most U.S. states voting precincts are completely contained within counties---and clean, accurate maps of county boundaries within states are available from the U.S. Census Bureau.\footnote{https://www.census.gov}  When the units being repaired are intended to nest cleanly into some larger regions (e.g., counties), the {\tt smart\_repair} algorithm allows the user the option of providing a map of the region boundaries; it will then perform the repair so that the repaired units nest cleanly into the region boundaries and the total geographic extent of the repaired units agrees with the total geographic extent of the region map.  This is accomplished by making the following modifications to the algorithm described in Section \ref{main-alg-sec}:
\begin{itemize}
\item {\bf Assign units to regions:} Each of the unit polygons to be repaired is assigned to the region that it intersects with the largest area.  (Note that this option is only intended for situations in which each of the primary units is almost entirely contained within a single region and it is clear which region each unit should belong to.)
\item {\bf Construct refined tiling:} The region boundaries are included along with the main unit polygon boundaries in the construction of the 1-complex and its polygonization; this guarantees that each of the pieces in the refined tiling is fully contained within a single region.  When associating to each piece the set of unit polygons that contain it, we only include units that are assigned to the region containing that piece.  Additionally:
\begin{itemize}
\item  Any piece that is not contained in any region is dropped and will not be included in the process of assigning overlaps; this ensures that the total geographic extent of the repaired units will be contained within the total geographic extent of the region map.
\item ``Gaps" may now include pieces that are contained in some region but not contained within the total geographic extent of the primary units. Such gaps will have one or more {\em exterior} sub-boundaries in addition to sub-boundaries that are adjacent to unit polygons.  Filling these gaps ensures that---subject to the constraints described in Section \ref{main-alg-sec} that may leave some gaps unfilled---the total geographic extent of the repaired units will include the total geographic extent of the region map.

\end{itemize}

\item {\bf Assign overlaps:} The process of assigning overlaps is performed one region at a time, using only pieces contained in each region to reconstruct the unit polygons assigned to that region.

\item {\bf Close gaps:} The gap-closing step is also performed one region at a time.  The main new consideration in this case is that some gaps may have one or more sub-boundaries consisting of intersections with a region boundary.  For these exterior sub-boundaries, there is no adjacent unit polygon to assign any portion of the gap to, so we must partition the gap differently when it has one or more exterior sub-boundaries. This also means that we cannot apply our convexification procedure to simplify exterior sub-boundaries.  So after convexifying all the non-exterior sub-boundaries, we modify the gap-closing procedure as follows:
\begin{itemize}
\item If the gap has only one non-exterior sub-boundary, assign the entire gap to the unit adjacent to the non-exterior sub-boundary.
\item If the gap has 3 sub-boundaries, two of which are adjacent to distinct units and one of which is exterior, identify the exterior sub-boundary vertex that is closest to the vertex where the non-exterior sub-boundaries intersect.  Construct the shortest path within the gap between these two vertices, partition the gap along this path into two regions, and assign each region to its adjacent unit. (Note that we choose the closest {\em vertex} in the exterior sub-boundary rather than the closest {\em point} in order to avoid introducing any new points of intersection along boundaries that might be shared with unit polygons from other regions.)  An example of this process is shown in Figure \ref{three-sub-boundary-exterior-gap-example-fig}.
\item If the gap has 4 or more sub-boundaries (and at least two non-exterior sub-boundaries adjacent to distinct units), proceed as before unless the non-adjacent pair of sub-boundaries with the shortest distance between them includes an exterior sub-boundary.  In that case, identify the exterior sub-boundary vertex that is closest to the non-exterior sub-boundary, and construct the shortest paths within the gap from this vertex to each of the two endpoints of the non-exterior sub-boundary. (If both sub-boundaries in the non-adjacent pair are exterior, skip this pair and proceed to the pair with the next-shortest distance between them.)  Generically, these paths together with the non-exterior sub-boundary will bound a polygon that intersects the exterior boundary at a single vertex; we assign this polygon to the unit adjacent to the non-exterior sub-boundary.  An example of this process is shown in Figure \ref{four-sub-boundary-exterior-gap-example-fig}.
\end{itemize}

\end{itemize}

We will refer to this version of the {\tt smart\_repair} algorithm as the {\em region-aware} version.

\begin{figure}[h!]
\begin{center}
\subfloat[]{\includegraphics[height=1.1in]{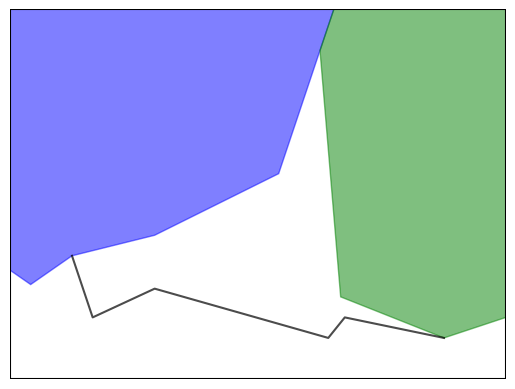}  }
\ \ \ 
\subfloat[]{\includegraphics[height=1.1in]{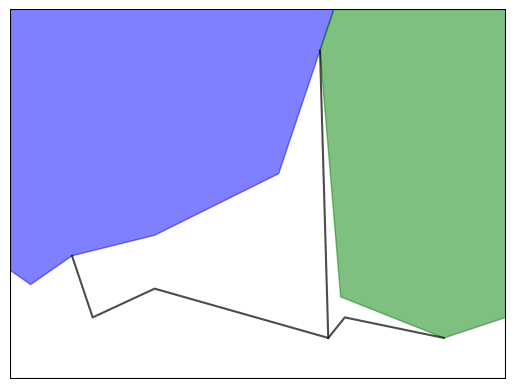}  }
\ \ \ 
\subfloat[]{\includegraphics[height=1.1in]{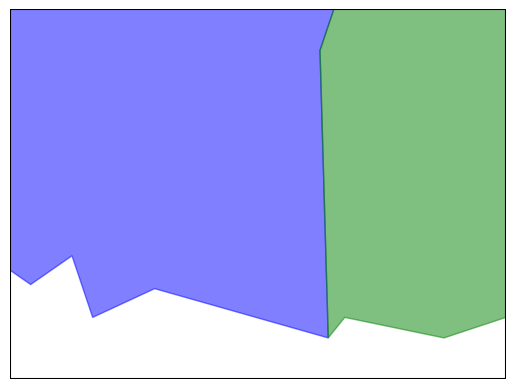} }
\ \ \ 
\subfloat[]{\includegraphics[height=1.1in]{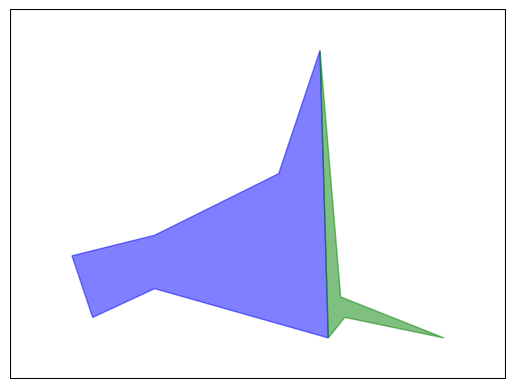} }
\end{center}
\caption{(a) A gap with 2 interior sub-boundaries and 1 exterior sub-boundary; (b) Shortest path between interior vertex and nearest exterior vertex; (c) Repaired unit polygons; (d) Filled gap}
\label{three-sub-boundary-exterior-gap-example-fig}
\end{figure}

\begin{figure}[h!]
\begin{center}
\subfloat[]{\includegraphics[height=1.1in]{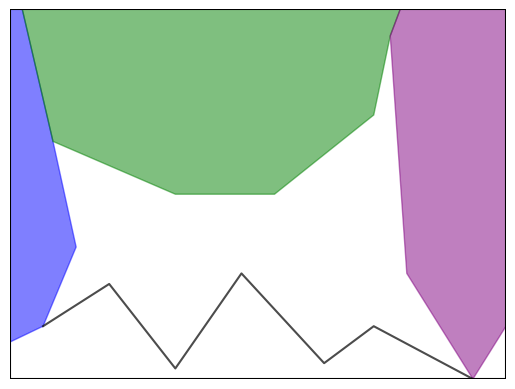}  }
\ \ \ 
\subfloat[]{\includegraphics[height=1.1in]{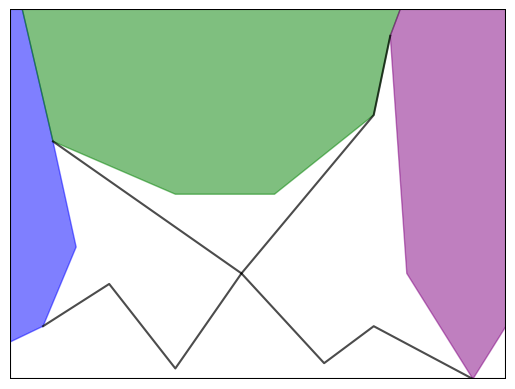}  }
\ \ \ 
\subfloat[]{\includegraphics[height=1.1in]{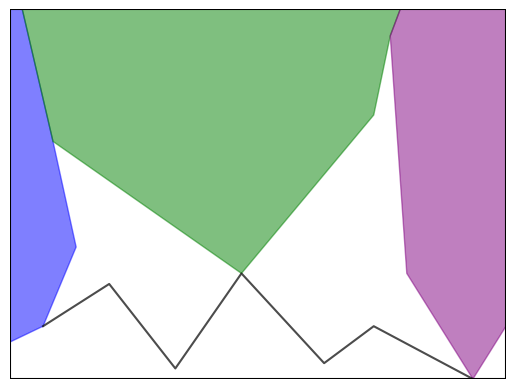} }
\ \ \ 
\subfloat[]{\includegraphics[height=1.1in]{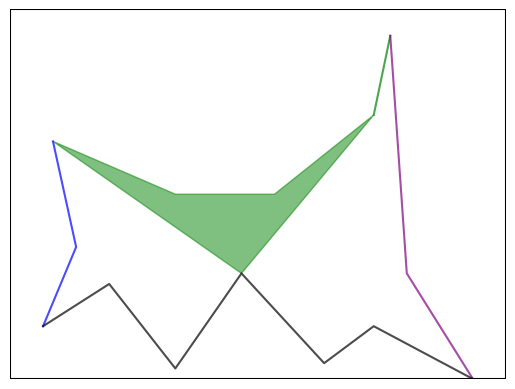} }
\end{center}
\caption{(a) A gap with 3 interior sub-boundaries and 1 exterior sub-boundary; (b) Shortest paths between endpoints of non-adjacent interior sub-boundary and nearest exterior vertex; (c) Partially repaired unit polygons; (d) Partially filled gap}
\label{four-sub-boundary-exterior-gap-example-fig}
\end{figure}

\begin{example}\label{toy-region-aware-example}
Suppose that the ``precincts" shown in Figure \ref{toy-region-aware-example-fig}(a) are intended to nest cleanly into the ``counties" shown in Figure \ref{toy-region-aware-example-fig}(b).  Figure \ref{toy-region-aware-example-fig}(c) shows the result of primary {\tt smart\_repair} algorithm, while Figure \ref{toy-region-aware-example-fig}(d) shows the result of the region-aware {\tt smart\_repair} algorithm.

\begin{figure}[h!]
\begin{center}
\subfloat[]{\includegraphics[height=2in]{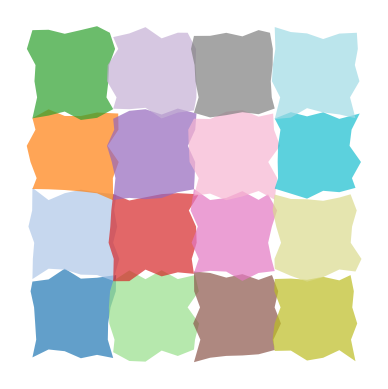} }
\ \ \ 
\subfloat[]{\includegraphics[height=2in]{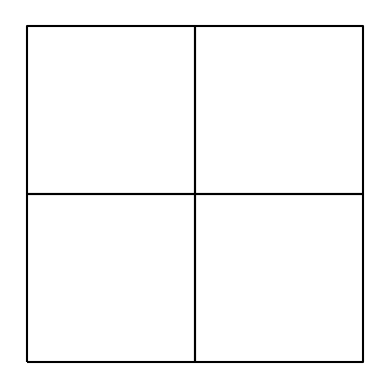} }
\\
\subfloat[]{\includegraphics[height=2in]{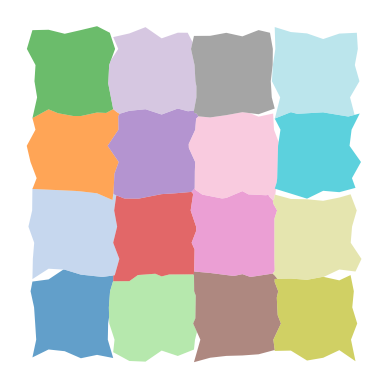} }
\ \ \ 
\subfloat[]{\includegraphics[height=2in]{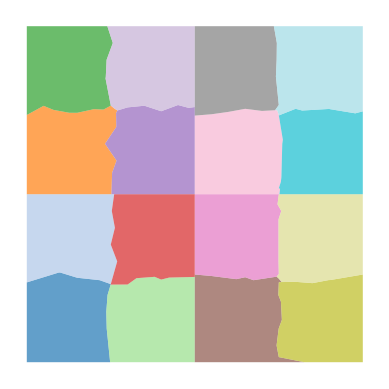} }
\end{center}
\caption{(a) ``Precincts" with gaps and overlaps; (b) ``Counties" that precincts should nest into; (c) Gaps and overlaps repaired without region-awareness; (d) Gaps and overlaps repaired with region-awareness}
\label{toy-region-aware-example-fig}
\end{figure}

\end{example}

\subsection{Small rook-to-queen adjacency conversion} 
Consider the example shown in in Figure \ref{four-sub-boundary-gap-example-fig}, where the {\tt smart\_repair} algorithm created a boundary of positive length between one of the two pairs of non-adjacent units.  Generically this is the right thing to do, but it is certainly plausible that these four units might have been intended to intersect in a common corner point, with neither of the non-adjacent pairs sharing a boundary of positive length.  In such a case, the length of the ``false" boundary created by the repair will typically be fairly small.  There also may be other such short, ``false" boundaries of positive length due to overlaps and other minor inaccuracies in the original unit polygons.  For applications such as redistricting where we want the adjacency relations between repaired units to be as accurate as possible, the {\tt smart\_repair} algorithm includes an optional step that converts all rook adjacencies with boundary length below a (typically very small) user-specified threshold to queen adjacencies.  This is accomplished as follows:
\begin{enumerate}
\item For each adjacency below the threshold length, construct a disk centered at the midpoint of the line segment between the adjacency's endpoints and with radius slightly larger than half the adjacency's length; this guarantees that the entire adjacency is contained within the interior of the disk.
\item Remove the disk by replacing each unit polygon that it intersects with the set difference of the unit polygon and the disk.\footnote{In order to avoid introducing small gaps and overlaps due to rounding errors during this process, the {\tt Maup} implementation of this step is similar to the first two steps of the primary algorithm: We construct the simplicial 1-complex consisting of the union of the boundary of the disk and the boundaries of all unit polygons intersecting the disk, polygonize this 1-complex, and reconstruct the unit polygons from the pieces that lie outside the disk.}
\item For each sub-boundary of the gap created by removing the disk, assign to the adjacent unit polygon a ``pie piece" bounded by the sub-boundary and line segments from its endpoints to the center of the disk. 
\end{enumerate}
An illustration of this process applied to the adjacency constructed in Figure \ref{four-sub-boundary-gap-example-fig} is shown in Figure \ref{small-rook-to-queen-example-fig}.
\begin{figure}[h!]
\begin{center}
\subfloat[]{\includegraphics[height=1.3in]{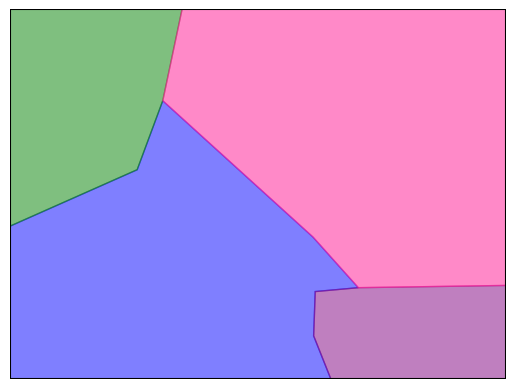} }
\ \ \ 
\subfloat[]{\includegraphics[height=1.3in]{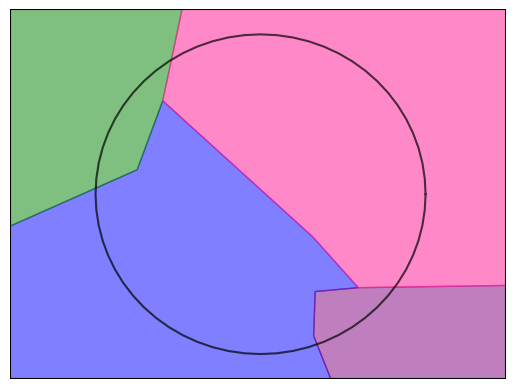} }
\ \ \ 
\subfloat[]{\includegraphics[height=1.3in]{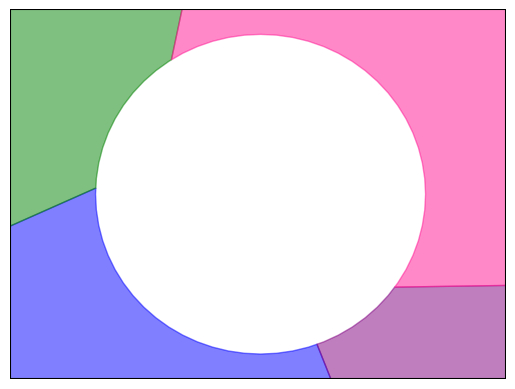} }
\ \ \ 
\subfloat[]{\includegraphics[height=1.3in]{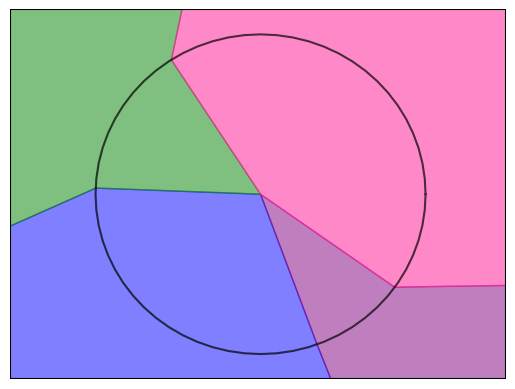} }
\ \ \ 
\subfloat[]{\includegraphics[height=1.3in]{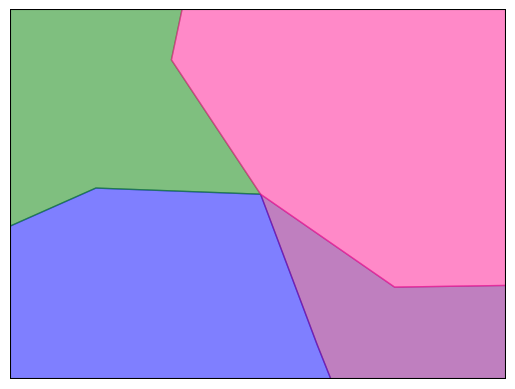} }
\end{center}
\caption{(a) A small rook adjacency; (b) Disk bounding adjacency; (c) Unit polygons with disk removed; (d) Disk replaced with ``pie pieces;" (e) Resulting queen adjacency}
\label{small-rook-to-queen-example-fig}
\end{figure}

In rare cases, it may happen that two or more of the disks constructed in Step (1) above intersect nontrivially.  In this case, we take the convex hull of the union of these disks and apply the remainder of the procedure to this polygon (with the polygon's centroid replacing the center of the disk in the previous construction), so that all of the small rook adjacencies within this polygon are converted to a single queen adjacency.  For example, when we applied this procedure to the Colorado 2020 voting precinct map with a length threshold of 100 feet, we found 188 adjacencies below the threshold, including two pairs for which the disks overlapped.  One of these ``double disks" and its intersection with 5 precincts, together with the result of the conversion procedure, is shown in Figure \ref{small-rook-to-queen-double-disk-example-fig}.
\begin{figure}[h!]
\begin{center}
\subfloat[]{\includegraphics[height=1.3in]{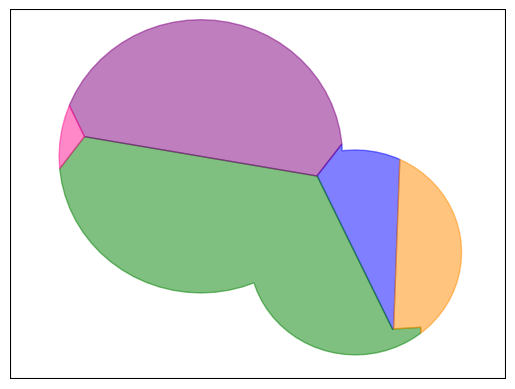} }
\ \ \ 
\subfloat[]{\includegraphics[height=1.3in]{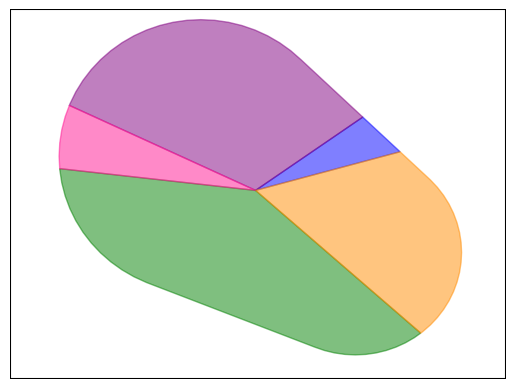} }
\end{center}
\caption{(a) A ``double disk" enclosing two small rook adjacencies; (b) Convex hull with both rook adjacencies replaced by a single queen adjacency}
\label{small-rook-to-queen-double-disk-example-fig}
\end{figure}

\section{Runtime complexity}\label{polynomial-time-sec}
  
In this section we perform a rough estimate for the runtime complexity for the {\tt smart\_repair} algorithm.  This estimate is not intended to be optimal, and it is not our primary concern, as the development of the algorithm was motivated primarily by a desire for accuracy rather than efficiency.  We have found that in practice, the {\tt Maup} implementation can repair most state-level precinct maps in a few hours on a standard laptop computer, which is more than adequate for our needs.

For purposes of this estimate, let $N$ denote the number of unit polygons in the approximate tiling $T$, and let $E$ denote the total number of edges in all the polygons in $T$.  Since each polygon must have at least 3 edges, we must have $N \leq \tfrac{1}{3} E$.

\noindent {\bf Construct refined tiling:}
\begin{itemize}
\item The fully noded union of the polygon boundaries is constructed by computing intersection points between all edges in all polygons.  This requires at most ${E \choose 2}$ operations (and in practice, generally much fewer than this\footnote{For instance, the Bentley-Ottmann algorithm of \cite{BO79} has run time complexity $O((E+k)\log E)$, where $k$ is the number of intersection points.}), and the number $E'$ of edges in the resulting fully noded union is bounded by $E' \leq 4{E \choose 2} = 2E(E-1)$.  
\item According to \cite{JB93}, the runtime complexity of the polygonization construction is at most $O(E' \log E')$, 
and the number $N'$ of distinct polygons produced is bounded by $N' \leq \tfrac{2}{3} E'$.
\item Computing the intersection of each piece $Q$ of the refined tiling  with each of the polygons in $T$ to compute the set $S_Q$ and its overlap order requires at most $NN'$ operations (again, generally much fewer than this in practice).
\end{itemize}

\noindent {\bf Assign overlaps:}
\begin{itemize}
\item Each piece of the refined tiling with overlap order 1 must be assigned to the unit polygon that it intersects; this requires at most $N'$ operations.
\item For each subsequent overlap order $d \geq 2$, the perimeter of the intersection of each piece $Q$ of the refined tiling of overlap order $d$ with each partially reconstructed unit polygon in $S_Q$ must be computed; then the maximum value of all these perimeters must be computed to decide which unit polygon to assign $Q$ to.  The worst-case scenario for the cardinality of $S_Q$ for any piece $Q$ is the number of unit polygons $N$, so this requires at most $N$ operations for each piece $Q$.  Since there are $N'$ pieces to consider, this requires at most $NN'$ operations in total.
\end{itemize}

\noindent {\bf Close gaps:}
\begin{itemize}
\item Gaps are represented by pieces of the refined tiling of overlap order 0.  For each gap, the intersection with each partially reconstructed unit polygon must be computed in order to identify the sub-boundaries of the gap.  This requires at most $NN'$ operations.
\item Convexification of each gap requires the computation of the shortest path between the endpoints of the gap sub-boundary.  According to \cite{LP84}, the runtime complexity for each shortest path construction in a gap with $n$ edges is at most $O(n \log n)$.  Since the total number of edges in all gaps is bounded by $E'$, it follows that the runtime complexity for each shortest path construction is at most $O(E' \log E')$.  The total number of sub-boundaries in all gaps is also bounded by $E'$, so the total runtime complexity for this step is at most $O((E')^2 \log E')$.
\item Remaining simplified gaps all have at least 3 sub-boundaries.  For each gap with $s \geq 4$ sub-boundaries, it generally requires $(s-3)$ applications of the procedure outlined in \S \ref{four-gap-subsec} to reduce the gap to $(s-2)$ smaller gaps with 3 sub-boundaries each.  Each application of this procedure  requires computing the distance between each of the non-adjacent pairs of sub-boundaries in the gap, and then construction of 4 shortest paths between endpoints of the pair with the shortest distance, each with runtime complexity bounded by $O(E' \log E')$.  Since the total number of sub-boundaries for all gaps is bounded by $E'$, the total run time complexity for this step is at most $O((E')^3)$ to compute all distances between pairs of sub-boundaries for all $(s-3)$ iterations, and $O((E')^2 \log E')$ for computing shortest paths, for an overall runtime complexity of $O((E')^3)$.
\item Remaining simplified gaps now all have 3 sub-boundaries each, and the time required to fill a gap with 3-sub-boundaries is independent of the number of edges in the gap.  The number of gaps remaining is bounded by $E'$, so the runtime complexity for this step is at most $O(E')$.
\end{itemize}

Putting it all together, we have the following rough estimate for the runtime complexity:

\begin{theorem}\label{run-time-theorem}
Let $T = \{P_1, \ldots, P_N\}$ be an approximate tiling of a simply connected region $R$ in the plane, as described in Problem \ref{big-problem}. Let $E$ denote the total number of edges in all the polygons in $T$; then the runtime complexity for the primary {\tt smart\_repair} algorithm is at most
$O(E^6)$.
\end{theorem}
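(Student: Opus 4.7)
The plan is to simply assemble the step-by-step estimates that have already been tabulated in the preceding discussion and then express everything in terms of $E$ alone. The key auxiliary bounds to install up front are $N \leq \tfrac{1}{3}E$ (so $N = O(E)$), $E' \leq 2E(E-1) = O(E^2)$, and $N' \leq \tfrac{2}{3}E' = O(E^2)$; every quantity that appears in the step-wise analysis can be controlled by these three.

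From there, I would run through the three main phases of the algorithm in order, substituting the bounds above and recording a single asymptotic estimate for each phase. For the refined-tiling step: the fully noded union costs $O(E^2)$, the polygonization costs $O(E' \log E') = O(E^2 \log E)$, and computing the sets $S_Q$ costs $O(NN') = O(E^3)$, giving $O(E^3)$ overall. For the overlap-assignment step: the initial assignment of overlap-order-$1$ pieces costs $O(N') = O(E^2)$, and the higher-order assignments cost $O(NN') = O(E^3)$, again totaling $O(E^3)$. For the gap-closing step: identifying sub-boundaries costs $O(NN') = O(E^3)$, convexification costs $O((E')^2 \log E') = O(E^4 \log E)$, the iterative shortest-pair construction for gaps with $\geq 4$ sub-boundaries costs $O((E')^3) = O(E^6)$, and the final 3-sub-boundary cleanup costs $O(E') = O(E^2)$.

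Taking the maximum of these terms then yields $O(E^6)$, coming from the iterative nearest-pair search in Step 3. The straightforward conclusion is that the overall runtime complexity of the primary \texttt{smart\_repair} algorithm is $O(E^6)$, which establishes the theorem.

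The only step that warrants any care is the bookkeeping for the gap-closing phase: in particular, justifying that across all gaps the total number of sub-boundaries and the total edge count remain bounded by $E'$, so that the per-gap costs (each of which involves some quantity like $n \log n$ or $n^2$) sum to the stated $O((E')^3)$ rather than blowing up when multiplied by the number of gaps. This amortization is already implicit in the setup, but it is the one place where being explicit about which quantities are local versus global matters; everything else is a direct substitution.
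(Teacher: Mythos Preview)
Your proposal is correct and follows essentially the same approach as the paper: the theorem is obtained by assembling the step-by-step runtime estimates already tabulated in the preceding discussion, substituting the bounds $N = O(E)$, $E' = O(E^2)$, and $N' = O(E^2)$, and observing that the dominant term $O((E')^3) = O(E^6)$ comes from the iterative nearest-pair construction in the gap-closing phase. The only thing you add beyond the paper is the explicit remark about amortization of per-gap costs, which the paper leaves implicit.
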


In practice, by far the most computationally intense portion of the algorithm is the convexification of the gap sub-boundaries. As for the optional features, the only extra complexity for the region-aware version comes from the addition of the region polygons to the polygons in the tiling $T$, while the rook-to-queen construction is fairly simple.  So Theorem \ref{run-time-theorem} also holds for the {\tt smart\_repair} algorithm with the optional features included.

As for practical performance, here are some sample statistics regarding gaps, overlaps, and the performance of the {\tt Maup} implementation of {\tt smart\_repair} for a few representative state-level precinct maps.  All computations were performed on a 2023 MacBook Pro with M3 Max chip running Python 3.12 and using {\tt Maup} 2.0.3.
\begin{itemize}
\item The 2020 Colorado voting precinct map\footnote{Compiled and provided by the staff of the Colorado Independent Legislative Redistricting Commission} has 3,215 precincts, and the original map contained 909 overlaps and 1,475 gaps.\footnote{Counts of gaps and overlaps were computed using the {\tt doctor} function in the {\tt Maup} package.}  The runtime for {\tt smart\_repair} function on this map was about 10 minutes.
\item The 2023 Wisconsin ward map\footnote{Downloaded from the Wisconsin Legislative Technology Services Bureau at \url{https://gis-ltsb.hub.arcgis.com/}} has 7,013 wards, and the original map contained 4,109 overlaps and 10,821 gaps. The runtime for {\tt smart\_repair} function on this map was slightly under 4 hours.
\item The 2020 New York voting precinct map\footnote{Compiled by the Voting and Elections Science Team (VEST) and downloaded from the Redistricting Data Hub at \url{https://redistrictingdatahub.org}} has 15,376 precincts, and
the original map contained 3,036 overlaps and 7,699 gaps.  The runtime for {\tt smart\_repair} function on this map was slightly under 13 hours.
\end{itemize}
The difference in runtime between the last two examples is somewhat striking, especially since the New York map had many fewer gaps than the Wisconsin map. The longer runtime for the New York map is mainly due to the gaps in the New York map having greater average complexity (as measured by the numbers of edges and sub-boundaries in the gap polygons) than those in the Wisconsin ward map.

\section{Conclusion}\label{back-to-beginning-sec}
We conclude by revisiting the example from Figure \ref{gap_1_maps-fig} that motivated the development of the {\tt smart\_repair} algorithm.
Figure \ref{bad-gap-stretched-fig}(a) shows a close-up view of the gap between the 15 Colorado precincts shown in Figure \ref{gap_1_maps-fig}, disproportionately stretched in the east-west direction so that the shape of the gap is visible.  Figure \ref{bad-gap-stretched-fig}(b)-(c) shows how the gap is filled to repair the precinct polygons using the primary {\tt smart\_repair} algorithm, while Figure \ref{bad-gap-stretched-fig}(d)-(e) shows how the gap is filled to repair the precinct polygons using the county-aware version of the algorithm.  
These images confirm empirically that the adjacency relations between the precincts surrounding this gap in the repaired map are exactly the most natural ones that could be inferred from the original precinct polygons.  Additionally, in the county-aware version we can see that the entire gap lies on the eastern side of the county boundary, as the entire gap is filled by extending precincts on the eastern edge of the gap.
\begin{figure}[h!]
\begin{center}
\subfloat[]{\includegraphics[height=2.2in]{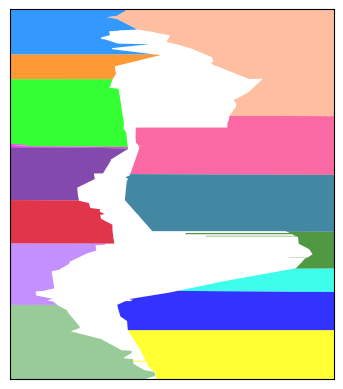} }
\ \ \ 
\subfloat[]{\includegraphics[height=2.2in]{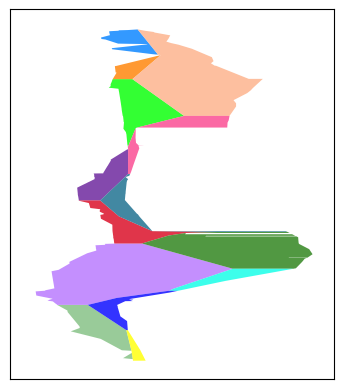} }
\ \ \ 
\subfloat[]{\includegraphics[height=2.2in]{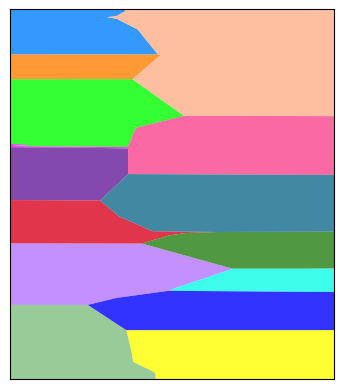} }
\ \ \ 
\subfloat[]{\includegraphics[height=2.2in]{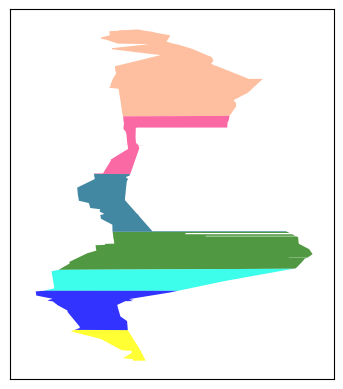} }
\ \ \ 
\subfloat[]{\includegraphics[height=2.2in]{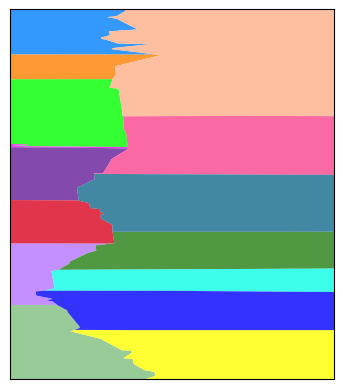} }
\end{center}
\caption{(a) Gap from Figure \ref{gap_1_maps-fig} stretched east-to-west; (b) Gap filled by primary {\tt smart\_repair} algorithm; (c) Precinct polygons repaired by primary {\tt smart\_repair} algorithm; (d) Gap filled by county-aware {\tt smart\_repair} algorithm; (e) Precinct polygons repaired by county-aware {\tt smart\_repair} algorithm}
\label{bad-gap-stretched-fig}
\end{figure}

\section{Data Availability Statement}

The source code for the {\tt smart\_repair} function is available as part of the {\tt Maup} package for Python at \url{https://github.com/mggg/maup}.  The Colorado 2020 precinct shapefile is available from the author on reasonable request.  The Wisconsin 2023 ward shapefile was obtained from  \url{https://gis-ltsb.hub.arcgis.com/} but has since been replaced by a newer version on that website; it is available from the author on reasonable request. The New York 2020 precinct shapefile is available either at 
\url{https://redistrictingdatahub.org} or 
 \url{https://dataverse.harvard.edu/dataverse/electionscience}.

\section{Funding}
This material is based in part upon work supported by the National Science Foundation under Grant No. DMS-1928930 and by the Alfred P. Sloan Foundation under grant G-2021-16778, while the author was in residence at the Simons Laufer Mathematical Sciences Institute (formerly MSRI) in Berkeley, California, during the Fall 2023 semester.

The author was supported in part by a Collaboration Grant for Mathematicians from the Simons Foundation.

\bibliographystyle{amsplain}
\bibliography{smart_repair-bib}

\end{document}